\newacronym{gmm}{{GMM}}{Gaussian mixure model}
\newacronym{sde}{{SDE}}{stochastic differential equation}
\newacronym{kde}{{KDE}}{kernel density estimate}
\newacronym{gmdm}{{GMDM}}{Gaussian mixture diffusion model}
\newacronym{ai}{{AI}}{Artificial Intelligence}
\newacronym{art}{{ART}}{Algebraic Reconstruction Technique}
\newacronym{sirt}{{SIRT}}{Simultaneous Iterative Reconstruction Technique}
\newacronym{sart}{{SART}}{Simultaneous Algebraic Reconstruction Technique}
\newacronym{bicav}{{BICAV}}{Block Iterative Component Averaging}
\newacronym{cgls}{{CGLS}}{Conjugate Gradient Least Squares}
\newacronym{ossqs}{{OS-SQS}}{Ordered Subset Separable Quadratic Surrogates}
\newacronym{ann}{{ANN}}{Artificial Neural Network}
\newacronym{bp}{{BP}}{Backprojection}
\newacronym{admm}{{ADMM}}{Alternating Direction Method of Multipliers}
\newacronym{cnn}{{CNN}}{convolutional neural network}
\newacronym{cg}{{CG}}{Conjugate Gradient}
\newacronym{cs}{{CS}}{Compressed Sensing}
\newacronym{ct}{{CT}}{computed tomography}
\newacronym{cpu}{{CPU}}{Central Processing Unit}
\newacronym{dicom}{{DICOM}}{Digital Imaging and Communications in Medicine}
\newacronym{dof}{{DOF}}{Degrees of Freedom}
\newacronym{ddr}{{DDR}}{Digitally Reconstructed Radiograph}
\newacronym{dsc}{{DSC}}{Dice Similarity Coefficient}
\newacronym{dti}{{DTI}}{Diffusion Tensor Imaging}
\newacronym{ecg}{{ECG}}{Electrocardiography}
\newacronym{em}{{EM}}{Expectation Maximization}
\newacronym{ft}{{FT}}{Fourier Transform}
\newacronym{fft}{{FFT}}{fast Fourier transform}
\newacronym{fista}{{FISTA}}{Fast Iterative Shrinkage and Thresholding Algorithm}
\newacronym{fbp}{{FBP}}{Filtered Back-Projection}
\newacronym{foe}{{FoE}}{fields-of-experts}
\newacronym{fov}{{FoV}}{Field of View}
\newacronym{gac}{{GAC}}{Geodesic Active Contours}
\newacronym{gan}{{GAN}}{generative adversarial network}
\newacronym{gd}{{GD}}{Gradient Descent}
\newacronym{gpu}{{GPU}}{Graphics Processing Unit}
\newacronym{hu}{{HU}}{Hounsfield Units}
\newacronym{ista}{{ISTA}}{Iterative Shrinkage and Thresholding Algorithm}
\newacronym{iipg}{{IIPG}}{Inertial Incremental Proximal Gradient}
\newacronym{ipalm}{{iPALM}}{inertial proximal alternating linearized minimization}
\newacronym{lsc}{{l.s.c.}}{lower-semicontinuous}
\newacronym{lista}{{LISTA}}{Learned Iterative Shrinkage and Thresholding Algorithm}
\newacronym{lbfgs}{{L-BFGS}}{Limited-Memory Broyden-Fletcher-Goldfarb-Shanno}
\newacronym{map}{{MAP}}{maximum a-posteriori}
\newacronym{mlp}{{MLP}}{Multi Layer Perceptron}
\newacronym{mr}{{MR}}{Magnetic Resonance}
\newacronym{ml}{{ML}}{Maximum Likelihood}
\newacronym{mri}{{MRI}}{magnetic resonance imaging}
\newacronym{mae}{{MAE}}{Mean Absolute Error}
\newacronym{mse}{{MSE}}{Mean Squared Error}
\newacronym{msssim}{{MS-SSIM}}{Multi-Scale Structural Similarity Index}
\newacronym{ncc}{{NCC}}{Normalized Cross Correlation}
\newacronym{nlm}{{NLM}}{Non-Local Means}
\newacronym{nufft}{{NUFFT}}{non-uniform fast Fourier transform}
\newacronym{nrmse}{{NRMSE}}{Normalized Root Mean Squared Error}
\newacronym{icp}{{ICP}}{Iterative Closest Point}
\newacronym{pat}{{PAT}}{Photoacoustic Tomography}
\newacronym{pca}{{PCA}}{Principal Component Analysis}
\newacronym{pet}{{PET}}{Positron Emission Tomography}
\newacronym{psf}{{PSF}}{Point Spread Function}
\newacronym{psnr}{{PSNR}}{peak signal-to-noise ratio}
\newacronym{rbf}{{RBF}}{Gaussian radial basis function}
\newacronym{relu}{{ReLU}}{Rectified Linear Unit}
\newacronym{roi}{{ROI}}{Region Of Interest}
\newacronym{snr}{{SNR}}{Signal-to-Noise Ratio}
\newacronym{sota}{SotA}{State-of-the-Art}
\newacronym{spect}{{SPECT}}{Single Photon Emission Computed Tomography}
\newacronym{ssim}{{SSIM}}{structural similarity}
\newacronym{tof}{{ToF}}{Time of Flight}
\newacronym{tgv}{{TGV}}{Total Generalized Variation}
\newacronym{tv}{{TV}}{total variation}
\newacronym{us}{{US}}{Ultrasound}
\newacronym{vn}{{VN}}{variational network}
\newacronym{sbp}{{SBP}}{Simple Back-Projection}
\newacronym{fdk}{{FDK}}{Feldkamp-Davis-Kress}
\newacronym{aec}{{AEC}}{Automatic Exposure Control}
\newacronym{bm3d}{{BM3D}}{Block Matching and 3D Filtering}
\newacronym{mcmc}{{MCMC}}{Markov chain Monte Carlo}
\newacronym{cd}{{CD}}{Contrastive Divergence}
\newacronym{poe}{{PoE}}{Products of Experts}
\newacronym{mala}{{MALA}}{Metropolis adjusted Langevin algorithm}
\newacronym{lmc}{{LMC}}{Langevin Monte Carlo}
\newacronym{ula}{{ULA}}{unadjusted Langevin algorithm}
\newacronym{tdv}{{TDV}}{Total Deep Variation}
\newacronym{ebm}{{EBM}}{energy-based model}
\newacronym{rss}{{RSS}}{root-sum-of-squares}
\newacronym{nmse}{{NMSE}}{normalized mean-squared error}
\newacronym{acl}{{ACL}}{auto-calibration lines}
\newacronym{mmse}{{MMSE}}{minimum mean-squared-error}
\newacronym{ebpa}{{EB-PA}}{empirical Bayes-patch averaging}
\newacronym{hqs}{{HQS}}{half-quadratic splitting}
\newacronym{epll}{{EPLL}}{expected patch log-likelihood}
\newacronym{pgc}{{PCG}}{proximal gradient continuation}
\newacronym{gsm}{{GSM}}{Gaussian scale mixture}
\newacronym{pogmdm}{{PoGMDM}}{product of Gaussian mixture diffusion model}
\newacronym{pde}{{PDE}}{partial differential equation}
\pgfplotsset{compat=1.17}
\pgfplotsset{colormap={inferno}{%
rgb = (1.46200e-03, 4.66000e-04, 1.38660e-02)
rgb = (2.94320e-02, 2.15030e-02, 1.14621e-01)
rgb = (9.29900e-02, 4.55830e-02, 2.34358e-01)
rgb = (1.83429e-01, 4.03290e-02, 3.54971e-01)
rgb = (2.71347e-01, 4.09220e-02, 4.11976e-01)
rgb = (3.60284e-01, 6.92470e-02, 4.31497e-01)
rgb = (4.41207e-01, 9.93380e-02, 4.31594e-01)
rgb = (5.28444e-01, 1.30341e-01, 4.18142e-01)
rgb = (6.09330e-01, 1.59474e-01, 3.93589e-01)
rgb = (6.94627e-01, 1.95021e-01, 3.54388e-01)
rgb = (7.69556e-01, 2.36077e-01, 3.07485e-01)
rgb = (8.41969e-01, 2.92933e-01, 2.48564e-01)
rgb = (8.98192e-01, 3.58911e-01, 1.88860e-01)
rgb = (9.44285e-01, 4.42772e-01, 1.20354e-01)
rgb = (9.72590e-01, 5.29798e-01, 5.33240e-02)
rgb = (9.86964e-01, 6.30485e-01, 3.09080e-02)
rgb = (9.84865e-01, 7.28427e-01, 1.20785e-01)
rgb = (9.66243e-01, 8.36191e-01, 2.61534e-01)
rgb = (9.46392e-01, 9.30761e-01, 4.42367e-01)
rgb = (9.88362e-01, 9.98364e-01, 6.44924e-01)
}}
\newcommand{\drawcolorbar}{%
	\pgfplotscolorbardrawstandalone[
		scale=0.32, colormap={example}{samples of colormap = (8 of inferno)},
		colorbar horizontal,point meta max=0.2,colorbar style={ticks=none},
	]%
}
\newcommand{\drawcolorbarbw}{%
	\pgfplotscolorbardrawstandalone[
		scale=0.32, colormap={example}{samples of colormap = (8 of colormap/blackwhite)},
		colorbar horizontal,point meta max=0.2,colorbar style={ticks=none},
	]%
}
\definecolor{mplblue}{HTML}{1f77b4}
\definecolor{mplorange}{HTML}{ff7f0e}
\definecolor{mplgreen}{HTML}{2ca02c}
\definecolor{mplred}{HTML}{d62728}
\definecolor{mplpurple}{HTML}{9467bd}
\definecolor{coolwarm1}{HTML}{3A4CC0}
\definecolor{coolwarm2}{HTML}{8DB0FE}
\definecolor{coolwarm3}{HTML}{DEDDDC}
\definecolor{coolwarm4}{HTML}{F4997A}
\definecolor{coolwarm5}{HTML}{B40326}
\newtheorem{theorem}{Theorem}
\newtheorem{corollary}{Corollary}
	\author{%
		First Author\inst{1}\orcidID{0000-1111-2222-3333} \and%
		Second Author\inst{1}\orcidID{1111-2222-3333-4444}
	}
	\authorrunning{F. Author et al.}
	\institute{%
		Princeton University, Princeton NJ 08544, USA
	}
	\author*[1]{\fnm{Martin} \sur{Zach}}\email{martin.zach@icg.tugraz.at}
	\author[2]{\fnm{Erich} \sur{Kobler}}\email{kobler@uni-bonn.de}
	\author[3]{\fnm{Antonin} \sur{Chambolle}}\email{antonin.chambolle@ceremade.dauphine.fr}
	\author[1]{\fnm{Thomas} \sur{Pock}}\email{pock@icg.tugraz.at}
	\affil*[1]{%
		\orgdiv{Institute of Computer Graphics and Vision}, %
		\orgname{Graz University of Technology}, %
		\orgaddress{\country{Austria}}%
	}
	\affil[2]{%
		\orgdiv{Klinik für Neuroradiologie}, %
		\orgname{Universitätsklinikum Bonn}, %
		\orgaddress{\country{Germany}}%
	}
	\affil[3]{%
		\orgdiv{CEREMADE, CNRS and Paris Dauphine University (PSL), and Mokaplan, INRIA Paris},
		\orgaddress{\country{France}}%
	}
\title{Product of Gaussian Mixture Diffusion Models}
\newcommand{\R}{\mathbb{R}}
\newcommand{\T}{\top}
\newcommand{\Id}{\mathrm{Id}}
\newcommand{\argm}{\,\cdot\,}
\newcommand{\grad}[1]{\nabla_{\mkern-4mu#1}}
\DeclareMathOperator{\tr}{\mathrm{Tr}}
\DeclareMathOperator{\proj}{proj}
\DeclareMathOperator*{\argmin}{arg\,min}
\DeclareMathOperator*{\argmax}{arg\,max}
\DeclareMathOperator{\diag}{diag}
\DeclarePairedDelimiter\norm{\lVert}{\rVert}
\DeclarePairedDelimiter\detm{\lvert}{\rvert}
\begin{document}
\abstract{%
	In this work we tackle the problem of estimating the density \( f_X \) of a random variable \( X \) by successive smoothing, such that the smoothed random variable \( Y \) fulfills the diffusion partial differential equation \( (\partial_t - \Delta_1)f_Y(\argm, t) = 0 \) with initial condition \( f_Y(\argm, 0) = f_X \).
	We propose a product-of-experts-type model utilizing Gaussian mixture experts and study configurations that admit an analytic expression for \( f_Y (\argm, t) \).
	In particular, with a focus on image processing, we derive conditions for models acting on filter-, wavelet-, and shearlet\added{-}responses.
	Our construction naturally allows the model to be trained simultaneously over the entire diffusion horizon using empirical Bayes.
	We show numerical results for image denoising where our models are competitive while being tractable, interpretable, and having only a small number of learnable parameters.
	As a byproduct, our models can be used for reliable noise \added{level} estimation, allowing blind denoising of images corrupted by heteroscedastic noise.
}
\keywords{Diffusion Models, Empirical Bayes, Gaussian Mixture, Blind Denoising.}
\maketitle
\section{Introduction}
The problem of estimating the probability density \( f_X : \mathcal{X} \to \R \) of a random variable \( X \) in \( \mathcal{X} \), given a set of data samples \( \{ x_i \}_{i=1}^N \) drawn from \( f_X \) has received significant attention in the recent years~\cite{hinton_training_2002,sohl-dickstein15,song_generative_2019,du_implicit_2019,ho_ddpm_2020}.
The applications range from purely generative purposes~\cite{rombach2021highresolution,ho_ddpm_2020}, over classical image restoration problems~\cite{zach_explicit_2023,ozdenizci2023,lugmayr_repaint_2022} to medical image reconstruction~\cite{zach_stable_2023,CHUNG2022102479,zach_computed_2021}.
This is a challenging problem in high dimension (e.g.\ for images of size \( M \times N \), i.e.\ \( \mathcal{X} = \R^{M \times N} \)), due to extremely sparsely populated regions~\cite{bengio_representation_2013}.
A fruitful approach is to estimate the density at different times when undergoing a diffusion process~\cite{song_generative_2019,ho_ddpm_2020}.
Intuitively, the diffusion equilibrates high- and low-density regions over time, thus easing the estimation problem.

Let \( Y_t \) (carelessly) denote the random variable whose distribution is defined by diffusing \( f_X \) for some time \( t \).
We denote the density of \( Y_t \) by \( f_Y(\argm, t) \), which fulfills the diffusion \gls{pde} \( (\partial_t - \Delta_1)f_Y(\argm, t) = 0 \) with initial condition \( f_Y(\argm, 0) = f_X \).
The empirical Bayes theory~\cite{robbins_empirical_1956} provides a machinery for reversing the diffusion \replaced{\gls{pde}}{process}:
Given an instantiation of the random variable \( Y_t \), the Bayesian least-squares estimate of \( X \) can be expressed solely using \( f_Y(\argm, t) \).
Importantly, this holds for all positive \( t \), as long as \( f_Y \) is properly constructed.

In practice we wish to have a parametrized, trainable model of \( f_Y \), say \( f_\theta \) where \( \theta \) is a parameter vector, such that \( f_Y(x, t) \approx f_\theta(x, t) \) for all \( x \in \mathcal{X} \) and all \( t \in [0, \infty) \).
Recent choices~\cite{song_generative_2019,song_scorebased_2021} for the family of functions \( f_\theta(\argm, t) \) were of practical nature:
Instead of an analytic expression for \( f_\theta \) at any time \( t \), authors proposed a time-conditioned network in the hope that it can learn to behave as if it had undergone \replaced{the}{a} diffusion \replaced{\gls{pde}}{process}.
Further, instead of worrying about the normalization \( \int_\mathcal{X} f_Y(\argm, t) = 1 \) for all \( t \in [0, \infty) \), usually they directly estimate the \emph{score} \( -\grad{1} \log f_Y(\argm, t) : \mathcal{X} \to \mathcal{X} \) with some network~\( s_\theta(\argm, t) : \mathcal{X} \to \mathcal{X} \).
This has the advantage that normalization constants vanish, but usually the constraint \( \partial_j (s_\theta(\argm, t))_i = \partial_i (s_\theta(\argm, t))_j \) is not enforced in the architecture of \( s_\theta \).
Thus, \( s_\theta(\argm, t) \) is in general not the gradient of a scalar function (the negative-log-density it claims to model).

In contrast to this line of works, in this paper we pursue a more principled approach.
Specifically, we leverage products of \deleted{one-dimensional} \gls{gmm} experts to model the distribution of responses of transformations acting on natural images.
\added{%
	Here, an \emph{expert} is a one-dimensional distribution modeling certain characteristics of the random variable \( Y_t \) (the terminology is borrowed from \cite{hinton_training_2002}).%
}
In particular, we derive conditions under which \( f_Y(\argm, t) \) can be expressed analytically from \( f_Y(\argm, 0 ) \).
We call our model \gls{pogmdm} to reflect the building blocks: products of \gls{gmm} experts and diffusion.
The conditions arising for \replaced{a model acting on filter-responses}{the filter bank of a patch-based model} naturally lead to the consideration of the wavelet transformation.
Analyzing the conditions for a convolutional model naturally leads to the shearlet transformation.
Thus, we present three models that utilize transformations that are extremely popular in image processing:
Linear filters, the wavelet transformation and the shearlet transformation.
\added{%
	To the best of our knowledge, this paper is the first in proposing strategies to learn patch-based and convolutional priors in a unified framework.%
}
\subsection{Contributions}
This paper constitutes an extension to our previous conference publication~\cite{zach_explicit_2023}, in which we introduced the idea of explicit diffusion models and showed preliminary numerical results.
In this work, we present two additional explicit diffusion models and derive conditions under which they fulfill the diffusion \gls{pde}.
In particular, the derived conditions naturally lead to models that can leverage transformations that are well known and popular in image processing: Wavelets and shearlets.
For all models, we show how the associated transformation, along with the diffusion model for the density, can be learned.
We provide numerical results for denoising and extend our analysis regarding \replaced{noise level estimation and blind heteroscedastic denoising}{blind noise estimation}.
Our contributions can be summarized as follows:
\begin{itemize}
	\item We derive conditions under which products of \gls{gmm} experts acting on filter-, wavelet-, and shearlet-responses \replaced{obey}{respect} the \deleted{linear isotropic} diffusion \gls{pde}.
	\item We show how the \gls{gmm} experts, along with the corresponding transformations, can be learned and provide algorithms to solve the arising sub-problems.
	\item We evaluate the learned models on the prototypical image restoration problem: denoising.
\end{itemize}
Code for training, validation, and visualization, along with pre-trained models is available at \href{https://github.com/VLOGroup/PoGMDM}{https://github.com/VLOGroup/PoGMDM}.

This paper is organized as follows:
In~\cref{sec:background}, we give background information on \deleted{isotropic} diffusion and how it can be used for parameter estimation of learned densities.
This section also encompasses an overview of related work.
In \cref{sec:methods}, we introduce the backbone of our models and derive conditions under which they obey the diffusion \gls{pde}.
We demonstrate the practical applicability of our models in \cref{sec:numerics} with numerical experiments.
We explore alternative parametrizations and possible extensions of our models in~\cref{sec:discussion} and finally conclude the paper, providing future research directions, in \cref{sec:conclusion}.
\subsection{Notation and Preliminaries}
For the sake of simplicity, throughout this article, we assume that all distributions admit a density with respect to the Lebesgue measure, although the numerical experiments only assume access to an empirical distribution.
Thus, we use the terms \emph{distribution} and \emph{density} interchangeably.
In~\cref{sec:methods}, we define normal distributions that are supported \emph{on a subspace} (e.g.\ the zero-mean subspace \added{$\{ x \in \R^n : \langle \mathds{1}_{\R^n}, x \rangle_{\R^n} = 0 \}$}).
In this case, we restrict our analysis to the support, which is theoretically supported by the disintegration theorem~\cite{Rao1973}.
\added{We use the symbols \( \R_{+} \) and \( \R_{++} \) to denote the non-negative real numbers \( \{ x \in \R : x \geq 0 \} \) and positive reals numbers \( \{ x \in \R : x > 0 \} \) respectively.}
We denote with \( \langle \argm, \argm \rangle_{\R^n} \added{ : \R^n \times \R^n \to \R : (x, y) \mapsto \sum_{i=1}^n x_i y_i} \) the standard inner product in the Euclidean space \( \R^n \), and \replaced{with \( \norm{\argm}^2 : \R^n \to \R_+ \) the map \( x \mapsto \langle x, x\rangle_{\R^n}\)}{\( \norm{\argm}^2 = \langle \argm, \argm \rangle_{\R^n} \)}.
\added{In addition, \( (\argm \otimes \argm) : \R^n \times \R^n \to \R^{n \times n} \) is the standard outer product in \( \R^n \): \( (x \otimes y)_{ij} = x_i y_j \).}
\added{$\operatorname{conj}$ denotes element-wise complex conjugation.}
Let \( \mathcal{Q} \subset \mathcal{H} \) be a (not necessarily convex) subset of a Hilbert space \( \mathcal{H} \).
We define by \( \proj_\mathcal{Q} : \mathcal{H} \to \replaced{\mathcal{H}}{2^\mathcal{H}} \) the orthogonal projection onto the set \( \mathcal{Q} \).
\added{%
	With slight abuse of notation, we ignore that this is a multivalued map in general.%
}
\( L^2(\Omega) \) denotes the standard Lebes\replaced{g}{q}ue space on a domain \( \Omega \subset \R^n \).
Lastly, \( \mathrm{Id}_{\mathcal{H}} \) and \( \mathds{1}_\mathcal{H} \) denote the identity map and the one-vector in \( \mathcal{H} \) respectively.
\section{Background}%
\label{sec:background}
In this section, we first emphasize the importance of \deleted{the} diffusion \deleted{process} in density estimation (and sampling) in high dimensions.
Then, we detail the relationship between diffusing the density function, empirical Bayes, and denoising score matching~\cite{vincent_connection_2011}.
\subsection{Diffusion Eases Density Estimation and Sampling}
Let \( f_X \) be a density on \( \mathcal{X} \subset \R^d\).
A major difficulty in estimating \( f_X \) with parametric models is that \( f_X \) is extremely sparsely populated in high dimensional spaces\footnote{Without any reference to samples \( x_i \sim f_X \), an equivalent statement may be that \( f_X \) is (close to) zero almost everywhere (in the layman --- not measure-theoretic --- sense).}, i.e., \( d\gg1 \).
This phenomenon has many names, e.g.\ the curse of dimensionality or the manifold hypothesis~\cite{bengio_representation_2013}.
Thus, the learning problem is difficult, since meaningful gradients are rare.
Conversely, let us for the moment assume \replaced{access to}{we have} a model \( \tilde{f}_X \) that approximates \( f_X \) well.
In general, it is still very challenging to generate a set of points \( \{ x_i \}_{i=1}^I \) such that we can confidently say that the associated empirical density \( \frac{1}{I} \sum_{i=1}^I \delta_{x_i} \) approximates \( \tilde{f}_X \) \added{(let alone $f_X$) well}.
This is because, in general, there does not exist a procedure to directly draw \added{samples} from \( \tilde{f}_X \), and \deleted{modern} Markov chain Monte Carlo methods \replaced{are prohibitively slow in practice, especially for multimodal distributions in high dimensions~\cite{song_generative_2019}}{rely on the estimated gradients of \( \tilde{f}_X \) and, in practice, only works well for unimodal distributions~\cite{song_scorebased_2021}}.

The \deleted{isotropic} diffusion \replaced{\gls{pde}}{process} or heat equation
\begin{equation}
	(\partial_t - \Delta_1) f(\argm, t) = 0\ \text{with initial condition}\ f(\argm, 0) = f_X
	\label{eq:diff}
\end{equation}
equilibrates the density \( f_X \), thus mitigating the challenges outlined above.
Here, \( \partial_t \) denotes the standard partial derivative with respect to time \( \frac{\partial}{\partial t} \) and \( \Delta_1 = \tr \circ \grad{1}^2 \) is the Laplace operator, where the \( 1 \) indicates its application to the first argument.
We detail the evolution of \( f_X \) under this \replaced{diffusion \gls{pde}}{process} and relations to empirical Bayes in~\cref{ssec:diffusion empirical bayes}.

\emph{Learning} \( f(\argm, t) \) for \( t \geq 0 \) is more stable since the diffusion \enquote{fills the space} with meaningful gradients~\cite{song_generative_2019}.
Of course, this assumes that for different times \( t_1 \) and \( t_2 \), the models of \( f(\argm, t_1) \) and \( f(\argm, t_2) \) are somehow related to each other.
As an example of this relation, the recently popularized noise-conditional score-network~\cite{song_generative_2019} shares convolution filters over time, but their input is transformed through a time-conditional instance normalization.
In this work, we make this relation explicit by considering a family of functions \( f(\argm, 0) \) for which \( f(\argm, t) \) can be expressed analytically.

For \emph{sampling}, \( f(\argm, t) \) for \( t > 0 \) can help by gradually moving samples towards high-density regions of \( f_X \), regardless of initialization.
To utilize this, a very simple idea with relations to simulated annealing~\cite{Kirkpatrick1983} is to have a pre-defined time schedule \( t_T > t_{T-1} > \ldots > t_{\replaced{0}{1}} > 0 \) and sample \( f(\argm, t_i) \), \( i = T, \dotsc, 0 \) (e.g.\ with Langevin Monte Carlo~\cite{roberts_exponential_1996}) successively~\cite{song_generative_2019}.
In~\cite{song_scorebased_2021}, instead of considering discrete time steps, the authors propose to model the sampling procedure as a continuous-time stochastic differential equation.
We note that the \deleted{isotropic} diffusion \gls{pde}~\eqref{eq:diff} on the densities corresponds to the stochastic differential equation
\begin{equation}
	\mathrm{d}X = \sqrt{2} \mathrm{d}W\deleted{,}
	\label{eq:sde view}
\end{equation}
on the random variables, where \( W \) is the standard Wiener process.
This is known as the variance exploding stochastic differential equation in the literature.
\subsection{Diffusion, Empirical Bayes, and Denoising Score Matching}%
\label{ssec:diffusion empirical bayes}
In this section, similar to the introduction, we again adopt the interpretation that the evolution in~\eqref{eq:diff} defines the density of a \deleted{smoothed} random variable \( Y_t \).
That is, \( Y_t \) is a random variable with probability density \( f_Y(\argm, t) \), which fulfills \( (\partial_t - \Delta_1) f_Y(\argm, t) = 0 \) and \( f_Y(\argm, 0) = f_X \).
It is well known that Green's function of~\eqref{eq:diff} is a Gaussian (see e.g.~\cite{cole_green_2010}) with zero mean and \added{co}variance \( \deleted{\sigma^2(t) =} 2t\Id\).
In other words, for \( t > 0 \) we can write \( f_Y(\argm, t) = G_{0,2t\Id_{\added{\mathcal{X}}}} * f_X \), where 
\begin{equation}
	G_{\mu,\Sigma}(x) = |2 \pi \Sigma|^{-1/2} \exp\bigl( - \norm{x - \mu}_{\Sigma^{-1}}^2 / 2 \bigr).
\end{equation}
Thus, the diffusion \replaced{\gls{pde}}{process} constructs a (linear) \emph{scale space in the space of probability densities} \added{and we refer to \( Y_t \) (respectively \( f_{Y_t} \)) as the \emph{smoothed} random variable (respectively density)}.
Equivalently, in terms of the random variables, we can write \( Y_t = X + \sqrt{2t}N \) where \( N \) is a random variable with normal distribution \( \mathcal{N}(0, \Id_{\mathcal{X}}) \).
Next, we show how to estimate the corresponding instantiation of \( X \) which has \enquote{most likely} spawned an instantiation of \( Y_t \) using empirical Bayes.

In the school of empirical Bayes~\cite{robbins_empirical_1956}, we try to estimate a clean random variable given a corrupted instantiation, using only knowledge about the \replaced{density of the random variable corresponding to the corrupted instance}{corrupted density}.
In particular, for our setup we have a corruption model
\begin{equation}
	y_t = x + \sqrt{2t} \eta,
	\label{eq:corruption}
\end{equation}
\added{with \( x \sim f_X \) and \( \eta \sim \mathcal{N}(0, \Id_{\mathcal{X}})\).}
It is well known that the Bayesian \gls{mmse} estimate is the conditional mean, i.e.\ the map \( y_t \mapsto \int x f_{X \mid Y_t}(x \mid y_t)\,\mathrm{d}x \).
In classical Bayes theory, such a map would be constructed by utilizing Bayes theorem, i.e. writing 
\( f_{X \mid Y_t} = \frac{f_{Y_t\mid X}f_{X}}{f_{Y_t}} \) and choosing an appropriate prior \( f_{X} \).
However, a classical result from empirical Bayes estimation reveals that a map \( y_t \mapsto \int x f_{X \mid Y_t}(x \mid y_t)\,\mathrm{d}x \) can be constructed \emph{only} assuming access to the \emph{\replaced{smoothed}{measurement} density} \( f_{Y_t} \) without any reference to the prior \( f_X \).
This result is known as the Miyasawa estimate~\cite{miyasawa_empirical_1961} or Tweedie's formula~\cite{efron_tweedie_2011,raphan_least_2011} and we derive it here for completeness.

First, by the corruption model~\eqref{eq:corruption} we can write
\begin{equation}
	f_{Y_t \mid X}(y \mid x) = \detm{2\pi\sigma^2\Id}^{-\frac{1}{2}} \exp\Bigl( -\frac{\norm{y - x}^2}{2\sigma^2} \Bigr),
\end{equation}
where we use the relation \( \sigma = \sqrt{2t} \), and thus by Bayes theorem it follows that
\begin{equation}
	\begin{aligned}
		f_{Y_t}(y) &= \int f_{Y_t\mid X}(y\mid x) f_{X}(x)\,\mathrm{d}x \\
				   &= \int \detm{2\pi\sigma^2\Id}^{-\frac{1}{2}} \exp\Bigl( -\frac{\norm{y - x}^2}{2\sigma^2} \Bigr) f_{X}(x)\,\mathrm{d}x.
	\end{aligned}
	\label{eq:py}
\end{equation}
Taking the gradient w.r.t.\ \( y \) and multiplying by \( \sigma^2 \) yields
\begin{equation}
	\begin{aligned}
		&\sigma^2 \nabla f_{Y_t}(y) \\
		&= \int (x - y)\detm{2\pi\sigma^2\Id}^{-\frac{1}{2}} \exp\Bigl( -\frac{\norm{y - x}^2}{2\sigma^2} \Bigr) f_{X}(x)\,\mathrm{d}x \\
		&= \int (x - y) f_{X, Y_t}(x, y)\,\mathrm{d}x\\
		&= \int x f_{X, Y_t}(x, y)\,\mathrm{d}x - yf_{Y_t}(y)
	\end{aligned}
\end{equation}
and after dividing by \( f_{Y_t} \) it follows that
\begin{equation}
	y + \sigma^2 \frac{\nabla f_{Y_t}(y)}{f_{Y_t}(y)} = \int xf_{X\mid Y_t}(x\mid y)\,\mathrm{d}x,
\end{equation}
where we used the definition of conditional densities, i.e.\ that \( f_{X \mid Y_t} = \frac{f_{X, Y_t}}{f_{Y_t}} \).
Finally, by noting that \( \frac{\nabla f_{Y_t}}{f_{Y_t}} = \nabla \log f_{Y_t} \), the above can be rewritten as
\begin{equation}
	y + \sigma^2 \nabla \log f_{Y_t}(y) = \int xf_{X\mid Y_t}(x\mid y)\,\mathrm{d}x.
	\label{eq:tweedie}
\end{equation}
We refer to the work of Raphan and Simoncelli~\cite{raphan_least_2011} for an empirical Bayes theory encompassing a more general family of corruptions.
They refer to this type of estimator more generally as non-parametric empirical Bayes least-squares (NEBLS).

We illustrate the idea of empirical \gls{mmse} estimation on a toy example in~\cref{fig:diffusion toy example}, where the data distribution consists of Dirac measures \( f_X = \sum_{i=1}^6 w_i \delta_{x_i} \) with
\begin{equation}
	\begin{pmatrix} x_1^\top \\ \vdotswithin{x_1^\top} \\ x_6^\top \end{pmatrix} = \begin{pmatrix}
		0.588 & 0.966 \\
		0.289 & 0.112 \\
		-0.313 &-0.924 \\
		-0.696 & 0.990 \\
		-0.906 & 0.030 \\
		-0.516 & 0.039
	\end{pmatrix}%
	\text{ and }
	w = \begin{pmatrix} 0.23 \\ 0.1 \\ 0.09 \\ 0.19 \\ 0.29 \\ 0.08 \end{pmatrix}.%
	\label{eq:example}
\end{equation}
The figure illustrates that that \( f_{Y_t} \) approaches a simple form as \( t \) \replaced{approaches infinity}{$\to \infty$}.
Indeed, it has been shown~\cite{kobler2023learning} that \( f_{Y_t} \) is log-concave for large enough \( t \), and \replaced{$-\log f_{Y_t}$ approaches a quadratic function}{in particular \( \nabla \log f_{Y_t} \) approaches a linear vector field}.
\begin{figure*}
	\centering
	\includegraphics[width=.33\textwidth]{./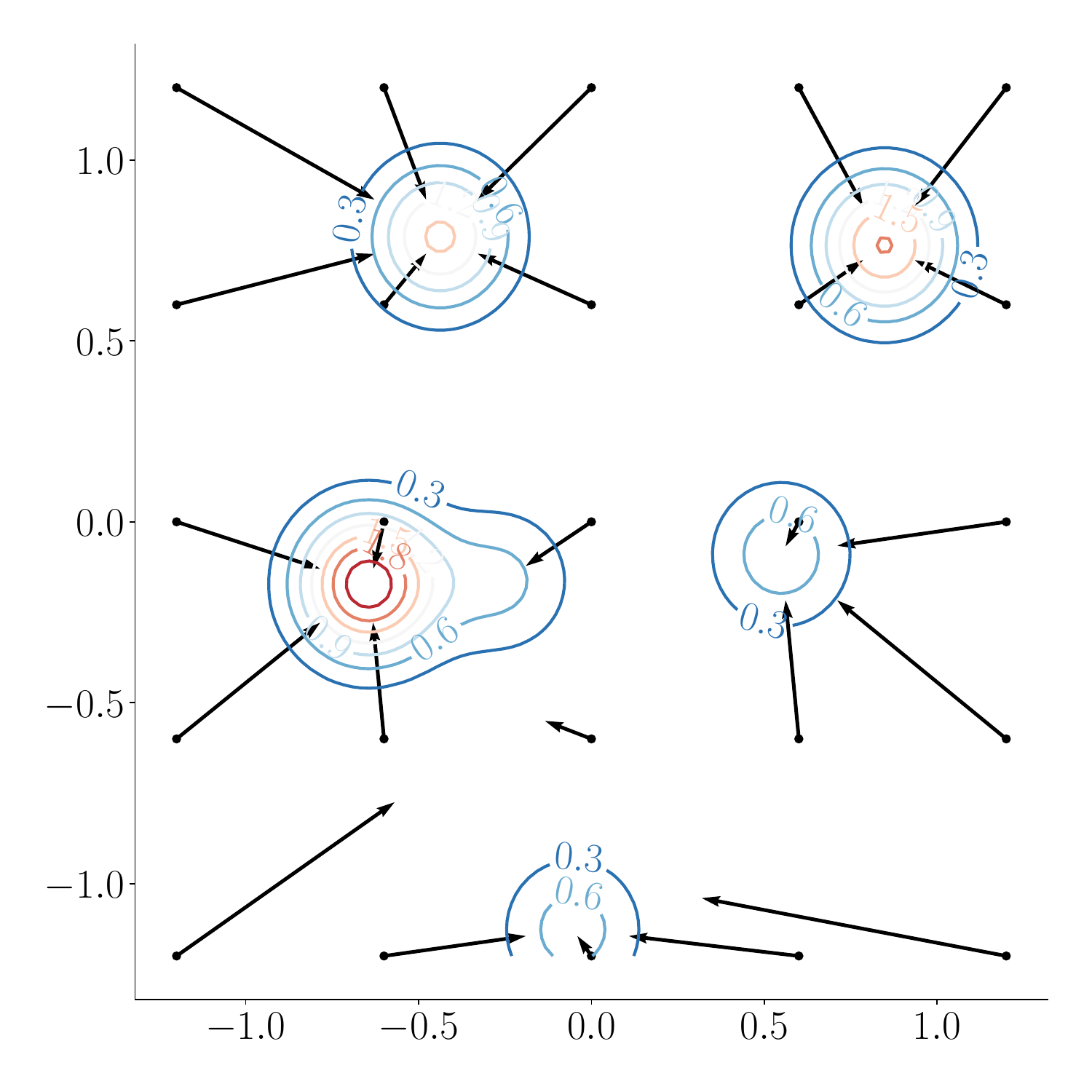}
	\includegraphics[width=.33\textwidth]{./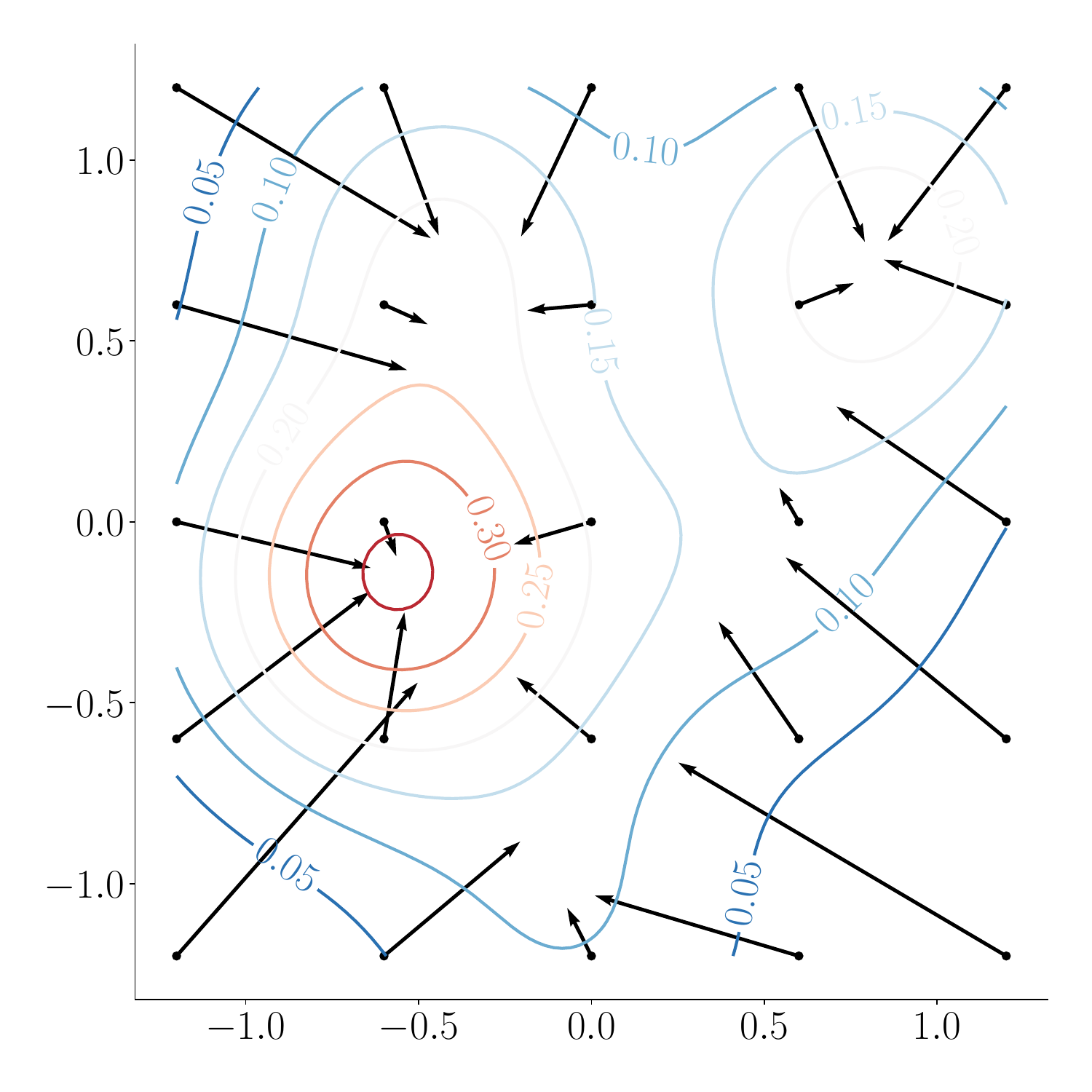}
	\includegraphics[width=.33\textwidth]{./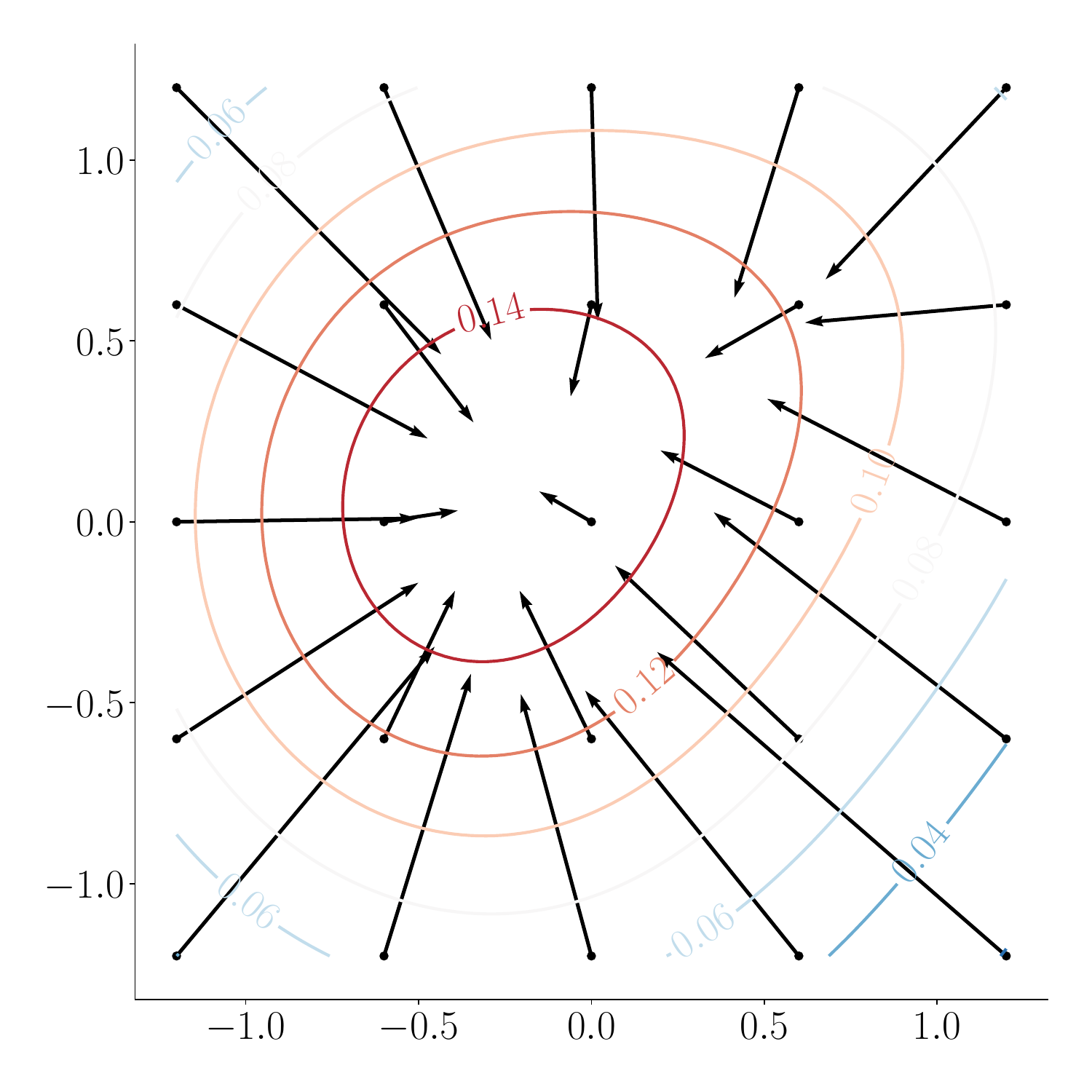}
	\caption{%
		\deleted{Linear isotropic} Diffusion of an empirical density consisting of the weighted Dirac measures \( \sum_{i=1}^6 w_i \delta_{x_i} \) specified in~\eqref{eq:example} at times \( t \in \{ \num{0.008}, \num{0.078}, \num{0.3} \} \).
		The arrows show the empirical Bayes estimate \( y \mapsto y + 2t \nabla \log \replaced{f}{p}_{Y_t}(y) \).
		As \( t \) \replaced{approaches infinity}{$\to \infty$}, \( \replaced{f}{p}_{Y_t} \) becomes log-concave and \added{$-\log f_{Y_t}$} approaches a quadratic function.
}%
	\label{fig:diffusion toy example}
\end{figure*}

Recently,~\eqref{eq:tweedie} has been used for parameter estimation~\cite{song_scorebased_2021,vincent_connection_2011}:
Let \( \{ x_i \}_{i=1}^I \) be a dataset of \( I \)~samples drawn from \( f_X \) and let \( Y_t \) be governed by diffusion.
Additionally, let \( f_\theta : \mathcal{X} \times [0, \infty) \to \R_+ \) denote a parametrized model for which we wish that \( f_\theta(\argm, t) \approx f_{Y_t} \), for all \( t > 0 \).
Then, both the left- and right-hand side of~\eqref{eq:tweedie} are \emph{known} --- in expectation.
This naturally leads to the loss function
\begin{equation}
	\min_{\theta \in \Theta} \int_{(0, \infty)} \mathbb{E}_{(x, y_t) \sim f_{X, Y_t}} \norm{x - y_t - \sigma^2(t) \grad{1} \log f_\theta(y_t, t)}^2 \,\mathrm{d}t
\label{eq:score}
\end{equation}
for estimating \( \theta \) such that \( f_\theta(\argm, t) \approx f_{Y_t} \) for all \( t > 0 \).
Here, \( f_{X, Y_t} \) denotes the joint distribution of the clean and \replaced{smoothed}{corrupted} random variables and \( \Theta \) describes the set of feasible parameters.
This learning problem is known as denoising score matching~\cite{hyvarinen_estimation_nodate,song_scorebased_2021,vincent_connection_2011}.
\section{Methods}%
\label{sec:methods}
In this section, we first introduce one-dimensional \glspl{gmm} as the backbone of our model and recall some properties that are needed for the analysis in the following subsections.
Then, we detail \added{how we can utilize} \glspl{pogmdm} based on filter\added{-}, wavelet\added{-}, and shearlet\added{-}responses to model the distribution of natural images.
For all models, we present assumptions under which they obey the diffusion \replaced{\gls{pde}}{process}.

The backbone of our models is the one-dimensional \gls{gmm} expert~\( \psi_j : \R \times \triangle^L \times [0, \infty) \to \R_+ \) with \( L \) components of the form
\begin{equation}
	\psi_j(x,w_j,t) = \sum_{l=1}^{L} w_{jl} G_{\mu_l,\sigma_j^2(t)}(x).
	\label{eq:expert}
\end{equation}
The weights of each expert \( w_j = (w_{j1}, \dotsc, w_{jL})^\T \) must satisfy the unit simplex constraint, i.e., \( w_j \in \triangle^L \), \( \triangle^L = \{ x \in \R^L : x \geq 0, \langle \mathds{1}_{\R^L}, x \rangle_{\R^L} = 1 \} \).
Although not necessary, we assume for simplicity that all experts \( \psi_j \) have the same number of components and the discretization of their means \( \mu_l \) over the real line is shared and fixed a priori (for details see~\cref{ssec:implementation details}).

\replaced{%
	The main contribution of our work is to show that, under certain assumptions, it suffices to adapt the variances of the individual experts to implement the diffusion of a model built through multiplying experts of the form~\eqref{eq:expert}.
	In detail, we show that the variance \( \sigma_j^2 : [0, \infty) \to \R_+ \) of the \( j \)-th expert can be modeled as
}{%
	Further, the variances of all components within each expert are shared and are modeled as
}
\begin{equation}
	\sigma_j^2(t) = \sigma_0^2 + c_j 2t,
	\label{eq:variances}
\end{equation}
where \( \sigma_0 \added{> 0} \) is chosen \added{a-priori} to support the uniform discretization of the means~\( \mu_l \) and \( c_j \in \R_{++} \) are derived from properties of the product model such that \replaced{it}{the product model} obeys the diffusion \gls{pde}~\eqref{eq:diff}.

In the following subsections we exploit two well-known properties of \glspl{gmm} to derive how to \replaced{adapt}{model} the variance \( \sigma_j^2(t) \) of each expert \replaced{with diffusion time \( t \), such that the product model obeys the diffusion \gls{pde}}{to obey the diffusion \gls{pde}}:
First, \added{up to normalization}, the product of \glspl{gmm} is again a \gls{gmm}, see e.g.~\cite{1591840}.
This allows us to work on highly expressive models that enable efficient \emph{evaluations} due to factorization.
Second, we use the fact that there exists an analytical solution to the diffusion \gls{pde} if \( f_X \) is a \gls{gmm}:
Green's function associated with the linear isotropic diffusion \gls{pde}~\eqref{eq:diff} is a Gaussian with isotropic covariance \( 2t \mathrm{Id} \).
Due to the linearity of the convolution, it suffices to analyse the convolution of individual components of the product model, which is just the convolution of two Gaussians.
Using previous notation, if \( X \) is a random variable with normal distribution \( \mathcal{N}(\mu_X, \Sigma_X) \), then \( Y_t \) follows the distribution \( \mathcal{N}(\mu_X, \Sigma_X + 2t\Id) \).
In particular, the mean remains unchanged and it \replaced{suffices to}{is sufficient to only} adapt the covariance matrix with the diffusion time.

\replaced{%
	In what follows, we discuss three product models whose one-dimensional \gls{gmm} experts act on filter-, wavelet-, and shearlet-responses.
	In particular, we present conditions under which the diffusion of the product model can be implemented by adapting the variances of the one-dimensional \gls{gmm} experts.
	For all three models, we give an analytic expression for the constants \( c_j \) in~\eqref{eq:variances}.
}{%
	In what follows, we describe product of one-dimensional Gaussian mixture experts for which the diffusion of the covariance matrix can be efficiently implemented.
	In particular, we present three models acting on filter-, wavelet-, and shearlet responses for which we derive conditions under which the diffusion is accounted for by adapting the variances of the one-dimensional Gaussian mixture experts~\eqref{eq:expert}.
}
\subsection{Patch Model}%
\label{ssec:patch model}
In this section, we approximate the distribution \deleted{of image} of image patches~\( p \in \R^a \) of size~\( a = b\times b \) by a product of \( J \in \mathbb{N} \) \gls{gmm} experts \replaced{acting on}{modeling the distribution of} filter\added{-}responses.
In detail, the \deleted{filter-based} model is of the form
\begin{equation}
	f^{\mathrm{filt}}_\theta(p, t) = Z(\{ k_j \}_{j=1}^J\added{, \sigma_0, t})^{-1}\prod_{j=1}^J \psi_j(\langle k_j, p \rangle_{\R^a}, w_j, t).
	\label{eq:gmdm patch}
\end{equation}
Each \gls{gmm} expert~\( \psi_j : \R \times \triangle^L \times [0, \infty) \to \R_+ \) for \( j=1,\ldots,J \) models the distribution of filter\added{-}responses \( \mathbb{E}_{p_t\sim f_{Y_t}} \added{\bigl[} \delta (\argm - \langle k_j, p_t \rangle_{\R^a}) \added{\bigr]} \) of the associated filters~\( k_j \in \R^a \) for all~\( t > 0 \).
We denote with \( Z(\{ k_j \}_{j=1}^J\added{,\sigma_0, t}) \) the partition function such that \( f^{\mathrm{filt}}_\theta \) is properly normalized\deleted{, which we later show to be independent of \( t \)}.
In this model, we can summarize the learnable parameters \( \theta = \{ (k_j, w_j) \}_{j=1}^J \).

First, the following theorem establishes the exact form of~\eqref{eq:expert} as a \gls{gmm} on \( \R^a \).
\added{%
	The covariance matrix and the means are endowed with the subscript \( \R^a \) to emphasize that the resulting \gls{gmm} models patches of this size;
	The models based on wavelet- and shearlet-responses discussed later (\cref{ssec:wavelet model} and~\cref{ssec:conv model} respectively) can be applied to images of arbitrary size, which we emphasize by using the subscript \( \R^n \).
}
We denote with \( \hat{l} : \{ 1, \dotsc, J \} \to \{ 1, \dotsc, L \}  \) a fixed but arbitrary selection from the index set \( \{ 1, \dotsc, L \} \).
\begin{theorem}
	\( f^{\mathrm{filt}}_\theta(\argm, 0) \) is a homoscedastic \gls{gmm} on \( \R^a \) with \( L^J \) components and precision matrix
	\begin{equation}
			(\Sigma_{\R^a})^{-1} = \frac{1}{\sigma_0^2} \sum_{j=1}^J (k_j \otimes k_j).
			\label{eq:precision}
	\end{equation}
	The mean of the component identified by the choice of the index map \( \hat{l} \) has the form
	\begin{equation}
		\mu_{\R^a,\hat{l}} = \Sigma_{\R^a} \sum_{j=1}^J k_j\mu_{\hat{l}(j)}.
	\end{equation}%
	\label{th:gmm}
\end{theorem}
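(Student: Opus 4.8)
The plan is to evaluate the product model~\eqref{eq:gmdm patch} at $t=0$, where the variance schedule~\eqref{eq:variances} collapses to $\sigma_j^2(0)=\sigma_0^2$ for every expert, and then to expand the resulting product of one-dimensional mixtures into a genuine multivariate mixture on $\R^a$. Substituting $\psi_j(\langle k_j,p\rangle_{\R^a},w_j,0)=\sum_{l=1}^L w_{jl}\,G_{\mu_l,\sigma_0^2}(\langle k_j,p\rangle_{\R^a})$ and distributing the product over the $J$ sums gives
\begin{equation}
	\prod_{j=1}^J\psi_j(\langle k_j,p\rangle_{\R^a},w_j,0)=\sum_{\hat l}\Bigl(\prod_{j=1}^J w_{j\hat l(j)}\Bigr)\prod_{j=1}^J G_{\mu_{\hat l(j)},\sigma_0^2}(\langle k_j,p\rangle_{\R^a}),
\end{equation}
where the outer sum ranges over all $L^J$ selection maps $\hat l$. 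This already explains the number of components; it then remains to recognize each inner product of one-dimensional Gaussians as a single Gaussian in $p$.

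First I would fix $\hat l$ and merge the exponents. Each factor contributes $-(\langle k_j,p\rangle_{\R^a}-\mu_{\hat l(j)})^2/(2\sigma_0^2)$, so the inner product is proportional to $\exp\bigl(-\tfrac{1}{2\sigma_0^2}\sum_{j=1}^J(\langle k_j,p\rangle_{\R^a}-\mu_{\hat l(j)})^2\bigr)$. Expanding the squares and using the outer-product identity $\langle k_j,p\rangle_{\R^a}^2=\langle p,(k_j\otimes k_j)p\rangle_{\R^a}$ turns the $p$-quadratic part into $\langle p,(\sum_j k_j\otimes k_j)p\rangle_{\R^a}$; dividing by $\sigma_0^2$ this is exactly the precision~\eqref{eq:precision}. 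Completing the square against this precision then reads off the center from the linear coefficient $\sum_j k_j\mu_{\hat l(j)}$, giving the asserted mean $\mu_{\R^a,\hat l}$, while all $p$-independent terms (the $\sum_j\mu_{\hat l(j)}^2$ and the Gaussian prefactors) factor out and are absorbed, together with $\prod_j w_{j\hat l(j)}$ and the partition function $Z$, into the nonnegative mixture weight of that component.

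Finally, I would note that the covariance recovered by completing the square is $\bigl(\tfrac{1}{\sigma_0^2}\sum_j k_j\otimes k_j\bigr)^{-1}$ and, crucially, does not depend on $\hat l$; this is precisely the homoscedasticity claim, and normalizing by $Z$ makes the $L^J$ reweighted Gaussians a bona fide \gls{gmm}. The main obstacle is that the precision $\sum_j k_j\otimes k_j$ is positive definite only if the filters $\{k_j\}_{j=1}^J$ span $\R^a$; in the typical situation where the filters are orthogonal to the constant vector $\mathds{1}_{\R^a}$, this matrix is singular and $\Sigma_{\R^a}$ does not exist on all of $\R^a$. The argument then has to be read on the subspace spanned by the filters, where completing the square and the identification of $\mu_{\R^a,\hat l}$ go through verbatim after restricting the Gaussians to that subspace via the disintegration theorem, exactly as announced in the preliminaries. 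Keeping careful track of the $p$-independent constants is routine but is what ultimately confirms that the component weights, and hence $Z$, are well defined.
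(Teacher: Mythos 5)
Your proposal is correct and follows essentially the same route as the paper's proof: expand the product of one-dimensional mixtures into $L^J$ components indexed by the selection map $\hat l$, then complete the square (the paper does this equivalently by matching the gradient of the quadratic form) to read off the shared precision $\frac{1}{\sigma_0^2}\sum_{j}(k_j\otimes k_j)$ and the component means $\Sigma_{\R^a}\sum_j k_j\mu_{\hat l(j)}$. Your additional remark on the possible singularity of the precision and the restriction to the span of the filters is consistent with how the paper handles this in its preliminaries (via the disintegration theorem), so nothing is missing.
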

\begin{proof}
	By definition,
	\begin{equation}
		\begin{aligned}
			&\prod_{j=1}^J \psi_j(\langle k_j, p \rangle_{\R^a}, w_{j}, 0) \\
			&= \prod_{j=1}^J \sum_{l=1}^{L} \frac{w_{jl}}{\sqrt{2\pi\sigma_0^2}} \exp\left( -\frac{1}{2\sigma_0^2}{(\langle k_j, p \rangle_{\R^a} - \mu_l)}^2 \right).
		\end{aligned}
	\end{equation}
	The general component of the above is uniquely identified by the choice of the map \( \hat{l} \) as
	\begin{equation}
		\begin{aligned}
			&(2\pi\sigma_0^2)^{-\frac{J}{2}} \biggl( \prod_{j=1}^J w_{j\hat{l}(j)} \biggr) \\& \times  \exp\Biggl( -\frac{1}{2\sigma_0^2} \sum_{j=1}^J (\langle k_j, p \rangle_{\R^a} - \mu_{\hat{l}(j)})^2 \Biggr).
		\end{aligned}
	\end{equation}
	To find \( (\Sigma_{\R^a})^{-1} \), we \replaced{match the gradient of the familiar quadratic form}{complete the square}:
	Motivated by \( \grad{p} \norm{p - \mu_{\R^a,\hat{j}}}^2_{\Sigma_{\R^a}^{-1}} / 2 = \Sigma_{\R^a}^{-1} (p - \mu_{\R^a,\hat{l}}) \) we find that \( \grad{p} \bigl( \frac{1}{2\sigma_0^2} \sum_{j=1}^J (\langle k_j, p \rangle_{\R^a} - \mu_{\hat{l}(j)})^2 \bigr) = \frac{1}{\sigma_0^2} \sum_{j=1}^J \bigl( (k_j \otimes k_j) p - k_j \mu_{\hat{l}(j)}\bigr) \).
	From the first term, we immediately identify \( (\Sigma_{\R^a})^{-1} = \frac{1}{\sigma_0^2} \sum_{j=1}^J (k_j \otimes k_j) \), and we find \( \mu_{\R^a, \hat{l}} \) by left-multiplying \( \Sigma_{\R^a} \) onto \( \sum_{j=1}^J k_j \mu_{\hat{l}(j)} \).
\end{proof}
The next theorem establishes a tractable analytical expression for the diffusion process under the assumption of pair-wise orthogonal filters, that is
\begin{equation}
	\langle k_j, k_i \rangle = \begin{cases}
		0 & \text{if}\ i \neq j, \\
		\norm{k_j}^2 & \text{else},
	\end{cases} \text{ for all } i, j \in \{ 1,\dotsc,J \}.
\label{eq:ortho}
\end{equation}
\begin{theorem}[Patch diffusion]
	Under assumption~\eqref{eq:ortho}, \( f^{\mathrm{filt}}_\theta(\argm, t) \) satisfies the diffusion \gls{pde} \( (\partial_t - \Delta_1) f^{\mathrm{filt}}_\theta(\argm, t) = 0 \) if \( \sigma_j^2(t) = \sigma_0^2 + \norm{k_j}^2 2t \).
	\label{th:diff local}
\end{theorem}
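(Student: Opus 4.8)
The plan is to reduce the claim to the elementary fact recalled in \cref{sec:methods}, namely that a Gaussian \( \mathcal{N}(\mu, \Sigma) \) evolves under \eqref{eq:diff} into \( \mathcal{N}(\mu, \Sigma + 2t\Id) \). It therefore suffices to show that, for every fixed \( t \), the normalized model \( f^{\mathrm{filt}}_\theta(\argm, t) \) is exactly the homoscedastic \gls{gmm} of \cref{th:gmm} with every component covariance incremented by \( 2t\Id \) and with the \emph{same} mixture weights; linearity of \( \partial_t - \Delta_1 \) then transfers the PDE from the individual Gaussian components to the mixture. First I would replay the computation in the proof of \cref{th:gmm} verbatim, but carrying the time-dependent variances \( \sigma_j^2(t) \) instead of \( \sigma_0^2 \). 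This shows that \( \prod_{j=1}^J \psi_j(\langle k_j, \argm \rangle_{\R^a}, w_j, t) \) is again an (unnormalized) \gls{gmm} on \( \R^a \) whose component indexed by \( \hat{l} \) has precision \( \Sigma_{\R^a}(t)^{-1} = \sum_{j=1}^J (k_j \otimes k_j)/\sigma_j^2(t) \) and mean \( \Sigma_{\R^a}(t)\sum_{j=1}^J k_j \mu_{\hat{l}(j)}/\sigma_j^2(t) \).

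The decisive step is to exploit the orthogonality \eqref{eq:ortho}. Writing \( \hat{k}_j = k_j/\norm{k_j} \), assumption \eqref{eq:ortho} makes \( \{ \hat{k}_j \}_{j=1}^J \) orthonormal, so each \( k_j \otimes k_j = \norm{k_j}^2 (\hat{k}_j \otimes \hat{k}_j) \) is an orthogonal rank-one projector, and \( \Sigma_{\R^a}(t)^{-1} \) is diagonal in this basis with eigenvalue \( \norm{k_j}^2/\sigma_j^2(t) \) along \( \hat{k}_j \). Substituting the prescribed \( \sigma_j^2(t) = \sigma_0^2 + \norm{k_j}^2 2t \) gives eigenvalue \( \norm{k_j}^2/(\sigma_0^2 + \norm{k_j}^2 2t) = (\sigma_0^2/\norm{k_j}^2 + 2t)^{-1} \), which is exactly the eigenvalue of \( (\Sigma_{\R^a} + 2t\Id)^{-1} \) along \( \hat{k}_j \), since \( \Sigma_{\R^a} \) from \eqref{eq:precision} has eigenvalue \( \sigma_0^2/\norm{k_j}^2 \) there. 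Hence \( \Sigma_{\R^a}(t) = \Sigma_{\R^a} + 2t\Id \) on the span of the filters (the space on which the density is supported, in the sense of the disintegration remark in the preliminaries). The same diagonalization shows, after the eigenvalues cancel, that the component mean is independent of \( t \) and coincides with \( \mu_{\R^a,\hat{l}} \) of \cref{th:gmm}.

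It remains to verify that the mixture weights and the partition function \( Z \) carry no spurious time dependence. Here I would change coordinates to \( u_j = \langle \hat{k}_j, p \rangle_{\R^a} \), a unit-Jacobian orthogonal change of variables under which the product factorizes; completing the square in each coordinate rewrites the \( j \)-th scalar factor as \( \norm{k_j}^{-1} G_{\mu_{\hat{l}(j)}/\norm{k_j},\, \sigma_j^2(t)/\norm{k_j}^2}(u_j) \), whose integral over \( u_j \) equals \( \norm{k_j}^{-1} \) \emph{regardless of \( t \)}. Summing the coefficient \( \prod_j w_{j\hat{l}(j)} \) over all index maps \( \hat{l} \) and invoking the simplex constraint \( \sum_l w_{jl} = 1 \) then yields \( Z = \prod_j \norm{k_j}^{-1} \), constant in \( t \), and leaves normalized component weights \( \pi_{\hat{l}} = \prod_j w_{j\hat{l}(j)} \) that are likewise frozen in time.

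Collecting these facts gives \( f^{\mathrm{filt}}_\theta(\argm, t) = \sum_{\hat{l}} \pi_{\hat{l}}\, G_{\mu_{\R^a,\hat{l}},\, \Sigma_{\R^a} + 2t\Id} \), i.e.\ each component is precisely the time-\( t \) diffusion of the corresponding component of \( f^{\mathrm{filt}}_\theta(\argm, 0) \); since every such Gaussian satisfies \( (\partial_t - \Delta_1)G = 0 \) and the weights are constant, linearity of \( \partial_t - \Delta_1 \) finishes the argument. I expect the main obstacle to be exactly the bookkeeping of the last two paragraphs: confirming that the Gaussian-product prefactors and the partition function are genuinely \( t \)-independent, together with the structural observation that orthogonality is what makes the rank-one summands \( k_j \otimes k_j \) commute, so that \( \sum_j (k_j \otimes k_j)/\sigma_j^2(t) \) collapses eigenvalue-wise to \( (\Sigma_{\R^a} + 2t\Id)^{-1} \). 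Without \eqref{eq:ortho} these summands share no common eigenbasis, the eigenvalue-wise inversion breaks down, and no pointwise choice of the variances \( \sigma_j^2(t) \) reproduces the diffused covariance.
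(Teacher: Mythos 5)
Your argument is correct and follows the same route as the paper's proof: under \eqref{eq:ortho} the precision matrix diagonalizes in the orthonormal basis \( \{ k_j/\norm{k_j} \}_{j=1}^J \), and matching the eigenvalues of \( \Sigma_{\R^a} + 2t\Id_{\R^a} \) forces \( \sigma_j^2(t) = \sigma_0^2 + \norm{k_j}^2 2t \). You additionally verify that the component means, the mixture weights, and the partition function carry no spurious time dependence (finding the integral of the product over the span of the filters to be \( \prod_{j=1}^J \norm{k_j}^{-1} \)); the paper's proof leaves this bookkeeping implicit, so your version is the more complete one.
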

\begin{proof}
	Assuming~\eqref{eq:ortho}, the Eigendecomposition of the precision matrix can be trivially constructed.
	In particular, \( (\Sigma_{\R^a})^{-1} = \sum_{j=1}^J \frac{\norm{k_j}^{2}}{\sigma_0^2} (\frac{k_j}{\norm{k_j}} \otimes \frac{k_j}{\norm{k_j}}) \), hence \( \Sigma_{\R^a} = \sum_{j=1}^J \frac{\sigma_0^2}{\norm{k_j}^{2}} (\frac{k_j}{\norm{k_j}} \otimes \frac{k_j}{\norm{k_j}}) \).
	As discussed in~\cref{ssec:diffusion empirical bayes}, \( \Sigma_{\R^a} \) evolves as \( \Sigma_{\R^a} \mapsto \Sigma_{\R^a} + 2t\mathrm{Id}_{\R^a} \) under diffusion.
	Equivalently, on the level of Eigenvalues, \( \frac{\sigma_0^2}{\norm{k_j}^{2}} \mapsto \frac{\sigma_0^2 + 2t\norm{k_j}^{2}}{\norm{k_j}^{2}} \) for all \( j = 1, \ldots, J \) .
	Recall that \( \sigma_0^2 \) is just \( \sigma_j^2(0) \).
	Thus, \( f^{\mathrm{filt}}_\theta(\argm,t) \) satisfies the diffusion \gls{pde} if \( \sigma_j^2(t) = \sigma_0^2 + \norm{k_j}^2 2t \).
\end{proof}
\begin{corollary}
	With assumption~\eqref{eq:ortho}, the \replaced{experts}{the potential functions} \( \psi_j(\argm, w_j, t) \) in~\eqref{eq:gmdm patch} model the marginal distribution of the random variable \( U_{j, t} = \langle k_j, Y_t \rangle \).
	In addition, \(\replaced{f^{\text{filt}}_{\theta}}{\tilde{f}_\theta} \) is normalized when \( Z(\{ k_j \}_{j=1}^J\added{, \sigma_0, t}) = \Bigl((2\pi)^a \prod_{j=1}^J \frac{\sigma_0^2 + 2t\norm{k_j}^2}{\norm{k_j}^2}\Bigr)^{\frac{1}{2}} \).
	\label{cor:marginal}
\end{corollary}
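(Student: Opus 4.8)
The plan is to recognize \( f^{\mathrm{filt}}_\theta(\argm, t) \) as the diffused Gaussian mixture furnished by \cref{th:gmm,th:diff local}, and then obtain both assertions by marginalization and a single change-of-variables integral. Combining the two theorems under the orthogonality assumption~\eqref{eq:ortho}, \( f^{\mathrm{filt}}_\theta(\argm, t) \) is the homoscedastic \gls{gmm} on \( \R^a \) whose \( L^J \) components \( G_{\mu_{\R^a, \hat{l}}, \Sigma_{\R^a}(t)} \) carry weights \( \prod_{j=1}^J w_{j\hat{l}(j)} \) and share the covariance \( \Sigma_{\R^a}(t) = \Sigma_{\R^a} + 2t\Id_{\R^a} \). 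The eigendecomposition from the proof of \cref{th:diff local} shows that \( k_j \) is an eigenvector of \( \Sigma_{\R^a}(t) \) with eigenvalue \( \sigma_j^2(t)/\norm{k_j}^2 \), where \( \sigma_j^2(t) = \sigma_0^2 + 2t\norm{k_j}^2 \).

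To prove that \( \psi_j(\argm, w_j, t) \) is the marginal of \( U_{j,t} = \langle k_j, Y_t \rangle_{\R^a} \), I would push the mixture forward under the linear map \( p \mapsto \langle k_j, p \rangle_{\R^a} \). The image of a \gls{gmm} under a linear map is again a \gls{gmm}, and the component indexed by \( \hat{l} \) becomes the univariate normal with variance \( \langle k_j, \Sigma_{\R^a}(t) k_j \rangle_{\R^a} \) and mean \( \langle k_j, \mu_{\R^a, \hat{l}} \rangle_{\R^a} \). Because \( k_j \) is an eigenvector, the variance collapses to \( \norm{k_j}^2 \sigma_j^2(t) / \norm{k_j}^2 = \sigma_j^2(t) \), and orthogonality~\eqref{eq:ortho} annihilates every cross-term in the mean, leaving \( \mu_{\hat{l}(j)} \). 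Finally, the \( L^J \) mixing weights collapse: the total mass attached to the mean \( \mu_l \) is \( \sum_{\hat{l}:\,\hat{l}(j)=l} \prod_{i=1}^J w_{i\hat{l}(i)} = w_{jl} \prod_{i \neq j} \bigl( \sum_{l'=1}^L w_{il'} \bigr) = w_{jl} \), where the last step is exactly the simplex constraint \( w_i \in \triangle^L \). Thus the marginal equals \( \sum_{l=1}^L w_{jl} G_{\mu_l, \sigma_j^2(t)} = \psi_j(\argm, w_j, t) \).

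For the normalization I would evaluate \( Z = \int_{\R^a} \prod_{j=1}^J \psi_j(\langle k_j, p \rangle_{\R^a}, w_j, t)\,\mathrm{d}p \) directly. Passing to the filter responses \( u_j = \langle k_j, p \rangle_{\R^a} \), the Gram matrix of the filters is \( \diag(\norm{k_1}^2, \dotsc, \norm{k_J}^2) \) by~\eqref{eq:ortho}, so the substitution contributes the Jacobian factor \( \prod_{j=1}^J \norm{k_j}^{-1} \). The integral then factorizes over \( j \), each factor contributing the Gaussian normalizing constant \( \sqrt{2\pi\sigma_j^2(t)} \) of the \( j \)-th expert, and collecting the pieces gives \( Z = \prod_{j=1}^J \sqrt{2\pi\sigma_j^2(t)}\,/\,\norm{k_j} = \bigl( (2\pi)^a \prod_{j=1}^J (\sigma_0^2 + 2t\norm{k_j}^2)/\norm{k_j}^2 \bigr)^{1/2} \); equivalently, this is \( \detm{2\pi\Sigma_{\R^a}(t)}^{1/2} \) by the eigenvalues recorded above. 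With this \( Z \), the model coincides with the normalized diffused mixture and hence integrates to one.

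The main difficulty I anticipate is bookkeeping rather than conceptual. The reduction of the \( L^J \) product weights to the single expert weight \( w_{jl} \) rests entirely on the simplex constraint and must be carried out carefully, and in the normalization one has to track the Jacobian of the non-orthonormal filter substitution so that it matches the determinant \( \detm{2\pi\Sigma_{\R^a}(t)} \) dictated by \cref{th:diff local}. One must also confirm that orthogonality~\eqref{eq:ortho} genuinely removes every cross-term in \( \langle k_j, \mu_{\R^a, \hat{l}} \rangle_{\R^a} \), since otherwise the projected means would fail to reduce to \( \mu_{\hat{l}(j)} \).
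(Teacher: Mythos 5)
Your proof is correct and follows essentially the same route as the paper: both push each Gaussian component through the linear map \( \langle k_j, \argm \rangle \) and use orthogonality to identify the marginal, and your direct change-of-variables computation of \( Z \) is the same calculation as the paper's pseudo-determinant of the eigenvalues \( \sigma_j^2(t)/\norm{k_j}^2 \) (you note the equivalence yourself). Your explicit collapse of the \( L^J \) mixture weights to \( w_{jl} \) via the simplex constraint is a detail the paper leaves implicit; the only caveat is that your integral produces \( (2\pi)^{J/2} \) rather than the stated \( (2\pi)^{a/2} \), which agree only when the filters span all of \( \R^a \) (\( J = a \)) --- otherwise the integral over \( \R^a \) diverges and one must restrict to the span of the filters, as the paper does implicitly by invoking the pseudo-determinant.
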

\begin{proof}
	\added{We first show that \( \psi_j(\argm, w_j, t) \) model the marginal distribution of the random variable \( U_{j, t} = \langle k_j, Y_t \rangle \).}
	Consider one component of the resulting homoscedastic \gls{gmm}: \( \hat{Y}_t \sim \mathcal{N}(\mu_{\R^a,\hat{l}}, \Sigma_{\R^a} + 2t\mathrm{Id}_{\R^a}) \).
	The distribution of \( \hat{U}_{j, t} = \langle k_j, \hat{Y}_t \rangle \) is  \( \hat{U}_{j, t} \sim \mathcal{N}(k_j^\top \mu_{\R^a,\hat{l}}, k_j^\top (\Sigma_a + 2t\mathrm{Id}_a) k_j) \) (see e.g.\ \cite{Gut2009} for a proof).
	Under our orthogonality assumptions, this simplifies to \( \mathcal{N}(\mu_{\hat{l}(j)}, \sigma_0^2  + 2t\norm{k_j}^2) \).
	The claim follows from the linear combination of the different components.

	\added{%
		The normalization constant is the classical normalization of a Gaussian, which requires the pseudo-determinant of \( \Sigma_{\R^a} \)~\cite{Rao1973}.
		The pseudo-determinant is easily calculated by the product of the Eigenvalues outlined in~\cref{th:diff local}.
	}
\end{proof}

\subsection{Wavelet Model}%
\label{ssec:wavelet model}
The key ingredient in the previous section was the orthogonality of the filters.
In other words, the filter bank \( \{ k_j \}_{j=1}^J \) forms an orthogonal (not necessarily orthonormal) basis for (a subspace of) \( \R^a \).
In this section, we discuss the application of explicit diffusion models in another well-known orthogonal basis: Wavelets.
In what follows, we briefly discuss the main concepts of the discrete wavelet transformation needed for our purposes.
For the sake of simplicity, we stick to the one-dimensional case but note that the extension to two dimensions is straight forward, see e.g.~\cite[Chapter 4.4]{Bredies2018}.
The following is largely adapted from~\cite{Bredies2018}, we refer the reader to this and~\cite{mallat_multiresolution_1989,vetterli1995wavelets} for information on the extension to two-dimensional signals as well as efficient implementations using the fast wavelet transformation.
\subsubsection{The discrete wavelet transformation}
Let \( \omega \in L^2(\R) \) be a wavelet satisfying the admissibility condition
\begin{equation}
	0 < \int_0^\infty \frac{|(\mathcal{F}\omega)(\zeta)|^2}{\zeta}\,\mathrm{d}\zeta < \infty.
\end{equation}
The set of functions
\begin{equation}
	\left\{ \omega_{j, k} = 2^{-j/2} \omega(2^{-j}\argm - k) : j, k \in \mathbb{Z} \right\}
	\label{eq:l2 basis}
\end{equation}
forms an orthonormal basis of \( L^2(\R) \) under certain conditions that we now recall.
Let \( (V_j)_{j\in\mathbb{Z}} \) be a multiscale analysis with \emph{generator} or \emph{scaling} function \( \phi \in V_0 \), i.e. \( \{ T_k \phi : k \in \mathbb{Z} \} \) form an orthonormal basis of \( V_0 \) (\( T_k \) is a translation operator \( (T_k \phi)(x) = \phi(x + k) \)).
The scaling property
\begin{equation}
	u \in V_j \iff D_{1/2}u\in V_{j+1} \quad((D_s u)(x) = u(sx))
\end{equation}
of the multiscale analysis \( (V_j)_{j\in\mathbb{Z}} \) implies that the functions \( \phi_{\replaced{j}{h}, k} = 2^{-j/2}\phi(2^{-j}\argm - k) \), \( k \in \mathbb{Z} \) form an orthonormal basis of \( V_j \).
Further, the scaling property implies that \( \phi \in V_{-1} \) and since \( \phi_{-1, k} \) form an orthonormal basis of \( V_{-1} \), we have that
\begin{equation}
	\phi(x) = \sqrt{2}\sum_{k\in\mathbb{Z}} h_k \phi(2x - k)
\end{equation}
with \( h_k = \langle \phi, \phi_{-1, k} \rangle_{L^2(\R)} \).
We define the \emph{detail} or \emph{wavelet spaces} \( W_j \) as the orthogonal complements of the \emph{approximation spaces} \( V_j \) in \( V_{j-1} \), i.e.
\begin{equation}
	V_{j-1} = V_j \oplus W_j, \quad V_j \perp W_j.
	\label{eq:orthogonal spaces}
\end{equation}
From this follows that \( V_j = \bigoplus\limits_{m\geq j+1} W_m \) and due to the completeness of \( V_j \), that \( \bigoplus\limits_{m \in \mathbb{Z}} W_m = L^2(\R) \).
By the orthogonality, we have that \( \proj_{V_{j-1}} = \proj_{V_j} + \proj_{W_j} \) and hence \( \proj_{W_j} = \proj_{V_{j-1}} - \proj_{V_j} \).
Thus, any \( u \in L^2(\R) \) can be represented as
\begin{equation}
	u = \sum_{j\in \mathbb{Z}} \proj_{W_j} u = \proj_{V_m} u + \sum_{j\leq m}\proj_{W_j} u
\end{equation}
justifying the name multiscale analysis.
Then (see \cite[Theorem 4.67]{Bredies2018} for details) \( \omega \in V_{-1} \) defined by
\begin{equation}
	\omega(x) = \sqrt{2}\sum_{k\in\mathbb{Z}} (-1)^{k} h_{1-k} \phi(2x-k)
\end{equation}
is a wavelet, \( \{ \omega_{j,k} : k \in \mathbb{Z} \} \) is an orthonormal basis of \( W_j \) and in particular the construction~\eqref{eq:l2 basis} is an orthonormal basis of \( L^2(\R) \).

\subsubsection{Modeling Wavelet Coefficients}
In this section we describe how we can utilize a product of \gls{gmm} experts to model the distribution of wavelet-responses.
For the subsequent analysis, first observe that by~\eqref{eq:orthogonal spaces} the detail spaces (and the approximation spaces) are orthogonal.
Utilizing the shorthand notation
\begin{equation}
	\mathcal{W}_j = \proj_{W_j},
\end{equation}
since \( \mathcal{W}_j \) is an orthogonal projection, it satisfies the properties
\begin{alignat}{3}
	&\text{(self-adjoint)} &&(\mathcal{W}_j)^\ast &&= \mathcal{W}_j, \nonumber \\
	&\text{(idempotency)} &&\mathcal{W}_j \circ \mathcal{W}_j &&= \mathcal{W}_j,\ \text{and} \label{eq:projection}\\
	&\text{(identity on subspace)}\ \ \ &&\mathcal{W}_j|_{W_j} &&= \mathrm{Id}_{W_j}\nonumber 
\end{alignat}
where \( \mathcal{W}_j|_{W_j} \) denotes the restriction of \( \mathcal{W}_j \) to \( W_j \).

As in the previous section, we model the \replaced{wavelet-responses}{responses of the wavelet filters} with Gaussian mixture \replaced{experts}{activations}.
In detail, let \( x \) be a signal in \( \R^n \) and thus, \( \mathcal{W}_j : \R^n \to \R^n \).
Then, the model reads\footnote{For simplicity, we discard the partition function \( Z \)\deleted{, which is independent of \( t \)}.}
\begin{equation}
	f^{\mathrm{wave}}_\theta(x, t) \propto \prod_{j=1}^{J} \prod_{i=1}^{n} \psi_j ((\mathcal{W}_j x)_{i}, w_j, t).
	\label{eq:wavelet gmm}
\end{equation}

Following the approach utilized in~\cref{th:diff local}, we first describe the exact form of~\eqref{eq:wavelet gmm} as a \gls{gmm} on \( \R^n \).
We denote with \( \hat{l} : \{1, \dotsc, n \} \to \{ 1, \dotsc, L \} \) a fixed but arbitrary selection from the index set \( \{ 1, \dotsc, L \} \).
In addition, the notation \( \sum_{\hat{l} = 1}^{L^n} \) indicates the summation over all \( L^n \) possible selections and for the following proof we define \( \mu_{\R^n}(\hat{l}) \coloneqq (\mu_{\hat{l}(1)}, \mu_{\hat{l}(2)},\dotsc,\mu_{\hat{l}(n)})^\top \in \R^n \).
\begin{theorem}
	\( f^{\mathrm{wave}}_\theta(\argm, t) \) is a homoscedastic \gls{gmm} on \( \R^n \) with precision matrix
	\begin{equation}
		(\Sigma_{\R^n})^{-1} = \frac{1}{\sigma_0^2} \sum_{j=1}^J \mathcal{W}_j.
	\end{equation}
\end{theorem}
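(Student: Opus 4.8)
The plan is to mirror the argument of \cref{th:gmm}: expand the product of experts in~\eqref{eq:wavelet gmm} into a sum of Gaussian components, indexed by a choice of mixture index for each scalar factor, and then read off the common precision matrix from the quadratic part of the exponent. Writing out each expert via~\eqref{eq:expert}, the product \( \prod_{j=1}^J \prod_{i=1}^n \psi_j((\mathcal{W}_j x)_i, w_j, t) \) becomes a sum over all index selections; a single component carries a weight \( \prod_{j,i} w_{j\hat{l}_{ji}} \) (where \( \hat{l}_{ji} \) picks the mixture index of the factor \( (j,i) \)) and an exponent of the form \( -\sum_{j=1}^J \frac{1}{2\sigma_j^2(t)} \sum_{i=1}^n ((\mathcal{W}_j x)_i - \mu_{\hat{l}_{ji}})^2 \). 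As in \cref{th:gmm}, I would identify the precision by matching the gradient in \( x \) of this quadratic form.

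The key step---and the only place the wavelet structure enters---is to rewrite the purely quadratic part using the projection properties in~\eqref{eq:projection}. For each \( j \), \( \sum_{i=1}^n ((\mathcal{W}_j x)_i)^2 = \norm{\mathcal{W}_j x}^2 = \langle x, (\mathcal{W}_j)^\ast \mathcal{W}_j x\rangle = \langle x, \mathcal{W}_j x\rangle \), where the last equality uses self-adjointness and idempotency. Summing over \( j \) with the weights \( 1/\sigma_j^2(t) \) shows the quadratic form equals \( \tfrac{1}{2}\langle x, \bigl(\sum_{j=1}^J \tfrac{1}{\sigma_j^2(t)}\mathcal{W}_j\bigr) x\rangle \), so the precision is \( \sum_{j=1}^J \tfrac{1}{\sigma_j^2(t)}\mathcal{W}_j \); evaluated at the base variance \( \sigma_j^2(0)=\sigma_0^2 \) this reduces to the claimed \( (\Sigma_{\R^n})^{-1} = \tfrac{1}{\sigma_0^2}\sum_{j=1}^J \mathcal{W}_j \). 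Because the means \( \mu_{\hat{l}_{ji}} \) enter only through the linear term, the precision is identical for every component---this is exactly the homoscedasticity claim. The component means then follow as in \cref{th:gmm} by left-multiplying \( \Sigma_{\R^n} \) onto the linear coefficient \( \sum_j \tfrac{1}{\sigma_j^2(t)}\mathcal{W}_j m_j \) (with \( (m_j)_i = \mu_{\hat{l}_{ji}} \)), again using self-adjointness to move \( \mathcal{W}_j \) off \( x \).

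I expect the main obstacle to be the bookkeeping around degeneracy rather than the algebra. Unlike the patch basis, the detail projections \( \mathcal{W}_j \) need not sum to \( \mathrm{Id}_{\R^n} \) (the approximation space is omitted), so \( \sum_{j=1}^J \mathcal{W}_j \) is itself an orthogonal projection onto \( \bigoplus_{j} W_j \) and the precision is singular on the orthogonal complement. The resulting object is therefore a Gaussian \emph{supported on the subspace} \( \bigoplus_{j} W_j \), to be read in the sense of the disintegration theorem flagged in the preliminaries, with normalization governed by a pseudo-determinant as in \cref{cor:marginal}. Here the mutual orthogonality of the detail spaces~\eqref{eq:orthogonal spaces}, i.e.\ \( \mathcal{W}_j \circ \mathcal{W}_i = 0 \) for \( i \neq j \), plays the role that filter orthogonality~\eqref{eq:ortho} played in the patch case: it simultaneously diagonalizes the precision (eigenvalue \( 1/\sigma_j^2(t) \) on \( W_j \)), which makes the spectral bookkeeping transparent and sets up the analogue of \cref{th:diff local} for the diffusion step.
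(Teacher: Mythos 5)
Your proposal is correct and follows essentially the same route as the paper: expand the product of experts into a sum over all mixture-index selections, read off the common precision \( \frac{1}{\sigma_0^2}\sum_{j}(\mathcal{W}_j)^\ast\mathcal{W}_j \) from the quadratic part of the exponent, and simplify via self-adjointness and idempotency to \( \frac{1}{\sigma_0^2}\sum_j \mathcal{W}_j \). Your additional remarks on the singularity of the precision off \( \bigoplus_j W_j \) and the subspace-supported (pseudo-determinant) reading of the Gaussian are consistent with what the paper defers to its preliminaries and to \cref{cor:marginal}, and do not alter the argument.
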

\begin{proof}
By definition for~\( t = 0\), we have
\begin{equation}
	\begin{aligned}
		&f^{\mathrm{wave}}_\theta(x, 0) \\
		&\propto \prod_{i=1}^n \prod_{j=1}^J \sum_{l=1}^{L} \frac{w_{jl}}{\sqrt{2\pi\sigma_0^2}} \exp\left(-\frac{((\mathcal{W}_j x)_i - \mu_l)^2}{2\sigma_0^2}\right).
	\end{aligned}
	\label{eq:conv gmm}
\end{equation}
First, we expand the product over the pixels
\begin{equation}
	\begin{aligned}
		&f^{\mathrm{wave}}_\theta(x, 0) \propto \prod_{j=1}^J \sum_{\hat l = 1}^{L^n} (2\pi\sigma_0^2)^{-\frac{n}{2}} \overline{w}_{j\hat l} \\
		&\times\exp\left(-\frac{\norm{(\mathcal{W}_j x) - \mu_{\R^n}(\hat{l})}^2}{2\sigma_0^2}\right)
	\end{aligned}
\end{equation}
using the index map~\( \hat{l} \) and \(\overline{w}_{j\hat{l}} = \prod_{i=1}^I w_{j\hat{l}(i)} \).
Further, expanding over the features results in
\begin{equation}
	\begin{aligned}
		&f^{\mathrm{wave}}_\theta(x, 0) \propto \sum_{\hat\imath = 1}^{(L^n)^J}(2\pi\sigma_0^2)^{-\frac{nJ}{2}} \overline{\overline{w}}_{\hat\imath(i,j)} \\
		&\times\exp\left(-\frac{1}{2\sigma_0^2}\sum_{j=1}^J \norm{(\mathcal{W}_j x) - \mu_{\R^n,\hat{\imath}(i, j)}}^2\right),
	\end{aligned}
	\label{eq:expanded}
\end{equation}
where \( \overline{\overline{w}}_{\hat\imath(i,j)}=\prod_{j=1}^{J}\prod_{i=1}^{I} w_{\hat\imath(i,j)} \).
Notice that~\eqref{eq:expanded} describes a homoscedastic \gls{gmm} on \( \R^n \) with precision matrix
\begin{equation}
	(\Sigma_{\R^n})^{-1} = \frac{1}{\sigma_0^2} \sum_{j=1}^J \bigl(\mathcal{W}_j\bigr)^\ast \mathcal{W}_j.
	\label{eq:prec rn}
\end{equation}
Using the properties of a projection~\eqref{eq:projection} and the orthogonality property \( W_i \perp W_j \) for \( i \neq j \), this simplifies to
\begin{equation}
	(\Sigma_{\R^n})^{-1} = \frac{1}{\sigma_0^2} \sum_{j=1}^J \mathcal{W}_j.
\end{equation}
\end{proof}
\begin{theorem}[Wavelet diffusion]
	\( f^{\mathrm{wave}}_\theta(\argm, t) \) satisfies the diffusion \gls{pde} \( (\partial_t - \Delta_1) f^{\mathrm{wave}}_\theta(\argm, t) = 0 \) if \( \sigma_j^2(t) = \sigma_0^2 + 2t \).
	\label{th:wavelet diff}
\end{theorem}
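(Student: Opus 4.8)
The plan is to follow the same strategy as in the proof of \cref{th:diff local}, now exploiting the precision matrix \( (\Sigma_{\R^n})^{-1} = \frac{1}{\sigma_0^2} \sum_{j=1}^J \mathcal{W}_j \) established in the preceding theorem. The crucial observation is that, because the detail spaces satisfy \( W_i \perp W_j \) for \( i \neq j \) and each \( \mathcal{W}_j \) is an idempotent self-adjoint projection, the sum \( \sum_{j=1}^J \mathcal{W}_j \) is itself the orthogonal projection onto the direct sum \( \bigoplus_{j=1}^J W_j \). Consequently its nonzero eigenvalues are all equal to one, and the precision is simply \( \frac{1}{\sigma_0^2} \) times a projection—an isotropic scaling on its range.

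First I would make this eigenstructure explicit: for any \( v \in W_j \) one has \( \bigl( \sum_{i=1}^J \mathcal{W}_i \bigr) v = \mathcal{W}_j v = v \) by idempotency and mutual orthogonality~\eqref{eq:projection}, while for \( v \) orthogonal to every detail space the sum annihilates \( v \). Hence on the support \( \bigoplus_{j=1}^J W_j \) the precision acts as \( \frac{1}{\sigma_0^2}\Id \), so—restricting to the support as licensed by the subspace-support convention discussed in \cref{sec:methods}—the covariance is the isotropic \( \sigma_0^2\Id \). Next I would invoke the evolution rule \( \Sigma_{\R^n} \mapsto \Sigma_{\R^n} + 2t\Id \) recalled in \cref{ssec:diffusion empirical bayes}, which on the level of eigenvalues sends every covariance eigenvalue \( \sigma_0^2 \) to \( \sigma_0^2 + 2t \). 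Matching this diffused covariance back to a product model of the form~\eqref{eq:wavelet gmm} at time \( t \): since the precision remains a scalar multiple of the same projection, all experts must share the common variance \( \sigma_j^2(t) = \sigma_0^2 + 2t \), which is exactly the claimed condition.

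The step that deserves the most care, and which I expect to be the main obstacle, is establishing that \( \sum_{j=1}^J \mathcal{W}_j \) is genuinely a projection with unit eigenvalues, since this is precisely where the wavelet case departs from \cref{th:diff local}. There, the filters \( k_j \) could carry arbitrary norms \( \norm{k_j}^2 \), yielding distinct constants \( c_j = \norm{k_j}^2 \); here the orthogonal projections contribute unit eigenvalues uniformly, forcing \( c_j = 1 \) for every \( j \). The argument therefore hinges entirely on the three projection properties~\eqref{eq:projection} together with the orthogonality of the detail spaces, which collapse the potentially complicated eigenvalue bookkeeping of the patch model into a single isotropic mode. Once the isotropy of the precision on its range is verified, the uniform variance rule follows immediately, completing the argument in parallel with the patch model.
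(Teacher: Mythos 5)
Your proposal is correct and follows essentially the same route as the paper: both arguments restrict the precision matrix to \( \bigoplus_{j=1}^J W_j \), use the projection properties~\eqref{eq:projection} and the mutual orthogonality of the detail spaces to see that it acts there as \( \frac{1}{\sigma_0^2}\mathrm{Id} \), and then apply the covariance evolution \( \Sigma \mapsto \Sigma + 2t\,\mathrm{Id} \) to conclude \( \sigma_j^2(t) = \sigma_0^2 + 2t \). You simply spell out the eigenstructure step (unit eigenvalues of the summed projections, hence \( c_j = 1 \) in contrast to \( c_j = \norm{k_j}^2 \) in \cref{th:diff local}) that the paper leaves implicit.
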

\begin{proof}
	Notice that, using the properties of a projection~\eqref{eq:projection}, \( (\Sigma_{\R^n})^{-1}|_{\oplus_{j=1}^J W_j} = \frac{1}{\sigma^2}\mathrm{Id}_{\oplus_{j=1}^J W_j} \).
	Thus, on \( \bigoplus\limits_{j=1}^J W_j \), in analogy to the \replaced{model based on filter-responses}{patch model}, it suffices to adapt the variance of the one-dimensional \glspl{gmm} \( \psi_j \) with \( \sigma_0^2 \mapsto \sigma_0^2 + 2t \).
\end{proof}
We can endow the different \replaced{sub-bands}{levels} of the wavelet transformation with scalars to weight their influence as follows:
Replacing \( \mathcal{W}_j \) with \( \lambda_j \mathcal{W}_j \) in~\eqref{eq:prec rn} (the derivation does not change up to this point), we find that \( (\Sigma_{\R^n})^{-1} = \sum_{j=1}^J \frac{\lambda_j^2}{\sigma_0^2} \mathcal{W}_j \).
Then, the diffusion \gls{pde} is satisfied when \( \sigma_j^2(t) = \sigma_0^2 + 2t \lambda_j^2 \).
Thus, the scaling parameters \( \lambda_{\added{j}} \) are analogous to the filter-norm \( \norm{k_j} \) in~\cref{th:diff local}.

We briefly discuss the extension to two-dimensional signals:
Let \( x \in \R^{n \times n} \) be a two-dimensional signal.
\( \mathcal{W}_j^{d} : \R^{n \times n} \to \R^{n \times n} \) is a linear operator corresponding to the \( j \)-th detail level (\( j \in \{ 1, \dotsc, J \} \) where \( J \in \mathbb{N} \) is the coarsest scale in the decomposition) in the wavelet decomposition in direction \( d \).
We denote the (now three) \emph{detail spaces} at scale \( j \) as \( W_{j}^{d} \), where \( d \in \{ \mathbf{v}, \mathbf{h}, \mathbf{d} \} \) indexes the direction (\(\mathbf{v}\)ertical, \(\mathbf{h}\)orizontal, and \(\mathbf{d}\)iagonal.).
Our model accounts for the directional sub-bands with individual \gls{gmm} \replaced{experts}{potentials}, i.e.\ every \( \psi_j \) is replaced by a triplet \( \psi_j^{d} \) endowed with weights \( w_j^d \in \triangle^{n_w} \) for \( d \in \{ \mathbf{v}, \mathbf{h}, \mathbf{d} \} \).
Then, the entire previous discussion holds, where in particular \( W_i^{d} \perp W_j^{\tilde{d}} \) for all \( d, \tilde{d} \in \{ \mathbf{v}, \mathbf{h}, \mathbf{d} \} \) and all \( i \neq j \).
Since the operators \( \mathcal{W}_j^d \) are derived from generating sequence \( h \in \R^k \) (see~\cref{sssec:learning wavelets}), the learnable parameters are summarized as \( \theta = \{ h, \{\lambda_j^d\}_{j,d}, \{ w_j^d \}_{j,d} \} \).
\subsubsection{Interpretation as Diffusion Wavelet Shrinkage}
Wavelet shrinkage is a popular class of denoising algorithms.
Starting from the seminal work of~\cite{donoho_ideal_1994,donoho_adapting_1995,donoho_denoising_1995}, a vast literature is dedicated to finding optimal shrinkage parameters for wavelet-based denoising (see, e.g.\ \cite{simoncelli_noise_1996,chambolle_nonlinear_1998,chipman_adaptive_1997,clyde_multiple_1998,crouse_wavelet-based_1998,JANSEN199733} and the references therein).
In what follows, we briefly describe \deleted{the wavelet transformation and} historical approaches to estimating shrinkage \replaced{parameters}{coefficients}.
\deleted{%
	For a more detailed discussion about the two-dimensional wavelet transformation, we refer the reader to.
}

\replaced{%
	The key motivation behind wavelet shrinkage denoising algorithms is the observation that wavelet coefficients of natural images are sparse, wheres the wavelet coefficients of noisy images are densely filled with \enquote{small} values.
	Thus, a straight forward denoising algorithm might be to calculate the wavelet coefficients, \enquote{shrink} small coefficients towards zero, and calculate the inverse wavelet transform of the shrank coefficients.%
}
{%
	The denoised signal is given by \( \hat{x} = \mathcal{W}^\ast \mathcal{T}_\tau \mathcal{W} x \) where \( \mathcal{T}_\tau : \R^{n \times n} \to \R^{n \times n}\) is a shrinkage operator with thresholding parameter \( \tau > 0 \).
	The observation that the wavelet coefficients \( y = \mathcal{W}x \) are sparse for noise-free natural images motivates wavelet thresholding:%
}
Popular shrinkage operators include the soft-shrinkage \( x \mapsto \operatorname{sgn}(x) \max \{ |x| - \tau, 0 \} \) and the hard-shrinkage \( x \mapsto x \replaced{\chi}{\mathds{1}}_{\{|x| > \tau\}} \).
It is easy to see that these operators promote sparsity in the wavelet coefficients, as they correspond to the proximal maps w.r.t.\ \( \tau\norm{\argm}_1 \) and \( \tau\norm{\argm}_{0} \) respectively.
\added{%
	Here, \( \tau > 0 \) is a thresholding parameter that has to be chosen depending on the noise level.%
}

Historically, research for wavelet shrinkage models has focused on finding the optimal shrinkage parameter \( \tau \) (w.r.t.\ some risk, e.g.\ the squared error), assuming a particular choice of the shrinkage operator (e.g.\ the soft-shrinkage).
Popular selection methods include \emph{VisuShrink}~\cite{donoho_ideal_1994} and \emph{SureShrink}~\cite{donoho_adapting_1995}.
The former is signal independent and the threshold is essentially determined by the dimensionality of the signal as well as the (assumed known) noise \replaced{level}{variance}.
In contrast, the latter chooses the thresholding parameter depending on the energy in a particular sub-band and does not depend on the dimensionality of the signal explicitly.
The \emph{BayesShrink}~\cite{chang_adaptive_2000} method is also sub-band adaptive, and the authors provide expressions (or at least good approximations) for the optimal thresholding parameter under a generalized Gaussian prior on the wavelet coefficients.
In particular, they rely on classical noise \added{level} estimation techniques to fit the generalized Gaussian to the wavelet coefficients (of the noisy image) and arrive at a simple expression for a sub-band dependent threshold.

The general methodology outlined in the previous section allows us to take a different approach:
Instead of fixing the thresholding function and estimating the threshold solely on the corrupted image, we instead propose to learn the distribution of wavelet coefficients in different sub-bands for all noise \replaced{levels}{scales} \( \sigma > 0 \).
Notice that an empirical Bayes step on the wavelet coefficients under our model corresponds to applying a point-wise non-linearity.

In contrast to the traditional wavelet shrinkage, our model does not prescribe a shrinkage function for which an optimal parameter has to be estimated for different noise levels.
Rather, by learning the distribution of the wavelet coefficients at \enquote{all} noise \replaced{levels}{scales}, we have access to an \gls{mmse} optimal \enquote{shrinkage} function view of the empirical Bayes step on the \replaced{experts}{activation}.
In addition, our wavelet prior can be used in more general inverse problems whereas classical shrinkage methods are only applicable to denoising (although the denoising engine could be used in regularization by denoising~\cite{romano_little_2017} or plug-and-play~\cite{venkatakrishnan_plug-and-play_2013} approaches).

\subsection{Convolutional Model}%
\label{ssec:conv model}
The \replaced{model based on filter-responses discussed in~\cref{ssec:patch model}}{patch-based model~\eqref{eq:gmdm patch}} can not account for the correlation of overlapping patches when used for whole image restoration~\cite{zoran_learning_2011,RoBl09}.
Similarly, the \replaced{model based on wavelet-responses}{wavelet-based model} is limited in expressiveness since it only models the distribution of a scalar random variable per sub-band.
In what follows, we describe a convolutional \gls{pogmdm} that avoids the extraction and combination of patches in patch-based image priors and can account for the local nature of low-level image features.
The following analysis assumes vectorized images \( x \in \R^n \) with \( n \)~pixels; the generalization to higher dimensions is \replaced{straight forward}{straightforward}.

In analogy to the product-of-experts\replaced{-type model acting on filter-responses}{patch-based model}, here we extend the fields-of-experts model~\cite{RoBl09} to our considered diffusion setting by accounting for the diffusion time~\( t \) and obtain
\begin{equation}
	f_\theta^{\mathrm{conv}}(x, t) \propto \prod_{i=1}^n \prod_{j=1}^J \psi_j((K_j x)_i, w_{j}, t).
	\label{eq:gmdm}
\end{equation}
Here, each expert~\( \psi_j \) models the density of convolution \emph{features} extracted by convolution kernels~\( {\{ k_j \}}_{j=1}^J \) of size \( a = b \times b \), where
\( {\{ K_j \}}_{j=1}^J \subset \R^{n \times n}\) are the corresponding matrix representations.
Further, \( w_j \in \triangle^L \) are the \replaced{weights of the components of the \( j \)-th expert \( \psi_j \) (see~\eqref{eq:expert})}{weight the components of each expert~\( \psi_j \) as in~\eqref{eq:expert}}.
As \replaced{with the models based on filter- and wavelet-responses}{in the filter- and wavelet-based models}, it is sufficient to adapt the variances~$\sigma_j^2(t)$ by the diffusion time as the following analysis shows.

\eqref{eq:gmdm} again describes a homoscedastic \gls{gmm} on \( \R^n \) with precision \( (\Sigma_{\R^n})^{-1} = \frac{1}{\sigma_0^2} \sum_{j=1}^J K_j^\top K_j \).
This can be seen by essentially following the derivation of~\eqref{eq:prec rn} in~\cref{th:wavelet diff}, at which point we did not yet exploit the special structure of \( \mathcal{W}_j : \R^n \to \R^n \) (i.e.\ it may be an arbitrary linear operator).

In order to derive conditions under which~\eqref{eq:gmdm} fulfills the diffusion \gls{pde}, we begin by fixing the convolutions as cyclic, i.e., \( K_j x \equiv k_j *_n x \), where \( *_n \) denotes a 2-dimensional convolution with cyclic boundary conditions.
Due to the assumed boundary conditions, the Fourier transformation \( \mathcal{F} \) diagonalizes the convolution matrices: \( K_j = \mathcal{F}^* \diag(\mathcal{F}k_j) \mathcal{F} \).
Thus, the precision matrix can be expressed as
\begin{equation}
	(\Sigma_{\R^n})^{-1} = \mathcal{F}^*\diag\biggl(\sum_{j=1}^J \frac{|\mathcal{F}k_j|^2}{\sigma^2}\biggr) \mathcal{F}
	\label{eq:fourier diagonalization}
\end{equation}
where we used the fact that \( \mathcal{F}\mathcal{F}^* = \mathrm{Id}_{\R^n} \) and \( \replaced{\operatorname{conj}(z)}{\bar{z}}z = |z|^2 \) (here \( |\argm| \) denotes the complex modulus acting element-wise on its argument).
To get a tractable analytic expression for the variances \( \sigma_j^2(t) \), we further assume that the spectra of \( k_j \) have disjoint support, i.e.
\begin{equation}
	\Gamma_i \cap \Gamma_j = \emptyset\ \text{ if }\ i\neq j,
	\label{eq:disjoint}
\end{equation}
where \( \Gamma_j = \operatorname{supp} \mathcal{F}k_j \).
\deleted{%
	Note that, in analogy to the pair-wise orthogonality of the filters in the patch model~\eqref{eq:ortho}, from this it immediately follows that \( \langle \mathcal{F}k_j, \mathcal{F}k_i \rangle_{\added{\mathbb{C}^n}} = 0 \) when \( i \neq j \).
	Thus, in some sense, the convolutional model becomes a patch-based model in Fourier space.
}
In addition, we assume that the magnitude is constant over the support, i.e.
\begin{equation}
	|\mathcal{F} k_j| = \xi_j \replaced{\chi_{\Gamma_j}}{\mathds{1}_{\Gamma_j}},
	\label{eq:constant}
\end{equation}
where \added{$\xi_j \in \R$ is the magnitude and} \( \replaced{\chi_A}{\mathds{1}_A} \) is the characteristic function of the set \( A \)\added{
	\begin{equation}
		\chi_A(x) = \begin{cases}
			1 & \text{if}\ x \in A,\\
			0 & \text{else}.
		\end{cases}
	\end{equation}
}
\begin{theorem}[Convolutional Diffusion]
	Under assumptions~\eqref{eq:disjoint} and~\eqref{eq:constant}, \( f^{\mathrm{conv}}_\theta(\argm, t) \) satisfies the diffusion \gls{pde} \( (\partial_t - \Delta_1) f^{\mathrm{conv}}_\theta(\argm, t) = 0 \) if \( \bar{\sigma}_j^2(t) = \sigma_0^2 + \xi_j^2 2t \).
\end{theorem}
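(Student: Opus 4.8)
The plan is to mirror the argument of~\cref{th:diff local}, but to carry it out in the Fourier domain, where~\eqref{eq:fourier diagonalization} already diagonalizes the precision matrix. First I would substitute the constant-magnitude assumption~\eqref{eq:constant} into the diagonal symbol \( \sum_{j=1}^J |\mathcal{F}k_j|^2/\sigma_0^2 \). Because the supports \( \Gamma_j \) are pairwise disjoint by~\eqref{eq:disjoint}, at every frequency at most one summand is nonzero, so the symbol collapses to the piecewise-constant function taking the value \( \xi_j^2/\sigma_0^2 \) on \( \Gamma_j \) and \( 0 \) on the uncovered frequencies \( \Gamma_0 = (\bigcup_{j}\Gamma_j)^{\mathrm{c}} \). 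This is the exact analogue of the trivial eigendecomposition of the precision in~\cref{th:diff local}: the orthogonal frequency blocks \( \Gamma_j \) play the role of the orthogonal filter directions, and \( \xi_j^2 \) plays the role of \( \norm{k_j}^2 \).

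Next I would read off the covariance and apply diffusion block-wise. On the subspace \( S \) spanned (in Fourier) by the covered frequencies \( \bigcup_j \Gamma_j \), the precision is \( \mathcal{F}^*\diag(\ldots)\mathcal{F} \) with a strictly positive symbol, so \( \Sigma_{\R^n} \) restricted to \( S \) is block-diagonal with eigenvalue \( \sigma_0^2/\xi_j^2 \) on the \( \Gamma_j \)-block. As recalled in~\cref{ssec:diffusion empirical bayes}, diffusion acts by \( \Sigma_{\R^n} \mapsto \Sigma_{\R^n} + 2t\,\Id_{\R^n} \); since \( \mathcal{F} \) is unitary this remains diagonal in Fourier and acts block-wise, sending the \( \Gamma_j \)-eigenvalue \( \sigma_0^2/\xi_j^2 \mapsto \sigma_0^2/\xi_j^2 + 2t = (\sigma_0^2 + 2t\xi_j^2)/\xi_j^2 \). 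Inverting, the diffused precision symbol equals \( \xi_j^2/(\sigma_0^2 + 2t\xi_j^2) \) on \( \Gamma_j \). Matching this against the symbol \( \xi_j^2/\bar{\sigma}_j^2(t) \) produced by assigning the \( j \)-th expert the variance \( \bar{\sigma}_j^2(t) \) forces \( \bar{\sigma}_j^2(t) = \sigma_0^2 + \xi_j^2 2t \), which is the claimed schedule.

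The step I expect to require the most care is the treatment of the uncovered frequencies \( \Gamma_0 \), where the precision symbol vanishes and \( \Sigma_{\R^n} \) is consequently infinite (the model is flat, hence improper, in those directions). Here I would invoke the convention fixed in the preliminaries and restrict the entire analysis to the support subspace \( S \) via the disintegration theorem~\cite{Rao1973}, so that \( \Id_{\R^n} \) is effectively replaced by \( \Id_S \) and the block-wise diffusion argument applies verbatim; the complementary directions carry no probability mass and do not enter the \gls{pde} on \( S \). A minor point worth recording is that only \( |\mathcal{F}k_j|^2 \) enters~\eqref{eq:fourier diagonalization}, so the phases of the spectra are irrelevant and~\eqref{eq:constant} constrains exactly the quantity that matters, precisely as \( \norm{k_j}^2 \) did in~\cref{th:diff local}.
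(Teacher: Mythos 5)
Your proposal is correct and follows essentially the same route as the paper's proof: Fourier diagonalization of the precision via~\eqref{eq:fourier diagonalization}, the disjoint-support assumption~\eqref{eq:disjoint} to decompose the symbol frequency-block-wise, the constant-magnitude assumption~\eqref{eq:constant} to identify the eigenvalue \( \xi_j^2/\sigma_0^2 \) on each \( \Gamma_j \), and then matching the diffused covariance \( \Sigma_{\R^n} \mapsto \Sigma_{\R^n} + 2t\,\Id \) against the expert variances. Your explicit handling of the uncovered frequencies via restriction to the support subspace is a point the paper leaves implicit (relying on the disintegration convention from the preliminaries), but it does not change the argument.
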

\begin{proof}
	In analogy to~\cref{th:diff local}, with~\eqref{eq:fourier diagonalization} \( \mathcal{F}^*\diag\big(\sum_{j=1}^J \frac{\sigma^2}{|\mathcal{F}k_j|^2} \big) \mathcal{F} \mapsto \mathcal{F}^*\diag\left(\frac{\sigma^2 + 2t\sum_{j=1}^J|\mathcal{F}k_j|^2 }{\sum_{j=1}^J |\mathcal{F}k_j|^2}\right) \mathcal{F} \) under diffusion.
	The inner sum decomposes as
	\begin{equation}
		\frac{\sigma_0^2 + 2t \sum_{j=1}^J |\mathcal{F}k_j|^2}{\sum_{j=1}^J |\mathcal{F}k_j|^2} = \sum_{j=1}^J \frac{\sigma_0^2 + 2t |\mathcal{F}k_j|^2}{|\mathcal{F}k_j|^2}
	\end{equation}
	using~\eqref{eq:disjoint}, and with~\eqref{eq:constant} the numerator reduces to \( \sigma_0^2 + 2t\xi_j^2 \).
\end{proof}

We emphasize that the convolutional model~\eqref{eq:gmdm} is distinctly different from the \replaced{model based on filter-responses discussed in~\cref{ssec:patch model}}{patch model~\eqref{eq:gmdm patch}}.
In particular, the one-dimensional \gls{gmm} experts \( \psi_j(\argm, w_j, t) \) do \emph{not} model the distribution of the filter\added{-}responses of their corresponding filter kernels \( k_j \), but account for the non-trivial correlation of overlapping patches.
We discuss this in more detail in~\cref{ssec:conv v patch}.
\subsubsection{Shearlets}
In the previous section, we derived abstract conditions under which a product of one-dimensional \gls{gmm} experts, with each expert modeling the distribution of convolutional features, can obey the diffusion \gls{pde}.
In particular, we derived that the spectra of the corresponding convolution filters must be non-overlapping and constant on their support.
Naturally, the question arises how such a filter bank can be constructed.
Luckily, the shearlet transformation~\cite{kutyniok_shearlets_2012} (and in particular the non-separable version of~\cite{lim_nonseparable_2013}) fulfills these conditions.
As an extension to the wavelet transformation, the shearlet transformation~\cite{kutyniok_shearlets_2012} can represent directional information in multidimensional signals via shearing.
Here, we consider the non-separable digital shearlet transformation~\cite{lim_nonseparable_2013}, whose induced frequency tiling is shown schematically in~\cref{fig:shearlet partitioning}.
In particular, the frequency plane is partitioned into non-overlapping cones indexed by the scaling and shearing parameters described in the next paragraph.
\begin{figure}
	\centering
	\begin{tikzpicture}
		\draw[->] (-3, 0) -- (3, 0);
		\draw[->] (0, -3) -- (0, 3);
		\fill[draw,gray, opacity=.5] (1, -.3) --  (1, .3) -- (2.5, .6) -- (2.5, -.6) -- cycle;
		\fill[draw,gray, opacity=.5] (-1, -.3) --  (-1, .3) -- (-2.5, .6) -- (-2.5, -.6) -- cycle;
		\fill[draw,gray!20!black, opacity=.5] (1, .3) --  (1, .9) -- (2.5, 1.8) -- (2.5, .6) -- cycle;
		\fill[draw,gray!20!black, opacity=.5] (-1, -.3) --  (-1, -.9) -- (-2.5, -1.8) -- (-2.5, -.6) -- cycle;
	\end{tikzpicture}
	\caption{%
		Schematic illustration of the cone-like frequency tiling of the non-separable shearlet transformation~\cite{lim_nonseparable_2013}.
		The shearlets are constructed such that their spectra are non-overlapping.
	}%
	\label{fig:shearlet partitioning}
\end{figure}
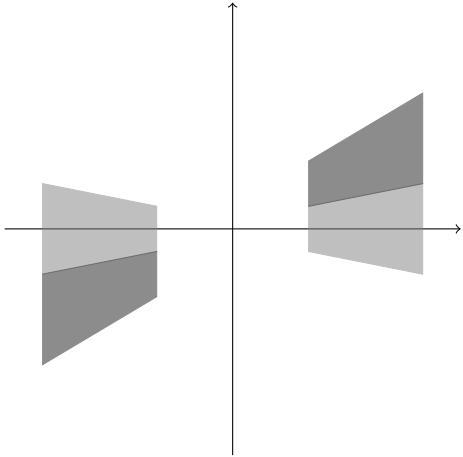

We briefly describe our setup but refer the interested reader to~\cite{lim_nonseparable_2013,kutyniok_shearlab_2016} for more details.
We construct a digital shearlet system, specified by the positive scaling integer \( j = 0,\dotsc,J \) the translations \( m \in \mathbb{Z}^2 \) and shearing \( |k| \leq \lceil 2^{\lfloor \frac{j}{2} \rfloor}\rceil \).
The system is constructed by a one-dimensional low-pass filter \( h_1 \) and a two-dimensional directional filter \( P \).
Given one-dimensional filters \( h_{J-j/2} \) and \( g_{J-j} \)  derived from \( h_1 \) in a wavelet multiresolution analysis, let \( W_j = g_{J-j} \otimes h_{J-j/2} \) and let \( p_j \) be the Fourier coefficients of \( P \) at scaling level \( j \).
Then, the system is constructed by
\begin{equation}
	\gamma_{j,k} = \Bigl[ \Bigl( S_k\bigl( (p_j * W_j)_{\uparrow_{2^{j/2}}} *_1 h_{j/2} \bigr) \Bigr) *_1 \replaced{\overleftarrow{h}_{j/2}}{\bar{h}_{j/2}} \Bigr]_{\downarrow_{2^{j/2}}}.
\end{equation}
Here, \( \uparrow_{a} \) and \( \downarrow_{a} \) are \( a \)-fold up- and down-sampling operators, and \( \replaced{\overleftarrow{(\cdot)}}{\bar{(\cdot)}} \) indicates sequence reversal \( \replaced{\overleftarrow{(\cdot)}}{\bar{(\cdot)}}(n) = (\cdot)(-n) \), and \( S_k \) is a shearing operator.
The digital shearlet transformation of an image \( x \in \R^{n \times n} \) is then given by
\begin{equation}
	\lambda_{j,k} \operatorname{conj} (\gamma_{j, k}) * x
\end{equation}
where \( \lambda_{j,k} > 0 \) are learnable weights \added{that reflect} the importance of the respective scale and shear level.
The learnable weights \( \lambda_{j,k} > 0 \) are easily accounted for in the diffusion models by adapting \( \xi_{j, k} \) in~\eqref{eq:constant} (where we have swapped the index \( j \) for a two-index \( j, k \) to account for the scales and shearing levels).
Thus, we can summarize the learnable parameters for the \replaced{model based on shearlet-responses}{shearlet model} as \( \theta = \{ h_1, P, \{\lambda_{j,k}\}_{j,k}, \{w_{j,k}\}_{j,k} \} \).
\section{Numerical Results}%
\label{sec:numerics}
In this section, we first detail the setup for numerical optimization.
In particular, we discuss how we can learn the one-dimensional \gls{gmm} experts along with the corresponding transformation (filters, wavelets, and shearlets) jointly.
Then, we show results for denoising utilizing a simple one-step empirical Bayes scheme as well as denoising algorithms derived for classical diffusion models.
In addition, we show that we can use our models for noise \added{level} estimation and blind heterosceda\added{s}tic denoising, and exploit~\cref{cor:marginal} to derive a direct sampling scheme.
\subsection{Numerical Optimization}%
\label{ssec:implementation details}
For the numerical experiments, \( f_X \) reflects the distribution of rotated and flipped \( b \times b \) patches from the \num{400} gray-scale images in the BSDS 500~\cite{martin_database_2001} training and test set, with each pixel \added{value} in the interval \( [0, 1] \).
We optimize the score matching objective function~\eqref{eq:score} using projected AdaBelief~\cite{zhuang2020adabelief} for \( \num{100000} \) steps.
We approximate the infinite-time diffusion \replaced{\gls{pde}}{process} by uniformly drawing \( \sqrt{2t} \) from the interval \( [\num{0}, \num{0.4}] \).
For the denoising experiments, we utilize the validation images from~\cite{RoBl09} (also known as \enquote{Set68}).
Due to computational constraints, we utilize only the first \num{15} images of the dataset according to a lexicographic ordering of the filenames.
In addition, our wavelet- and shearlet-toolboxes only allow the processing of square images.
To avoid boundary artifacts arising through padding images to a square, we only utilize the central region of size \( \num{320} \times \num{320} \).

For all experiments, \( \psi_j \) is a \( L = \num{125} \) component \gls{gmm}, with equidistant means \( \mu_l\) in the interval \( [-\eta_j, \eta_j] \) (we discuss the choice of \( \eta_j \) for the different models in their respective sections).
To support the uniform discretization of the means, the standard deviation of the \( j \)-th experts is \( \sigma_{0} = \frac{2\eta_j}{L - 1} \).
In the one-dimensional \gls{gmm} backbone of all models, we have to project a weight vector \( w \in \R^L \) onto the unit simplex \( \triangle^L \).
We realize the projection \( \proj_{\triangle^{L}} : \R^L \to \R^L \) with the sorting-based method proposed by~\cite{Held1974}, which is summarized in~\cref{alg:simplex proj}.
In addition, we further assume that the one-dimensional \gls{gmm} experts are symmetric around \( 0 \), i.e.\ that the weights \( w_j \) are in the set \( \{ x \in \R^L : (x \in \triangle^L) \wedge (x = \replaced{\overleftarrow{x}}{\bar{x}}) \} \).
We implement by storing only \( \lceil L / 2 \rceil \) weights, and mirroring the tail of \( \lceil L / 2 \rceil - 1 \) elements prior to the projection algorithm and function evaluations.
To ensure that the one-dimensional \gls{gmm} experts are sufficiently peaky around zero, we always choose \( L \) to be odd.

In the next sections, we detail the constraints the building blocks of the learned transformations have to fulfill and how to satisfy them in practice.
\subsubsection{Learning Orthogonal Filters}
Let \( K = [k_1, k_2, \dotsc, k_J] \in \R^{a \times J} \) denote the matrix obtained by horizontally stacking the filters.
We are interested in finding
\begin{equation}
	\operatorname{proj}_{\mathcal{O}}(K) \replaced{=}{\in} \argmin_{M \in \mathcal{O}} \norm{M - K}_F^2
\end{equation}
where \( \mathcal{O} = \{ X \in \R^{a \times J} : X^\top X = D^2 \} \), \( D = \diag(\lambda_1,\lambda_2,\dotsc,\lambda_J) \) is diagonal, and \( \norm{\argm}_F \) is the Frobenius norm.
Since \( \operatorname{proj}_{\mathcal{O}}(K)^\top \operatorname{proj}_{\mathcal{O}}(K) = D^2 \) we can represent it as \( \operatorname{proj}_{\mathcal{O}}(K) = OD \) with \( O \) semi-unitary (\( O^\top O = \mathrm{Id}_{\R^J} \)).
Other than positivity, we do not place any restrictions on \( \lambda_1, \dotsc, \lambda_J \), as these are related to the precision in our model.
Thus, we rewrite the objective
\begin{equation}
	\proj_{\mathcal{O}}(K) \replaced{=}{\in} \argmin_{\substack{O^\top O = \mathrm{Id}_J \\ D = \diag(\lambda_1,\dotsc,\lambda_J)}} \mathcal{E}(O, D)
\end{equation}
where
\begin{equation}
	\mathcal{E}(O, D) \coloneqq \norm{OD - K}_F^2  = \norm{K}_F^2 - 2 \langle K, OD \rangle_F + \norm{D}_F^2,
\end{equation}
with \( \langle \argm, \argm \rangle_F \) denoting the Frobenius inner product.

We propose the following alternating minimization scheme for finding \( O \) and \( D \).
The solution for the reduced sub-problem in \( O \) can be computed by setting \( O = U \), using the polar decomposition of \( DK^\top = UP \), where \( U \in \R^{J \times a}\) is semi-unitary (\( U^\top U = \mathrm{Id}_{\R^a} \)) and \( P = P^\top \succeq 0 \).
The sub-problem in \( D \) is solved by setting \( D_{i,i} = \bigl((O^\top K)_{i,i}\bigr)_{+} \).
The algorithm is summarized in~\cref{alg:orthogonalizing}, where we have empirically observed fast convergence; \( B = 3 \) steps already yielded satisfactory results.
A preliminary theoretical analysis of the algorithm is presented in the supplementary material of our conference paper~\cite{zach_explicit_2023}.
\begin{algorithm}[t]
	\DontPrintSemicolon
	\SetKwInOut{Output}{Output}
	\SetKwInOut{Input}{Input}
	\Input{\( K = [k_1, \dotsc, k_J] \in \R^{a \times J} \), \( B \in \mathbb{N} \), \( D^{(1)} = \mathrm{Id}_J \)}
	\Output{\( O^{(B)}D^{(B)} = \proj_{\mathcal{O}}(K) \)}
	\For{\( b \in 1, \dotsc, B - 1 \)}{
		\( U^{(b)}P^{(b)} = D^{(b)}K^\top \)\tcp*{Polar decomposition}
		\( O^{(b+1)} = U^{(b)} \)\;
		\( D^{(b+1)}_{i,i} = \bigl(((O^{(b+1)})^\top K)_{i, i} \bigr)_+ \)\;
	}
	\caption{%
		Algorithm for orthogonalizing a set of filters \( K \).
	}%
	\label{alg:orthogonalizing}
\end{algorithm}

Assuming a patch size of \( a = b \times b \) we use \( J = b^2 - 1 \) filters spanning the space of zero-mean patches \( \mathfrak{Z} = \{ x \in \R^a : \langle \mathds{1}_{\R^a}, x \rangle_{\R^a} = 0 \} \).
We found that implementing \( \proj_{\mathcal{O} \cap \mathfrak{Z}} \) as \( \proj_{\mathfrak{Z}} \circ \proj_{\mathcal{O}} \), both constraints were always almost exactly fulfilled.
To ensure the correct projection, an alternative would be to utilize Dykstra's projection algorithm~\cite{Boyle1986}.
The filters are initialized by independently drawing their entries from a zero-mean Gaussian distribution with standard deviation \( b^{-1} \).
Since the filters can be freely scaled, we simply choose \( \eta_j = 1 \) for all \( j = 1, \dotsc, J \).

To visually evaluate whether our learned model matches the empirical marginal densities for any diffusion time \( t \), we plot them in~\cref{fig:patch results}.
At the top, the learned \( 7 \times 7 \) orthogonal filters \( k_j \) are depicted.
The filters bare striking similarity to the Eigenimages of the covariance matrices of~\cite[Fig. 6]{zoran_learning_2011}, who learn a \gls{gmm} directly on the space of image patches (i.e.\ without any factorizing structure).
This comes as no surprise, since the construction of the patch-model~\eqref{eq:gmdm patch} can be interpreted as \enquote{learning the Eigendecomposition}, see~\cref{th:gmm} and the proof of~\cref{th:diff local}.
The \deleted{associated} learned potential functions \( -\log \psi_j(\argm, w_j, t) \) and \replaced{activation functions \( -\nabla \log \psi_j(\argm, w_j, t) \)}{their gradient}\footnote{\added{%
		We refer to the (gradient of the) negative-log experts as \enquote{potential functions} (\enquote{activation functions}).
		This nomenclature is often used in the neural network literature, but clashes with what is typically used in the context of graphical models and Markov random fields.
}}%
\added{associated with the \( j \)-th filter} are shown below \added{the filters in~\cref{fig:patch results}}.
Indeed, the learned \replaced{potential functions}{potentials} match the \added{negative-log} empirical marginal \replaced{response histograms}{responses}
\begin{equation}
	\deleted{h_j(z, t) =} -\log \mathbb{E}_{p \sim \replaced{f_{Y_t}}{f_X}} \added{\bigl[} \delta(z - \langle k_j, p \rangle) \added{\bigr]}
\end{equation}
visualized at the bottom almost perfectly even at extremely low-density tails.
This supports the theoretical argument that \deleted{the} diffusion \deleted{process} eases the problem of density estimation outlined in the introductory sections.
\begin{figure*}
	\newlength{\hheight}
	\settoheight{\hheight}{A}
	\centering
	\includegraphics[trim=4.5cm .7cm 4cm .6cm, clip, width=\textwidth]{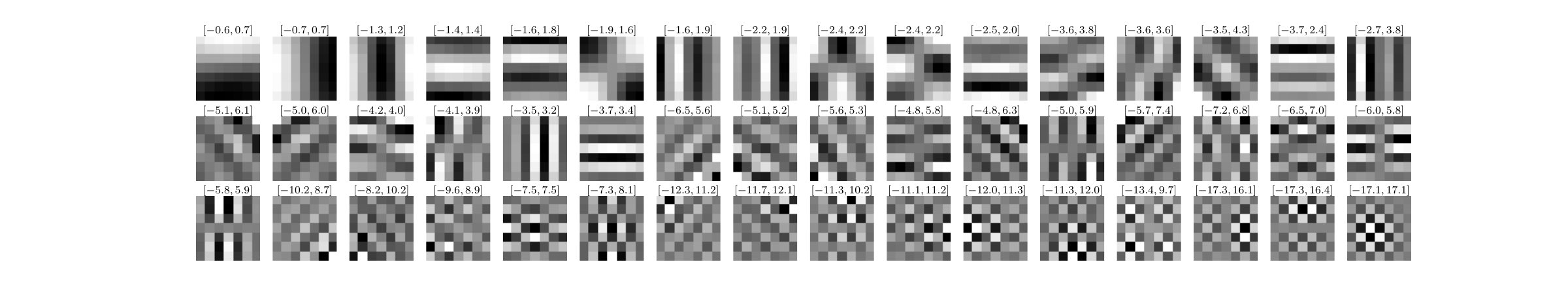}
	\includegraphics[trim=4.5cm .3cm 4cm .7cm, clip, width=\textwidth]{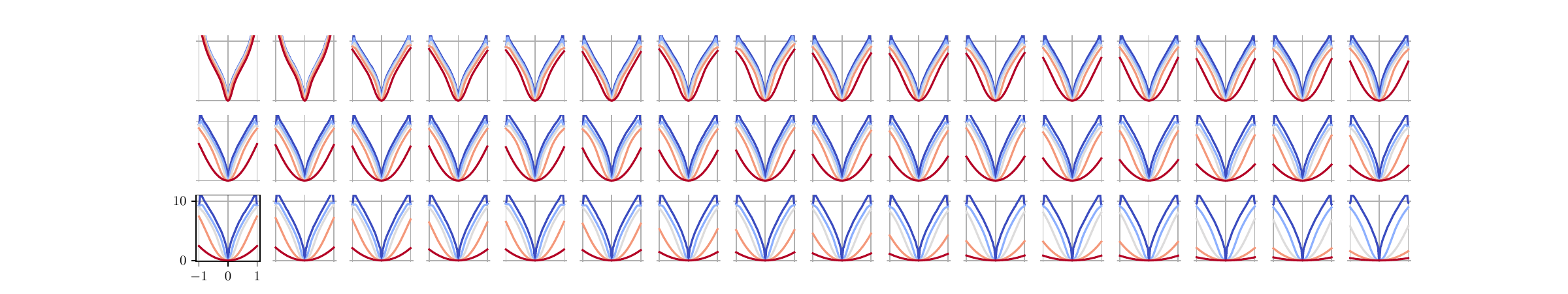}
	\includegraphics[trim=4.5cm .3cm 4cm .7cm, clip, width=\textwidth]{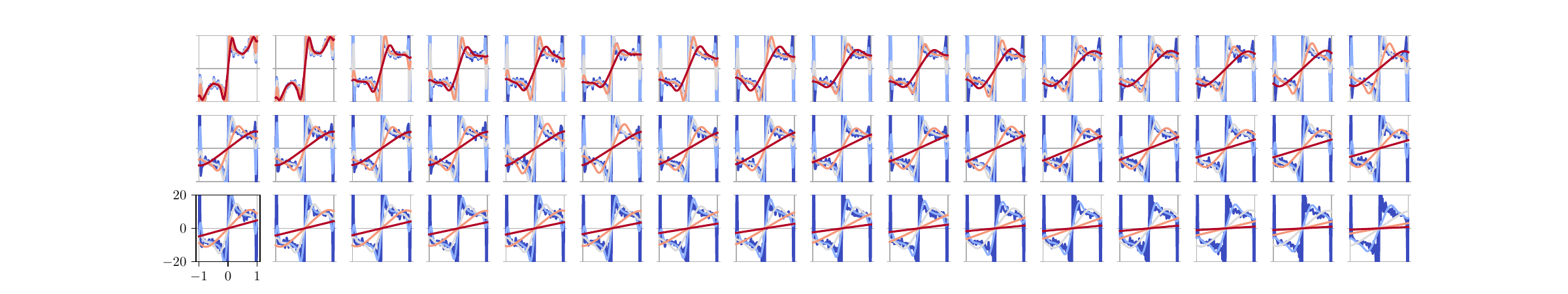}
	\rule[3mm]{\textwidth}{.3mm}\vspace*{-3mm}
	\vspace*{-3mm}
	\includegraphics[trim=4.5cm .3cm 4cm .7cm, clip, width=\textwidth]{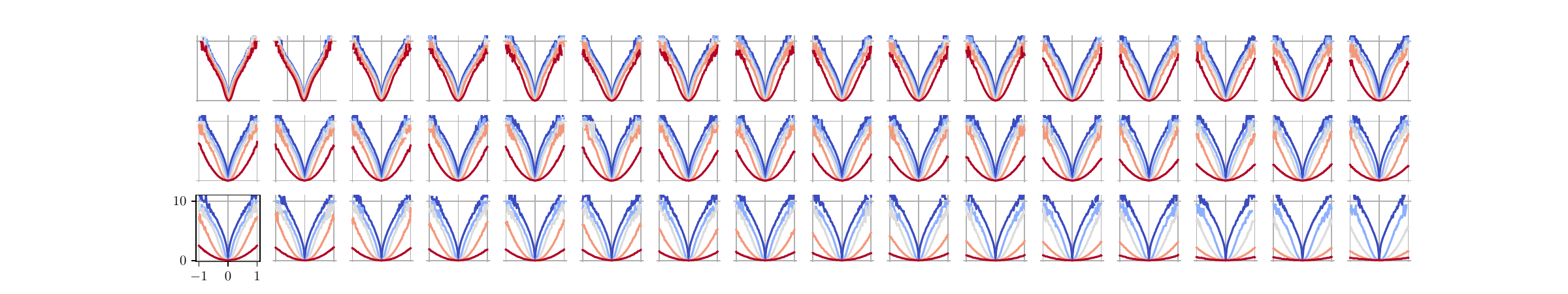}
	\vspace*{3mm}
	\hspace*{8.2cm}{\resizebox{6cm}{!}{\(\sqrt{2t}=\)
		\foreach \ccolor\ssigma in {coolwarm1/0, coolwarm2/0.025, coolwarm3/0.05, coolwarm4/0.1, coolwarm5/0.2}
		{
			\tikz[baseline=\hheight*0.8]{\draw[\ccolor, ultra thick](0,0.3) -- (.5,.3);}\num{\ssigma}
		}
	}}
	\caption{%
		Learned filters \( k_j \) (top, the intervals show the values of black and white respectively, amplified by a factor of \num{10}), potential functions \( -\log \psi_j(\argm, w_j, t) \) and activation functions \( -\nabla \log \psi_j(\argm, w_j, t) \).
		On the bottom, the \added{negative-log} empirical marginal \deleted{filter} response histograms are drawn.
	}%
	\label{fig:patch results}
\end{figure*}
\subsubsection{Learning Wavelets}%
\label{sssec:learning wavelets}
The discrete wavelet transformation is characterized by the sequence \( h \in \R^K \).
In addition to learning the parameters of the one-dimensional \gls{gmm}, we follow~\cite{grandits_optimizing_2018} and also learn \( h \).
From the sequence \( h \), the scaling-function \( \phi \) and wavelet-function \( \omega \) are defined by
\begin{equation}
	\phi(x) = \sum_{k=1}^K h_k \sqrt{2} \phi(2x - k)
\end{equation}
and
\begin{equation}
	\omega(x) = \sum_{k=1}^K\replaced{\bigl(g(h)\bigr)_k}{g_k}\sqrt{2}\phi(2x - k)
\end{equation}
where \( \replaced{\bigl(g(h)\bigr)_k}{g_k} = (-1)^kh_{K - k - 1} \).
For \( \omega \) to be a wavelet, it must follow the admissibility criterion
\begin{equation}
	\int_0^\infty \frac{|(\mathcal{F}\omega)(\zeta)|^2}{\zeta}\,\mathrm{d}\zeta < \infty,
\end{equation}
cf~\cite{mallat_multiresolution_1989}, from which it immediately follows that \( (\mathcal{F}\omega)(0) = \int_\R \omega = 0 \).
For the transformation to be unitary, we need that \( \int_\R \phi = 1 \), and
\begin{equation}
	\int_\R \phi(x)\phi(x-n)\,\mathrm{d} x = \delta_{n} \text{ for all } n \in \mathbb{Z}.
\end{equation}
From these constraints, the feasible set of wavelet-generating sequences is described by
\begin{equation}
	\begin{aligned}
		\Omega = \{ &h \in \R^K : \langle \mathds{1}_{\R^K}, g(h) \rangle_{\R^K} = 0,\\
					&\langle \mathds{1}_{\R^K}, h \rangle_{\R^K} = \sqrt{2},\\
					&\langle h, \circlearrowleft_{2n} h \rangle_{\R^K} = \delta_n\ \text{ for all } n \in \mathbb{Z} \}.
	\end{aligned}
\end{equation}
Here \( \circlearrowleft_{n} : \R^K \to \R^K \) rolls its argument by \( n \) entries, i.e. \( \circlearrowleft_{n} x = (x_{K-n+1}, x_{K-n+2},\dotsc,x_{K},x_1,x_2,\dotsc,x_{K-n})^\top \).
Observe that the orthogonality condition encodes \( K / 2 \) constraints (we assume that \( K \) is even), since \( \circlearrowleft_{0} = \circlearrowleft_{K} = \mathrm{Id}_{\R^K} \).
To project onto \( \Omega \), we write the projection problem
\begin{equation}
	\proj_{\Omega} (\bar{x}) \replaced{=}{\in} \argmin_{x \in \Omega} \frac{1}{2} \norm{x - \bar{x}}_2^2
\end{equation}
in its Lagragian form using \( \mathcal{L} : \R^K \times \R \times \R \times \R^{K/2} \to \R : \)
\begin{equation}
	\begin{aligned}
		&(x, \Lambda_{\mathrm{scal}}, \Lambda_{\mathrm{adm}}, \Lambda)\\
		&\mapsto \frac12 \norm{x - \bar{x}}_2^2\\
		&+ \Lambda_{\mathrm{scal}} \bigl( \langle \mathds{1}_{R^K}, h \rangle_{\R^K} - \sqrt{2} \bigr)
		+ \Lambda_{\mathrm{adm}} \bigl( \langle \mathds{1}_{\R^K}, g(h) \rangle_{\R^K} \bigr)\\
		&+ \sum_{n = 0}^{\frac{K}{2}-1} \Lambda_{n+1} \bigl( \langle h, \circlearrowleft_{2n} h \rangle_{\R^K} - \replaced{\chi_{\{0\}}(n)}{\delta_n} \bigr).
	\end{aligned}
\end{equation}
and find stationary points by solving the associated non-linear \replaced{least-squares }{least-squared} problem
\begin{equation}
	\norm{\nabla \mathcal{L}(x, \Lambda_{\mathrm{scal}}, \Lambda_{\mathrm{adm}}, \Lambda)}_2^2/2 = 0
\end{equation}
using \num{10} iterations of Gauss-Newton.
To facilitate convergence, we warm start the Lagrange multipliers \( \Lambda_{\mathrm{scal}}, \Lambda_{\mathrm{adm}}, \Lambda \) with the solution from the previous outer iteration.
We initialize the sequence \( h \) with the generating sequences of the \texttt{db2}- (\( K = 4 \)) and \texttt{db4}-wavelet (\( K = 8 \)).
For both, we utilize \( J = 2 \) levels.
We use the \texttt{pytorch\_wavelets}~\cite{cotter_complex_2020} implementation of the discrete wavelet transformation.

In contrast to the \replaced{model based on filter-responses}{patch model}, the \replaced{model based on wavelet-responses}{wavelet model} does not have the freedom to adapt the scaling of filters\deleted{that respond well to noise}.
To overcome this, we discretize the means over the real line individually for each sub-band.
In detail, for the \( j \)-th level and \( d \)-th direction, \( d \in \{ \mathbf{h}, \mathbf{v}, \mathbf{d} \} \), we choose \( \eta^d_j = 1.1 q^{d}_j \), where \( q^d_j \) is the \( .999 \)-quantile of corresponding responses.

The initial and learned generating sequences, their corresponding scaling- and wavelet-functions, along with the learned potential functions and \gls{mmse}-shrinkage are shown in~\cref{fig:wavelet pot}.
In these figures, it is apparent that our chosen parametrization is sub-optimal.
In particular, in order to represent the heavy tails (especially for level \( j = 1\)), many intermediate weights are set to \( 0 \).
This leads to the \gls{mmse} shrinkage functions becoming step-like.
We emphasize that this is a practical problem of choosing the appropriate parametrization; we discuss alternatives to our equispaced \gls{gmm} in~\cref{sec:discussion}.
\begin{figure*}
	\newlength{\hhheight}
	\settoheight{\hhheight}{A}
	\begin{tabular}{c|c}
		\includegraphics[width=.23\textwidth]{./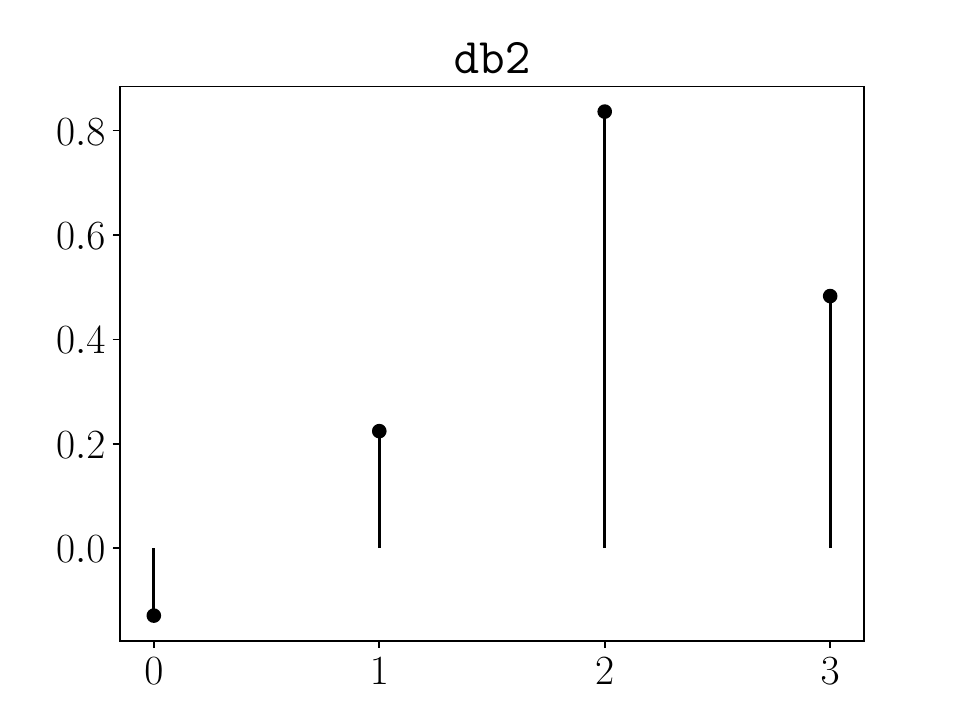}\includegraphics[width=.23\textwidth]{./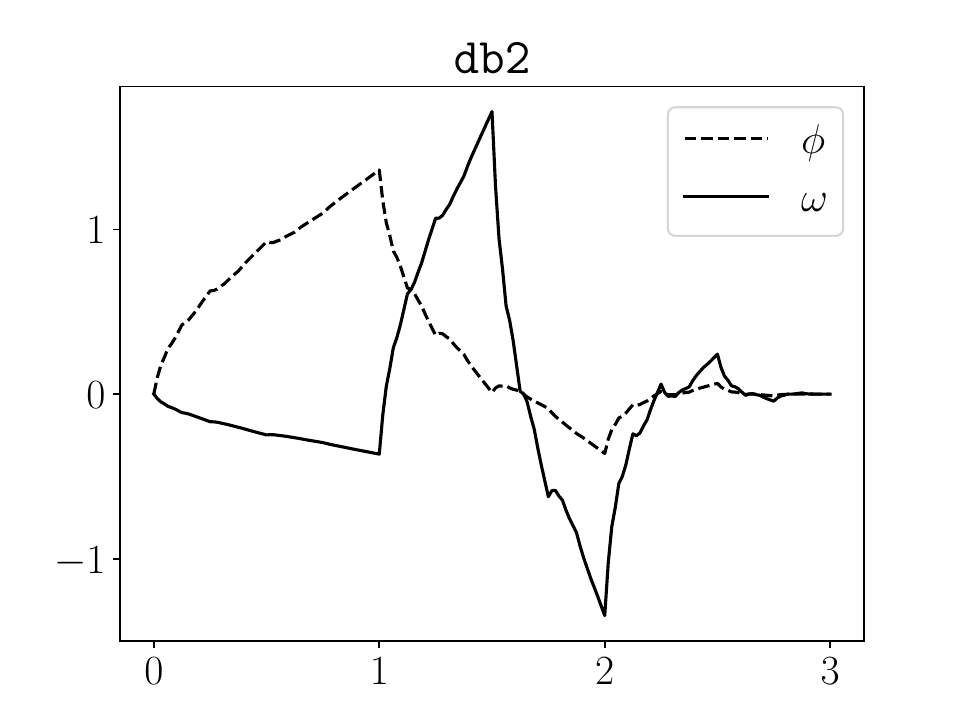} & \includegraphics[width=.23\textwidth]{./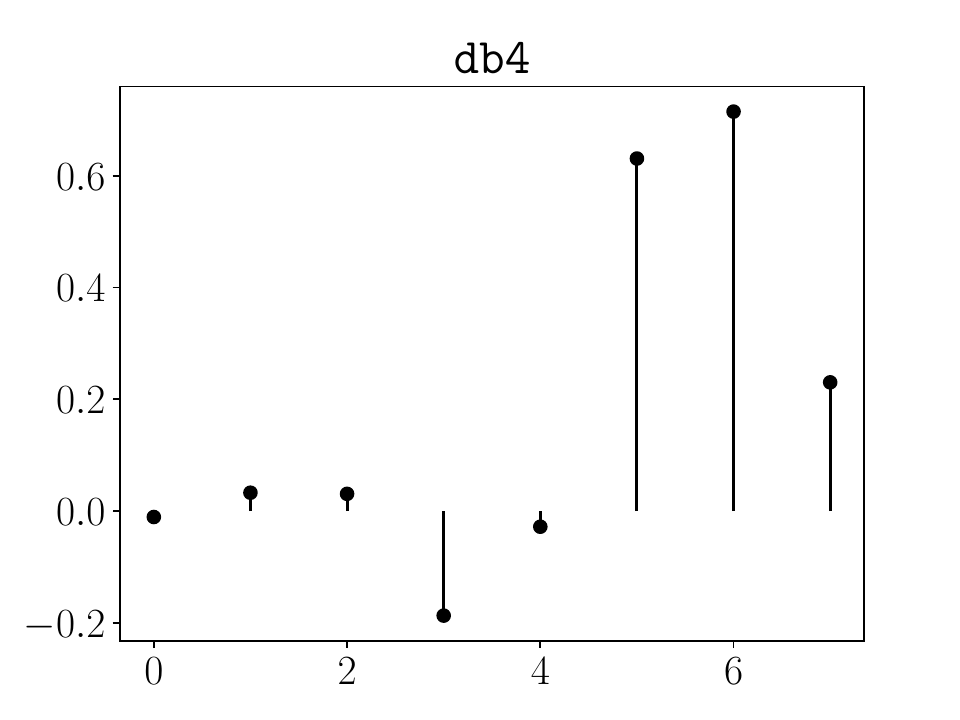} \includegraphics[width=.23\textwidth]{./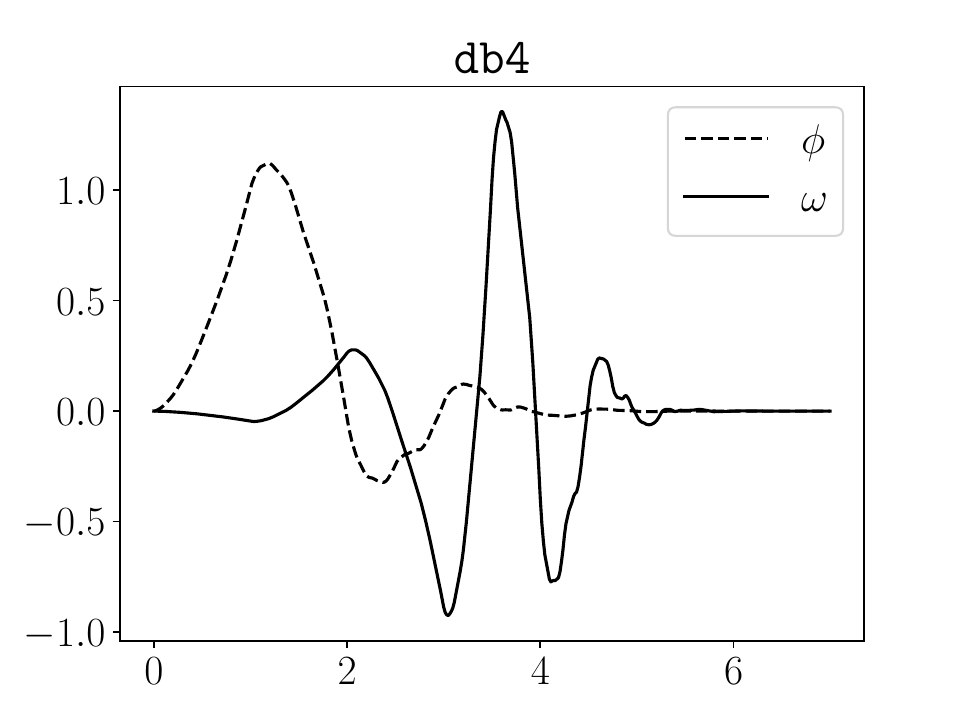} \\
		\includegraphics[width=.23\textwidth]{./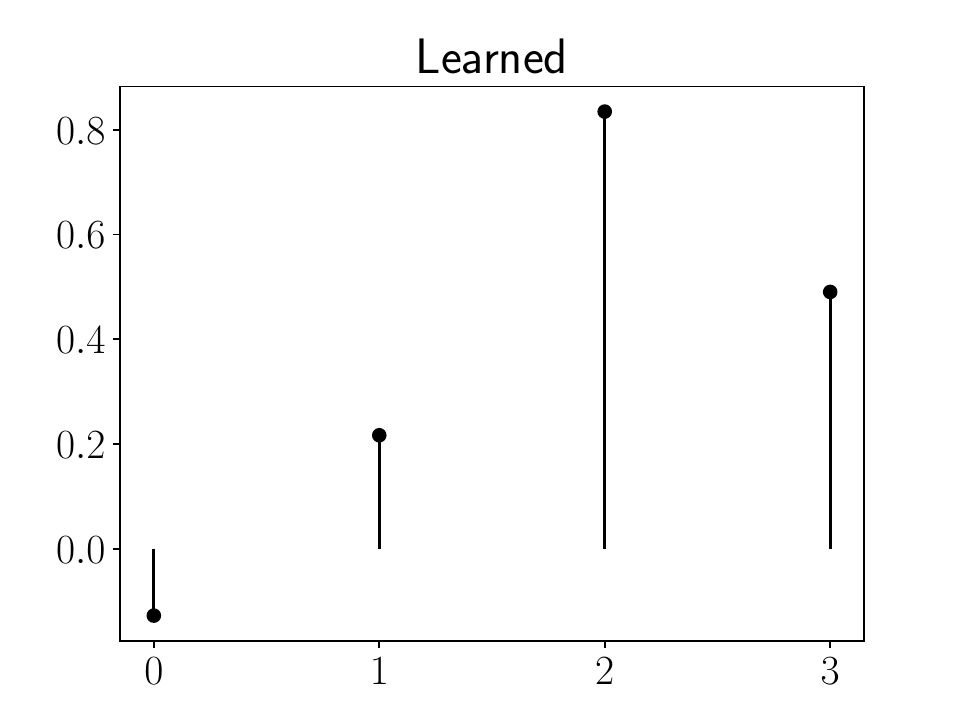} \includegraphics[width=.23\textwidth]{./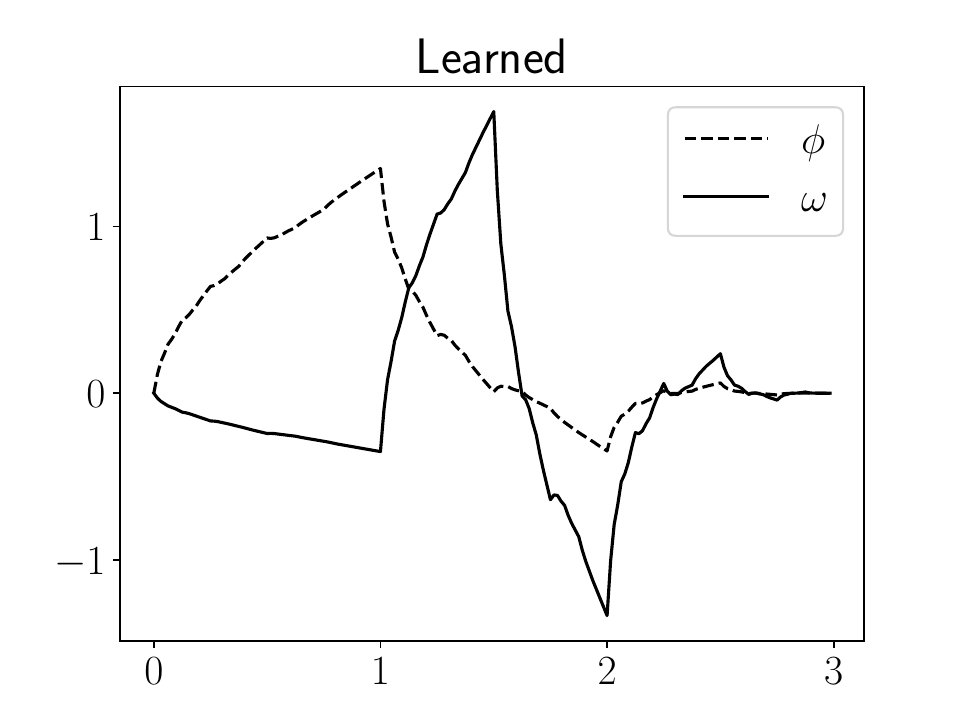} & \includegraphics[width=.23\textwidth]{./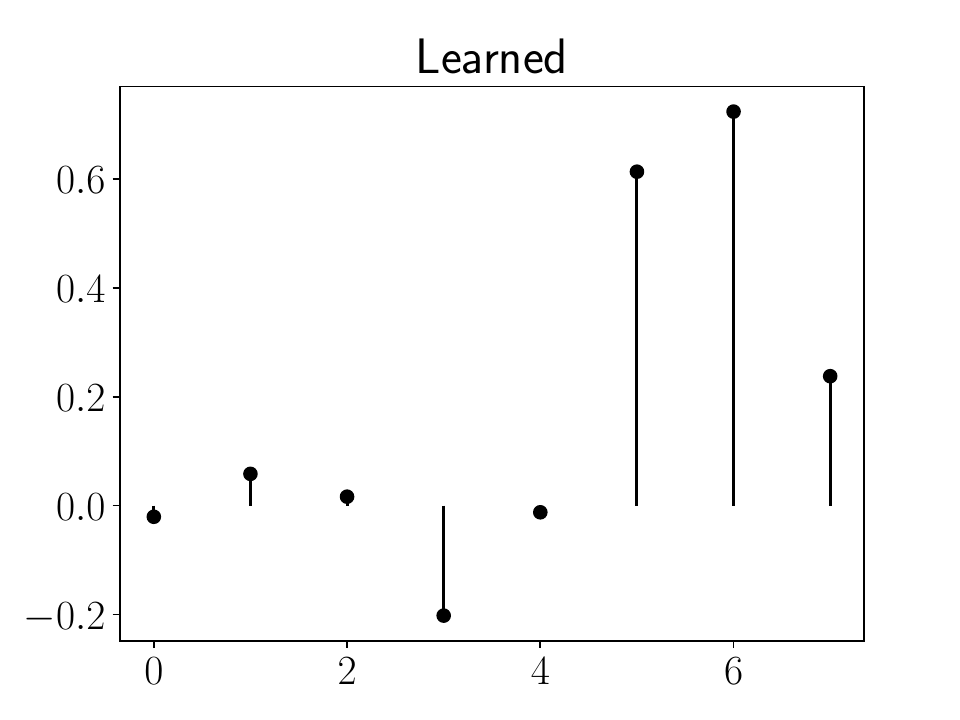} \includegraphics[width=.23\textwidth]{./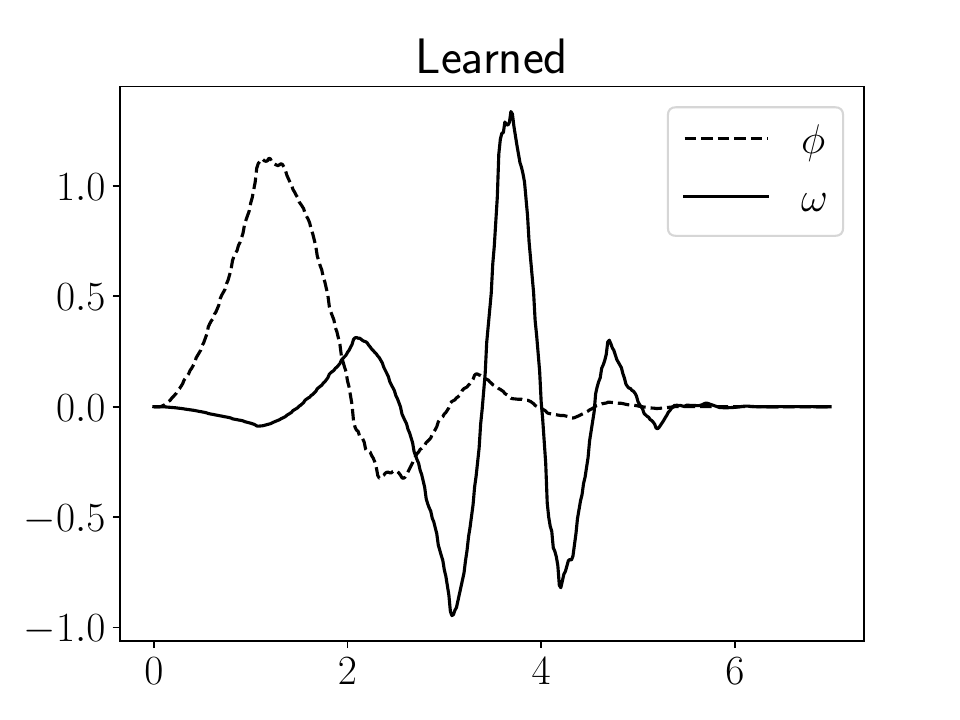} \\
		\includegraphics[width=.45\textwidth]{./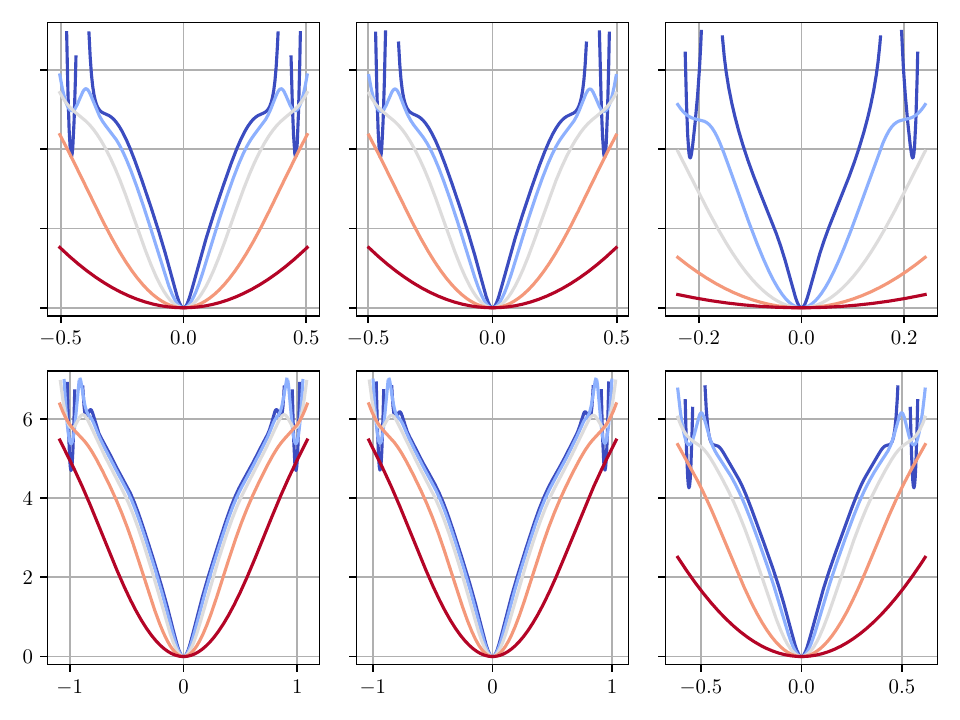} & \includegraphics[width=.45\textwidth]{./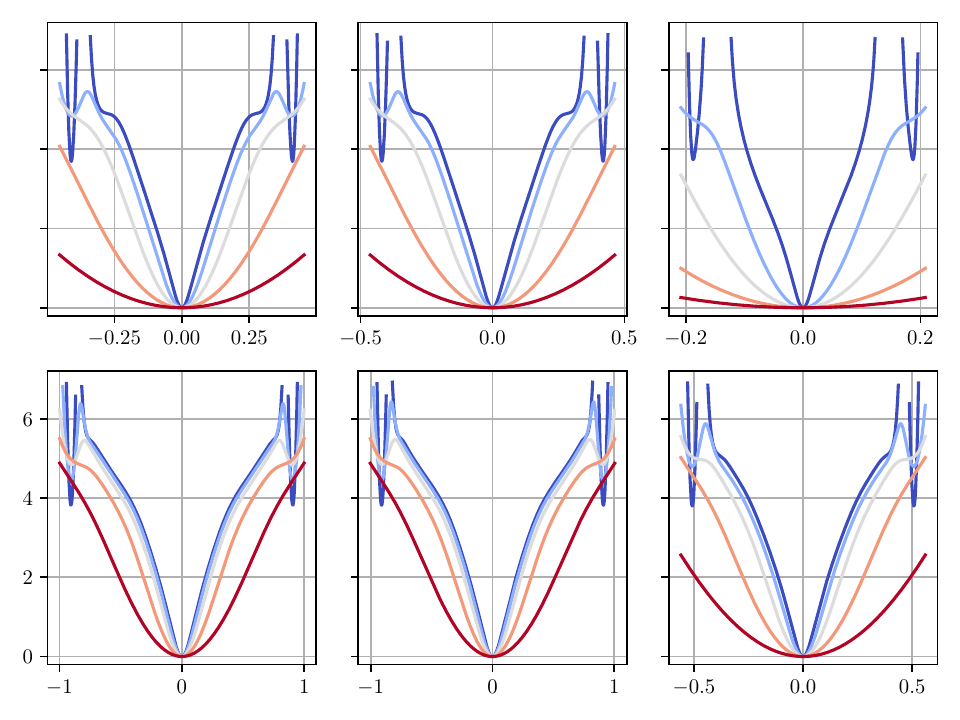} \\
		\includegraphics[width=.45\textwidth]{./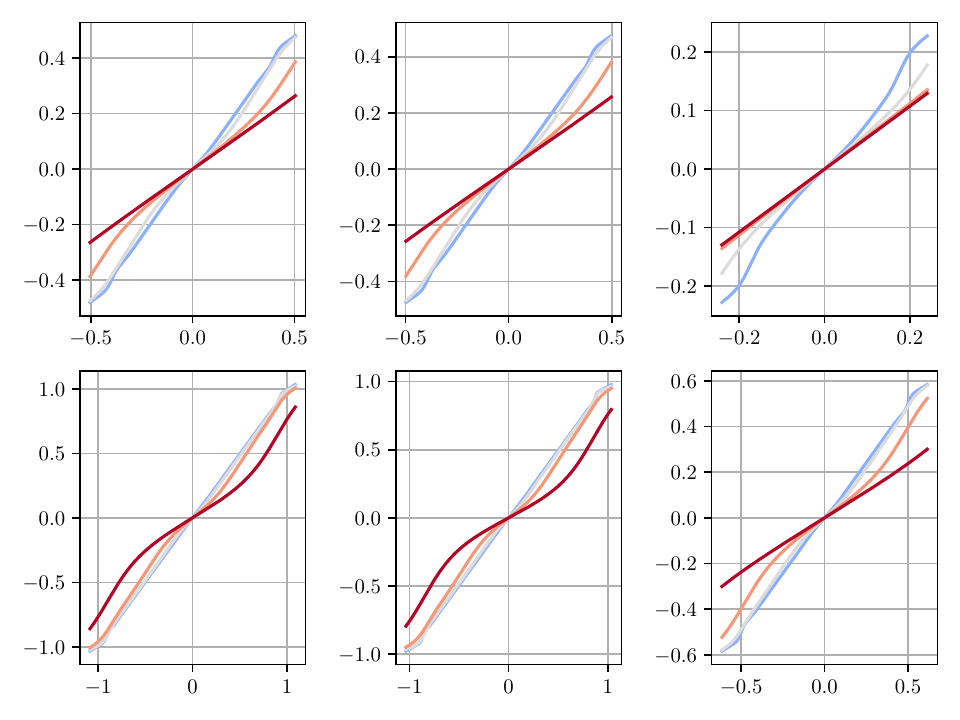} & \includegraphics[width=.45\textwidth]{./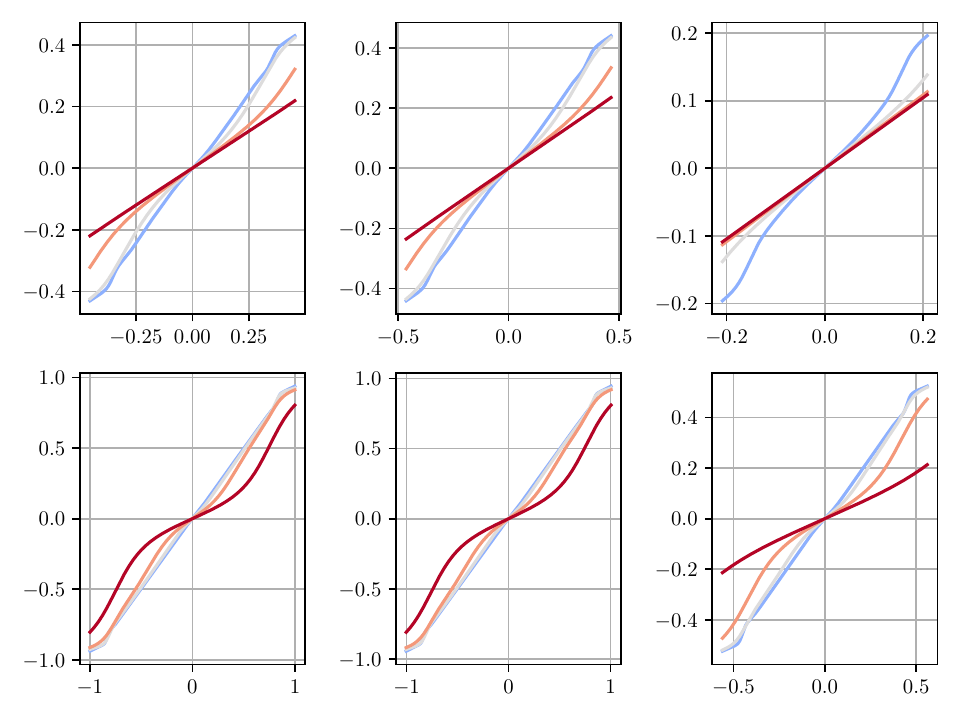}
	\end{tabular}%
	\vspace{-.2cm}
	\hspace*{11.2cm}{\resizebox{6cm}{!}{\(\sqrt{2t}=\)
		\foreach \ccolor\ssigma in {coolwarm1/0, coolwarm2/0.025, coolwarm3/0.05, coolwarm4/0.1, coolwarm5/0.2}
		{
			\tikz[baseline=.8*\hhheight]{\draw[\ccolor, ultra thick](0,0.3) -- (.5,.3);}\num{\ssigma}
		}
	}}
	\caption{%
		Left: \texttt{db2} initial generating sequence (\( K = 4\)).
		Right: \texttt{db4} initial generating sequence (\( K = 8\)).
		From top to bottom: Initial generating sequence \( h \) along with the corresponding scaling- and wavelet-functions \( \phi \) and \( \replaced{\omega}{\psi} \), learned generating sequence \( h \) along with the corresponding scaling- and wavelet-functions \( \phi \) and \( \replaced{\omega}{\psi} \), learned \replaced{potential functions}{potentials} \( -\log \psi(\argm, w, t) \) and the learned \gls{mmse} shrinkage functions \( y_t \mapsto y_t \replaced{+}{-} 2t \nabla \log \psi(y_t, w, t) \).
	}%
	\label{fig:wavelet pot}
\end{figure*}
\subsubsection{Learning Shearlets}
We initialize the one-dimensional low-pass filter \( h_1 \) and the two-dimensional directional filter \( P \) with the standard choices from~\cite{kutyniok_shearlab_2016}:
\( h_1 \) is initialized as maximally flat 9-tap symmetric low-pass filter\footnote{The \texttt{matlab} command \texttt{design(fdesign.lowpass(’N,F3dB’, 8, 0.5), ’maxflat’)} generates the filter.}, \( P \) is initialized as the maximally flat fan filter\footnote{The filter can be obtained in \texttt{matlab} using the Nonsubsampled Contourlet Toolbox by \texttt{fftshift(fft2(modulate2(dfilters(’dmaxflat4’, ’d’) ./ sqrt(2), ’c’)))}} described in~\cite{cunha_nonsubsampled_2006}.
Furthermore, \( \lambda_{j,k} \) is initialized as \( 1 \) for all scale levels \( j \) and shearings \( k \), and we set \( \eta_{j, k} = 0.5 \).

We enforce the following constraints on the parameter blocks:
The weighting parameters \( \lambda_{j,k} \) must satisfy non-negativity \( \lambda_{j, k} \in \R_{\geq 0} \).
The parameters \( h_1 \) and \( P \) specifying the shearlet system must satisfy \( h_1 \in \mathcal{H} \coloneqq \{ x \in \R^9 : \langle \mathds{1}_{\replaced{\R^9}{9}}, x \rangle_{\R^9} = 1 \} \) and \( P \in \mathcal{P} \coloneqq \{ x \in \R^{17 \times 17} : \norm{x}_1 = 1 \}\).

The projection operators can be realized as follows:
The projection onto the non-negative real line is just \( \proj_{\R_{\geq 0}}(x) = \max \{ x, 0 \} \).
The map \( \proj_{\mathcal{H}}(x) = x - \frac{\langle \mathds{1}_{\replaced{\R^9}{9}}, x \rangle_{\R^9} - 1}{9} \) realizes the projection onto the linear constrain encoded in \( \mathcal{H} \).
The projection onto the unit-one-norm-sphere is \( \proj_{\mathcal{P}}(x) = \operatorname{sgn}(x) \odot \proj_{\triangle^m}(|x|) \) (see e.g.~\cite{duchi_efficient_2008,Condat2015}), where we ignore the \deleted{case} degenerate case of projecting the origin where \( \proj_{\mathcal{P}} \) is not well defined.
Our implementation of the shearlet transformation is based on the ShearLab 3D~\cite{kutyniok_shearlab_2016} toolbox\footnote{See \href{http://shearlab.math.lmu.de/}{http://shearlab.math.lmu.de/}.}.
\begin{algorithm}[t]
	\DontPrintSemicolon
	\SetKwInOut{Output}{Output}
	\SetKwInOut{Input}{Input}
	\Input{\( x \in \R^m \)}
	\Output{\( y = \proj_{\triangle^m}(x) \)}
	\( u \coloneqq \operatorname{sort}(x) \)\tcp*{\( u_1 \geq \dotsc \geq u_m \)}
	\( K \coloneqq \max_{1\leq k \leq m} \{ k : \bigl( \sum_{r=1}^k u_r - 1 \bigr) / k < u_k \}\)\;
	\( \tau \coloneqq (\sum_{k=1}^K u_k - 1) / K \)\;
	\( y \coloneqq \max \{ x - \tau, 0 \} \)\tcp*{element-wise}
	\caption{%
		Simplex projection from~\cite{Held1974}.
	}%
	\label{alg:simplex proj}
\end{algorithm}

For the numerical experiments, we chose \( J = 2 \) scales and \( 5 \) shearings (\( k \in \{ -2, \dotsc, 2 \} \)).
We show the initial and learned filter weights \( \lambda_{j,k} \), the one-dimensional low-pass filter \( h_1 \), and the two-dimensional directional filter \( P \) in~\cref{fig:shearlet weights filters}.
The resulting shearlet system in the frequency- and time-domain, along with the learned potential functions, is shown in~\cref{fig:shearlet potentials}.
We again emphasize that the learned one-dimensional potential functions \( \psi_{j,k}(\argm, w_{j,k}, t) \) are distinctly different from the other models.
In particular, they exhibit multiple local minima, sometimes different from \( 0 \), such that certain image structures can be enhanced under this prior.
This is in stark contrast to the learned filter- and wavelet\added{-}responses, which show a single minimum at \( 0 \) and the classical heavy-tailed shape.

\cref{fig:shearlet potentials} also shows that the shearlet system only approximately fulfills the assumption~\eqref{eq:disjoint} and~\eqref{eq:constant}.
We analyze the shearlet system with respect to the assumption of disjoint support~\eqref{eq:disjoint} by visualizing the pair-wise cosine similarity of the magnitude of the spectra in~\cref{fig:shearlet inproduct}.
In detail, the figure shows \( \langle \frac{|\gamma_{\tilde{\jmath},\tilde{k}}|}{\norm{|\gamma_{\tilde{\jmath},\tilde{k}}|}}, \frac{|\gamma_{j,k}|}{\norm{|\gamma_{j,k}|}} \rangle_{\R^{n}} \), for \( \tilde{\jmath}, j \in \{ 1, 2 \} \) and \( \tilde{k}, k \in \{ -2,\dotsc, 2 \} \) and both cones.
Although less for the learned shearlet system, the plot is dominated by the main diagonal, indicating that the corresponding spectra are almost non-overlapping.
To meet the theoretical assumptions, it would be possible to penalize \( \langle \frac{|\gamma_{\tilde{\jmath},\tilde{k}}|}{\norm{|\gamma_{\tilde{\jmath},\tilde{k}}|}}, \frac{|\gamma_{j,k}|}{\norm{|\gamma_{j,k}|}} \rangle_{\R^n} \) for  \( \tilde\jmath \neq j \) and \( \tilde{k} \neq k \) during training.

The fact that the spectra are not constant over their support raises the question of how to choose \( \xi_a \) that best approximates~\eqref{eq:constant}.
During training and evaluation, we simply chose \( \xi_a = \norm{|\gamma_a|}_\infty \), where \( a \) is a two-index ranging over the chosen scale-shearing grid.
It remains an open question how the violation of the constraints~\eqref{eq:disjoint} and ~\eqref{eq:constant} influences the diffusion, and if there exists a better choice for \( \xi_a \).
\begin{figure}
	\centering
	\begin{subfigure}[b]{\columnwidth}
		\includegraphics[width=.47\columnwidth]{./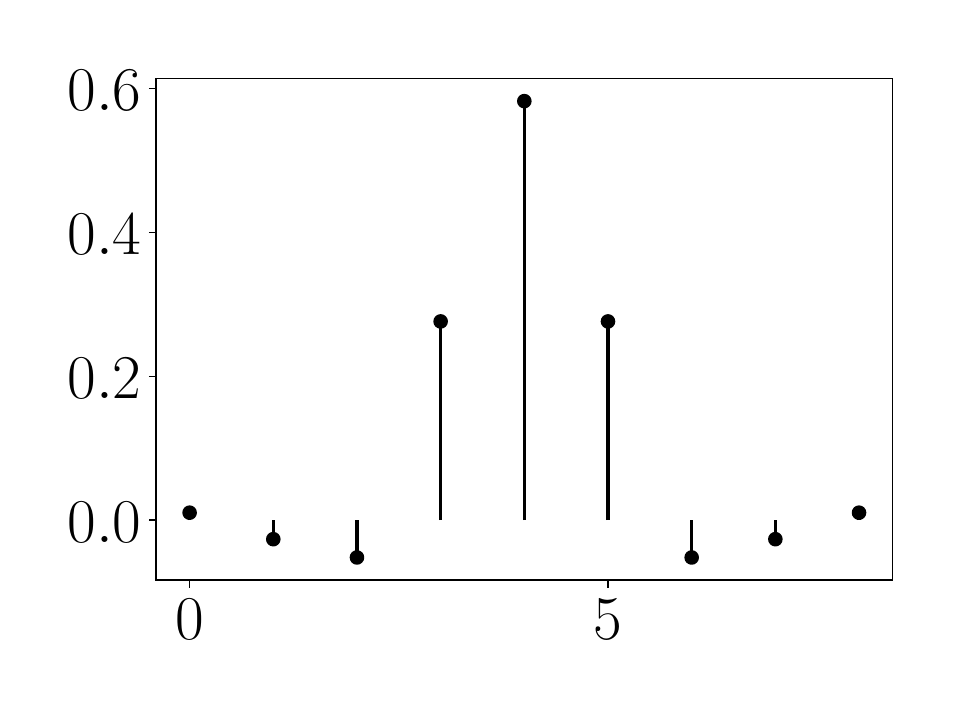}%
		\includegraphics[width=.47\columnwidth]{./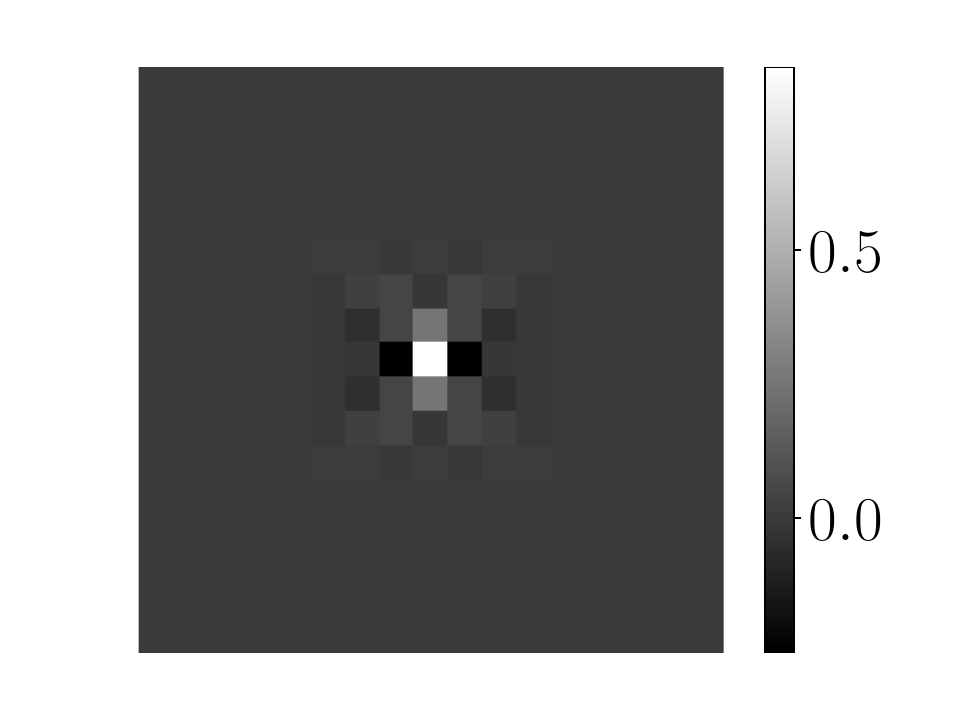}
		\caption{initial}\label{subfig:initial}
	\end{subfigure}
	\begin{subfigure}[b]{\columnwidth}
		\includegraphics[width=.47\columnwidth]{./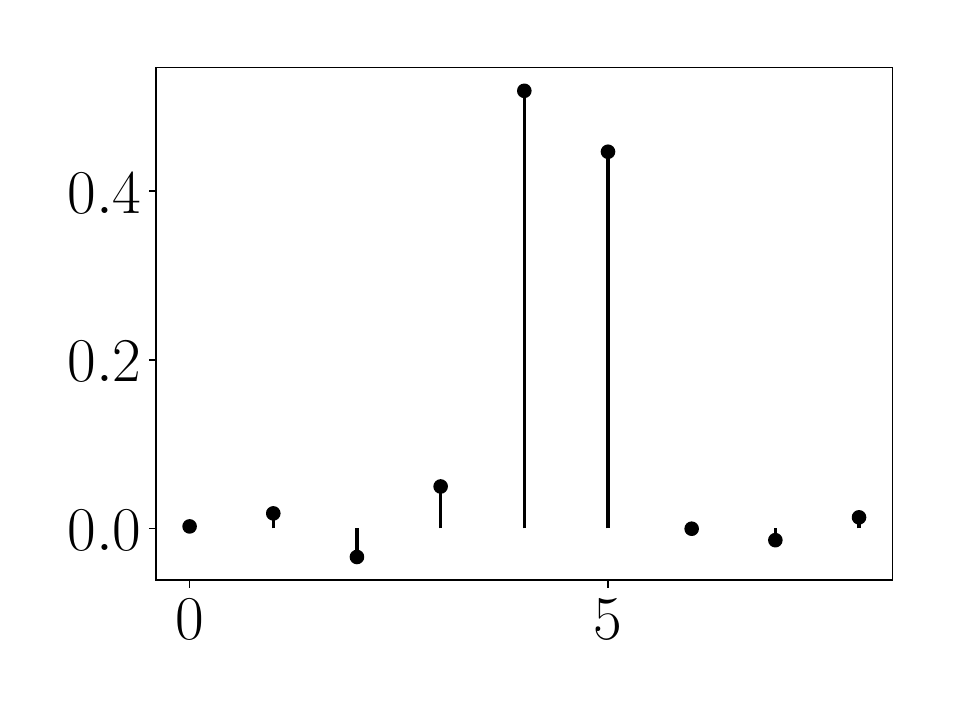}%
		\includegraphics[width=.47\columnwidth]{./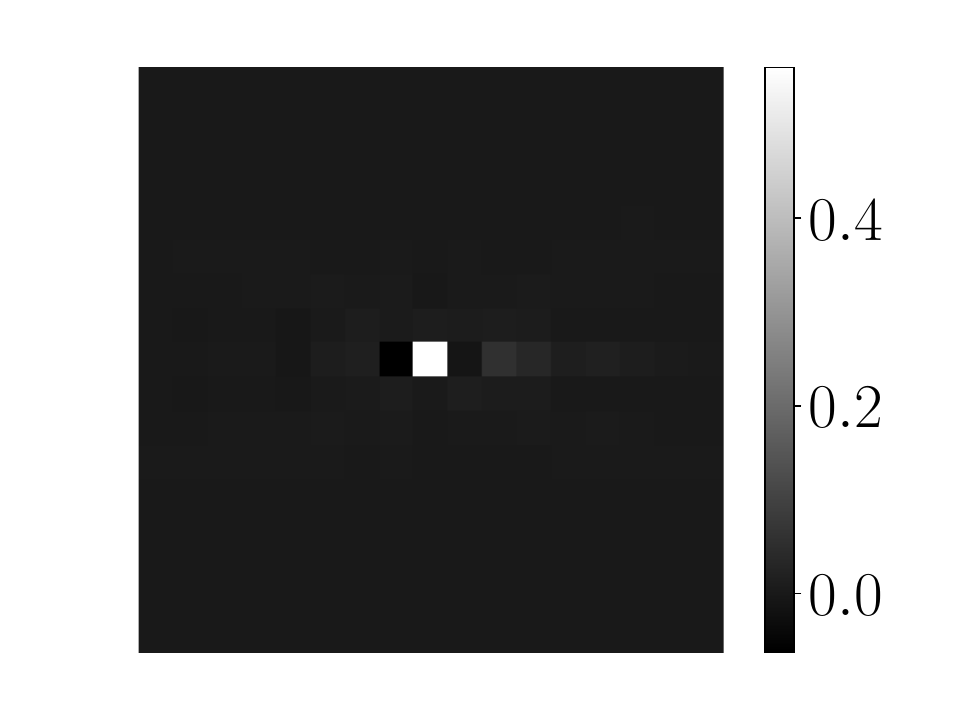}
		\caption{learned}\label{subfig:learned}
	\end{subfigure}
	\caption{%
		Initial (a) and learned (b) building blocks for the shearlet system:
		One-dimensional low-pass filter \( h_1 \) and the two-dimensional directional filter \( P \).
		The corresponding shearlet filters along with their frequency response, learned weights and learned potential functions are shown in~\cref{fig:shearlet potentials}.
	}%
	\label{fig:shearlet weights filters}
\end{figure}
\begin{figure*}
	\centering
	\def\lamdas{{%
			1.8553598, 0.72088665, 1.9375001, 0.5313143, 1.7331071,%
			1.5667545, 1.4999282, 1.7035619, 1.4885511, 0.9492101,%
			1.7383913, 0.53091305, 1.9384255, 0.7136959, 1.8417741,%
			1.028708, 1.4931595, 1.70363, 1.4860928, 1.6355318%
	}}
	\begin{tikzpicture}
		\node at (.3, 3.5) {\includegraphics[trim=.4cm .3cm .4cm .3cm, clip, width=.95\textwidth]{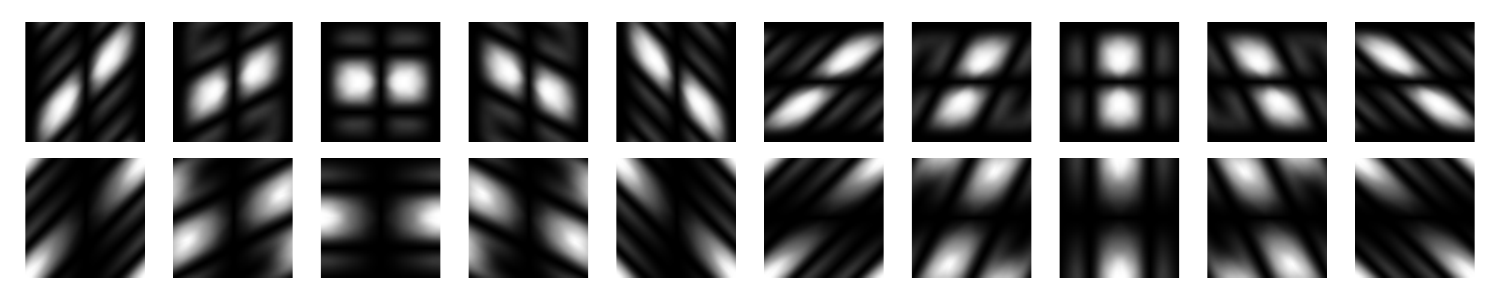}};
		\node at (.3, 0) {\includegraphics[trim=.4cm .3cm .4cm .3cm, clip, width=.95\textwidth]{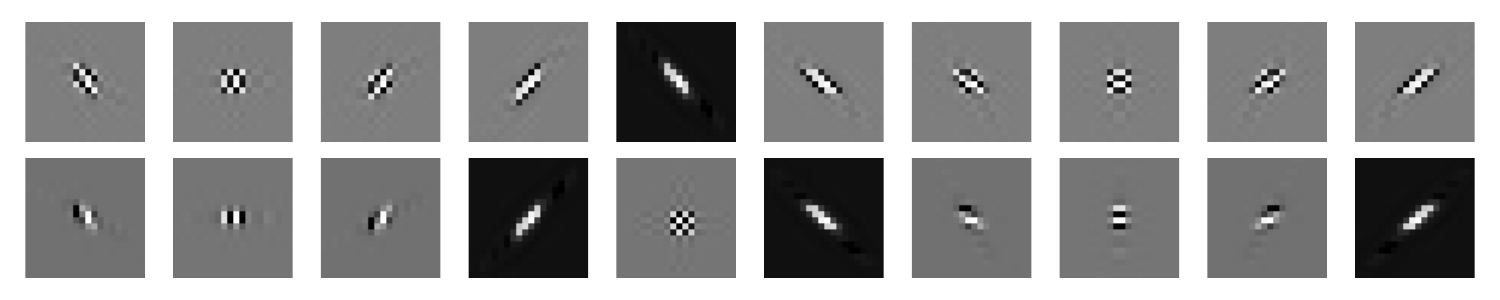}};
		\node at (0, -3.5) {\includegraphics[trim=.4cm .4cm .4cm .2cm, clip, width=.97\textwidth]{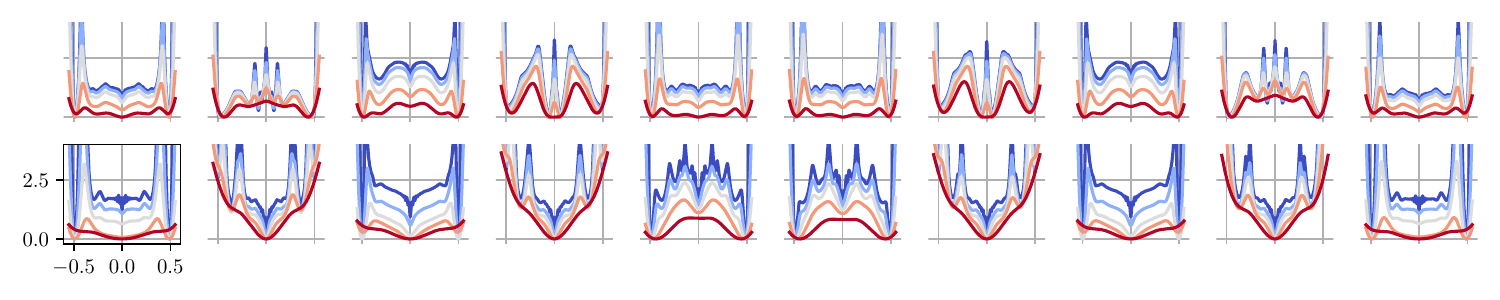}};
		\foreach \jj in {1, 2} {
			\pgfmathsetmacro{\yoff}{-(\jj * 1.7) + 6.05}
			\node [rotate=90] at (-8.8, \yoff) {\( j = \jj \)};
			\foreach \kk in {-2, -1, ..., 2} {
				\pgfmathsetmacro{\xoff}{\kk*1.80-4.15}
				\ifthenelse{\jj=1}{\node  at (\xoff, 5.3) {\( k = \kk \)};}{}
				\foreach \coneoff in {0, 5} {
					\pgfmathsetmacro{\yoff}{-(\jj * 1.65) + 1.9}
					\pgfmathsetmacro{\xoff}{(\kk + \coneoff)*1.78-4.54}
					\pgfmathsetmacro{\lammda}{\lamdas[(\jj - 1) * 5 + \kk + 2 + \coneoff * 2]}
					\node [fill=white] at (\xoff, \yoff) {\tiny \( \num[round-mode=places, round-precision=3]{\lammda} \)};
				}
			}
		}
	\end{tikzpicture}%
	\vspace{-.4cm}
	\hspace*{9.2cm}{\resizebox{6cm}{!}{\(\sqrt{2t}=\)
		\foreach \ccolor/\ssigma in {coolwarm1/0, coolwarm2/0.025, coolwarm3/0.05, coolwarm4/0.1, coolwarm5/0.2}
		{
			\tikz[baseline=.8*\hhheight]{\draw[\ccolor, ultra thick](0, 0.1) -- (.5, .1);}\num{\ssigma}
		}
	}}
	\caption{
		The five left columns show the frequency response, time-domain filters and corresponding potential functions of the first cone of the learned shearlet system.
		The inset numbers show the values of the corresponding weighting factor \( \lambda_{j, k} \).
		The five unlabeled right columns show the second cone.
	}%
	\label{fig:shearlet potentials}
\end{figure*}
\begin{figure}
	\centering
	\begin{subfigure}[b]{.49\columnwidth}
		\includegraphics[width=.95\columnwidth]{./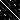}%
		\caption{initial}\label{subfig:inp initial}
	\end{subfigure}
	\begin{subfigure}[b]{.49\columnwidth}
		\includegraphics[width=.95\columnwidth]{./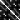}%
		\caption{learned}\label{subfig:inp learned}
	\end{subfigure}
	\caption{%
		Cosine similarity between the magnitude the spectra of different levels, shearings and cones of the initial (a) and learned (b) shearlet system.
		A system exactly fulfilling the assumption~\eqref{eq:disjoint} would be unpopulated off the main diagonal.
	}%
	\label{fig:shearlet inproduct}
\end{figure}
\subsection{Image Denoising}
To exploit our patch-based prior for whole-image denoising, following~\cite{zoran_learning_2011}, we define the expected patch log-likelihood of a noisy image \( y \in \R^n \) with variance \( \sigma^2(t) = 2t \) as
\begin{equation}
	\operatorname{epll}_\theta^{\mathrm{filt}}(y, t) = \sum_{j=1}^{\tilde{n}} p_j^{-1} \log f^{\mathrm{filt}}_\theta(P_j y, t).
\end{equation}
Here, \( \tilde{n} \) denotes the total number of overlapping patches (e.g.\ for \( n = 4 \times 4 \) and \( a = 3 \times 3 \), \( \tilde{n} = 4 \), ), \( P_i : \R^n \to \R^a \) denotes the patch-extraction matrix for the \( i \)-th patch and \( p_i = \bigl( \sum_{j=1}^{\tilde{n}} P_j^\top P_j \bigr)_{i, i} \) counts the number of overlapping patches to compensate for boundary effects (see~\cite[Appendix B]{romano_boosting_2017} for a more rigorous discussion).
The wavelet- and shearlet-based priors can act on images of arbitrary size.

Let \( \log f_\theta \) be either \( \operatorname{epll}_\theta^{\mathrm{filt}}, \log f_\theta^{\mathrm{wave}}\), or \( \log f_\theta^{\mathrm{conv}} \).
We consider two inference methods:
The one-step empirical Bayes estimate
\begin{equation}
		\hat{x}_{\mathrm{EB}}(y, t) = y + \sigma^2(t) \grad{1} \log f_\theta(y, t)
\end{equation}
corresponds to the Bayesian \gls{mmse} estimator.
Notice that, in the case of \( \log f_\theta = \operatorname{epll}_\theta^{\mathrm{filt}} \) the estimator
\begin{equation}
	\begin{aligned}
		\hat{x}_{\mathrm{EB}}(y, t) &= y + \sigma^2(t) \grad{1} \operatorname{epll}_\theta^{\mathrm{filt}}(y, t) \\
								  &= y + 2t \sum_{j=1}^{\tilde{n}} p_j^{-1} P_j^\top \grad{1} \log f^{\mathrm{filt}}_\theta(P_j y, t)
	\end{aligned}
	\label{eq:ebpa}
\end{equation}
computes patch-wise \gls{mmse} estimates and combines them by averaging.
This is known to be a sub-optimal inference strategy, since the averaged patches are not necessarily likely under the model~\cite{zoran_learning_2011}.
The discussion of algorithm utilizing patch-based priors for whole-image restoration is beyond the scope of this article.
We refer the interested reader to the works of~\cite{romano_boosting_2017,zoran_learning_2011} for a detailed discussion on this topic.
In addition, we refer to our previous conference publication~\cite{zach_explicit_2023}, in which we present a proximal gradient continuation scheme that slightly improves over the empirical Bayes estimate by allowing patch-crosstalk.

The second inference method we consider is the stochastic denoising algorithm proposed by~\cite{kawar_stochastic_2021} and summarized in~\cref{alg:stochastic image denoiser}.
In detail, this algorithm proposes a sampling scheme to approximately sample from the posterior of a denoising problem when utilizing diffusion priors.
This is achieved by properly weighting the \emph{score} \( \nabla \log f_\theta \) with the gradient of the data term while annealing the noise level.
Sampling from the posterior, as opposed to directly computing \gls{mmse} estimates with an empirical Bayes step, is known to produce sharper results when utilizing modern highly expressive diffusion models~\cite{kawar_stochastic_2021,Karras2022edm}.
We chose \( \epsilon = \num{5e-6} \), \( \sigma_{\replaced{C}{L}} = 0.01 \) and the exponential schedule \( \sigma_i = \sqrt{2t} \bigl(\frac{\sigma_{\replaced{C}{L}}}{\sqrt{2t}}\bigr)^{i/\replaced{C}{L}} \), using \( B = 3 \) inner loops and \( C = \replaced{100}{250} \) diffusion steps.
\begin{algorithm}[t]
	\DontPrintSemicolon%
	\SetKwInOut{Output}{Output}
	\SetKwInOut{Input}{Input}
	\Input{Variance schedule \( \{ \sigma_i \}_{i=1}^C \), noisy images \( y_t \in \R^m \), inner iterations \( B > 0 \in \mathbb{N} \), noise level \( \sigma_0 \) in \( y \)}
	\Output{Stochastically denoised image \( x_B \)}
	\For{\( i \in 1, \dotsc, \replaced{C}{L} \)}{
		\(\alpha_i \leftarrow \epsilon \sigma_i^2 / \sigma_{\replaced{C}{L}}^2 \)\;
		\For{\( b \in 1, \dotsc, B - 1 \)}{
			\(z_b \sim \mathcal{N}(0, \mathrm{Id}_m) \)\;
			\(g_b \leftarrow \nabla \log f_\theta(x_{b-1}, \sigma_i) + (y - x_{b-1}) / (\sigma_0^2 - \sigma_i^2) \)\;
			\(x_b \leftarrow x_{b-1} + \alpha_i g_b + \sqrt{2\alpha_i}z_b\)\;
		}
		\( x_0 \leftarrow x_B \)\;
	}
	\caption{%
		Stochastic image denoising algorithm from~\cite{kawar_stochastic_2021}.
	}%
	\label{alg:stochastic image denoiser}
\end{algorithm}

Let \( x \in \R^n \) denote a test sample from the distribution \( f_X \), and let \( \hat{x} \) denote the estimation of \( x \) given \( y_t = x + \sqrt{2t} \eta \) where \( \eta \sim \mathcal{N}(0, \mathrm{Id}_{\R^n}) \), through either of the discussed inference methods.
In~\cref{tab:denosing}, we show a quantitative evaluation \replaced{utilizing}{utilize} the standard metrics \gls{psnr} \( 10\log_{10}\frac{n}{\norm{\hat{x} - x}_2^2} \) and \gls{ssim}~\cite{zhou_ssim_2004} with a window size of \num{7} and the standard parameters \( K_1 = \num{0.01} \) and \( K_2 = \num{0.03} \).
The column with the heading \enquote{Patch-\acrshort{gsm}} utilized the Gaussian scale mixture parametrization discussed in~\cref{ssec:alternative parametrizations}.
\added{%
	The results are obtained for one run of the algorithms, i.e.\ we did not compute the expectation over the noise (neither in the construction of \( y_t \) nor during the iterations of the stochastic denoising algorithm).
	However, we did not observe any noteworthy deviation when performing different runs of the experiments.
}

The quantitative evaluation shows impressive results of the \replaced{model based on shearlet-responses}{shearlet model}, despite having very little trainable parameters.
In particular, it performs best across all noise \replaced{levels}{scales} and inference methods, with the exception of the one-step empirical Bayes denoising at \( \sigma = 0.2 \).
There, the patch-based model with \( a = 15 \times 15 \) performs best, but notably has about \num{50} times the number of trainable parameters.
By leveraging symmetries between the cones in the shearlet system, the number of trainable parameters could even be approximately halved.
These symmetries are strongly apparent in~\cref{fig:shearlet potentials}, where the \replaced{potential functions}{potentials} of the second cone (rightmost \num{5} potentials) are almost a perfect mirror image of the \replaced{potential functions}{potentials} of the first cone (leftmost \num{5} \replaced{potential functions}{potentials}).

Additionally, the table reveals that the empirical Bayes estimator beats the stochastic denoising in every quality metric.
This is not surprising, as --- in expectation --- it is the optimal estimator in the \gls{mmse} sense, which directly corresponds to \gls{psnr}.
Comparing the qualitative evaluation in~\cref{fig:denoising images} (empirical Bayes) to~\cref{fig:denoising images stochastic} (stochastic denoising) we do not observe that sharper images using the stochastic denoising algorithm; we are unsure why.
\begin{table*}[t]
	\centering
	\caption{%
		Quantitative denoising results in terms of \gls{psnr} and \gls{ssim} using one-step empirical Bayes denoising the stochastic denoising algorithm from~\cite{kawar_stochastic_2021}.
		\added{The intervals indicate the \num{0.95} confidence region, }
		\deleted{The quantitative evaluations of the noisy images differ since we used only a subset of the evaluation images for the stochastic denoising algorithm.}
		\replaced{b}{B}old typeface indicates the best method.
		\deleted{The symbol --- indicates that the algorithm did not finish due to numerical instability.}
	}%
	\label{tab:denosing}
	\begin{tabular}{ccS[table-format=1.3]*{7}{S[table-format=2.2(1),separate-uncertainty=true,retain-zero-uncertainty,table-align-uncertainty,tight-spacing]}}
		\toprule
		& & {\multirow{2}{*}{\( \sigma \)}} & {\multirow{2}{*}{\( y_t \)}} & \multicolumn{2}{c}{{Patch-GMM}} & {-GSM} & \multicolumn{2}{c}{{Wavelet}} & {\multirow{2}{*}{Shearlet}} \\\cmidrule(l{1em}r{1em}){5-6}\cmidrule(l{1em}r{1em}){8-9}
		& & & & {\( b = 7 \)} & {\( b = 15 \)} & {\( b = 7 \)} & {\( K = 4 \)} & {\( K = 8 \)} &  \\\midrule
		\parbox[t]{3mm}{\multirow{8}{*}{\rotatebox[origin=c]{90}{Empirical Bayes}}} & \parbox[t]{3mm}{\multirow{4}{*}{\rotatebox[origin=c]{90}{\gls{psnr}}}} & 
				   0.025 & 32.04(00) & 34.54(21) & 35.00(27) & 35.08(29) & 33.51(23) & 33.61(25) & \bfseries 35.30(38) \\
			& & 0.050 & 26.02(00)  &   30.44(32) & 30.79(37) & 30.80(37) & 29.32(29) & 29.48(31) & \bfseries 31.17(44) \\
			 & & 0.100 & 20.00(00)  &  27.03(42) & 27.27(46) & 27.20(44) & 25.72(36) & 25.90(36) & \bfseries 27.50(46) \\
			  & & 0.200 & 13.98(00)  & 24.24(48) & \bfseries 24.45(52) & 24.29(48) & 22.41(33) & 22.54(32) & 23.91(40) \\
		\cmidrule{2-10}
		&\parbox[t]{3mm}{\multirow{4}{*}{\rotatebox[origin=c]{90}{\gls{ssim}}}} &
		0.025 & 0.84(02) & 0.92(01) & 0.93(01) & 0.93(00) & 0.90(01) & 0.90(01) & \bfseries 0.94(00) \\
	  &&0.050 & 0.65(03) & 0.83(01) & 0.85(01) & 0.85(01) & 0.80(01) & 0.80(01) & \bfseries 0.87(01) \\
	  &&0.100 & 0.41(03) & 0.72(01) & 0.73(01) & 0.73(01) & 0.65(02) & 0.66(02) & \bfseries 0.75(01) \\
	  &&0.200 & 0.21(02) & 0.58(01) & \bfseries 0.60(01) & 0.59(01) & 0.48(02) & 0.49(02) & 0.55(01) \\
		\midrule
		\parbox[t]{3mm}{\multirow{8}{*}{\rotatebox[origin=c]{90}{Stochastic Denoising}}}&\parbox[t]{3mm}{\multirow{4}{*}{\rotatebox[origin=c]{90}{\gls{psnr}}}} &
				0.025 & 32.04(00) & 31.34(11) & 31.79(16) & 31.88(18) & 30.68(10) & 30.78(11) & \bfseries 32.40(27) \\
			  &&0.050 & 26.02(00) & 27.07(18) & 27.55(23) & 27.62(25) & 26.00(14) & 26.17(15) & \bfseries 28.46(40) \\
			  &&0.100 & 20.00(00) & 23.66(24) & 24.06(27) & 24.06(28) & 22.03(15) & 22.24(16) & \bfseries 24.94(44) \\
			  &&0.200 & 13.98(00) & 20.90(25) & \bfseries 21.24(29) & 21.12(28) & 18.60(13) & 18.71(13) & 21.10(34) \\
		\cmidrule{2-10}
		&\parbox[t]{3mm}{\multirow{4}{*}{\rotatebox[origin=c]{90}{\gls{ssim}}}} &
				0.025 & 0.84(02) & 0.84(02) & 0.85(01) & 0.86(01) & 0.82(02) & 0.82(02) & \bfseries 0.88(01) \\
			  &&0.050 & 0.65(03) & 0.69(02) & 0.71(02) & 0.72(02) & 0.65(02) & 0.65(02) & \bfseries 0.78(01) \\
			  &&0.100 & 0.41(03) & 0.52(02) & 0.54(02) & 0.54(02) & 0.46(03) & 0.47(03) & \bfseries 0.63(01) \\
			  &&0.200 & 0.21(02) & 0.36(02) & 0.37(02) & 0.37(02) & 0.29(02) & 0.30(02) & \bfseries 0.41(01) \\
		\midrule
			  & & &{\(\operatorname{dim} \theta \)} & {\num{5376}} & {\num{78400}} & {\num{3312}} & {\num{389}} & {\num{393}} & {\num{1642}} \\\bottomrule
	\end{tabular}
\end{table*}
\begin{figure*}
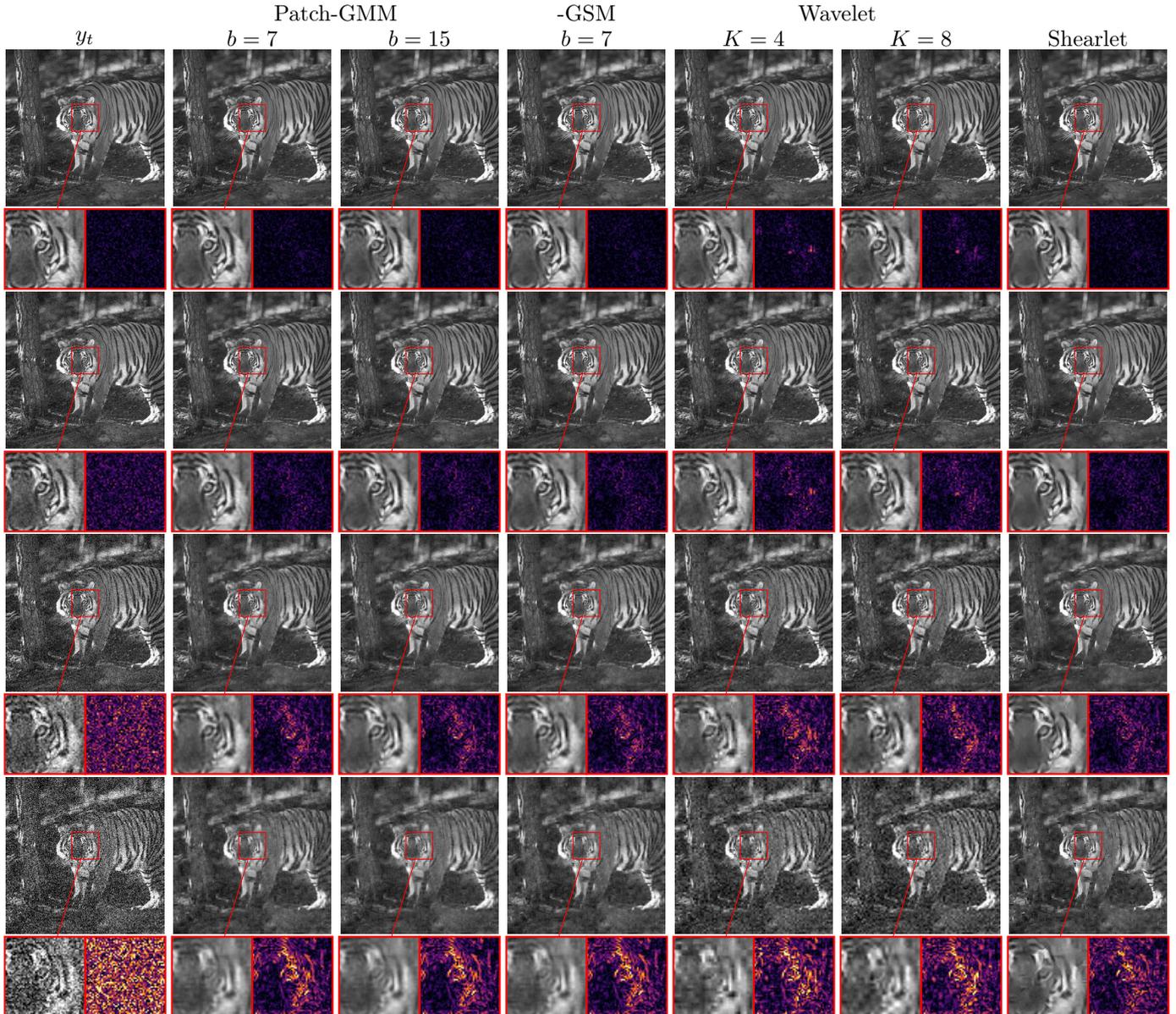

	\def\wwidth{2.45}
	\def\ppad{.15}
	\def\zzoomw{1.2}
	\def\spyxoff{.0}
	\def\spyyoff{-.15}
	\centering
	\begin{tikzpicture}
		\node at (2.5 * \wwidth + 2.5 * \ppad, -2) {Patch-\gls{gmm}};
		\node at (4 * \wwidth + 4 * \ppad, -2) {-\acrshort{gsm}};
		\node at (5.5 * \wwidth + 5.5 * \ppad, -2) {Wavelet};
		\foreach [count=\isigma] \ssigma in {0.025, 0.050, 0.100, 0.200} {
			\foreach [count=\imethod] \mmethod/\manno in {noisy/\(y_t\), gmm7/{\(b=7\)}, gmm15/{\( b = 15 \)}, gsm7/{\( b = 7 \)}, wavelet-db2/\( K = 4\), wavelet-db4/{\( K = 8 \)}, shearlet/{Shearlet}} {
				\ifthenelse{\isigma=1}{\node at (\imethod * \wwidth + \imethod * \ppad, -2.4) {\manno};}{}
				\coordinate (onn) at (\imethod * \wwidth + \imethod * \ppad + \spyxoff, -\isigma*\wwidth - \spyyoff - \isigma * \ppad - \isigma * \zzoomw);
				\begin{scope}[spy using outlines={rectangle, magnification=3, width=1.25cm, height=1.25cm}]
					\node at (\imethod * \wwidth + \imethod * \ppad, -\isigma * \wwidth - \isigma * \ppad - \isigma * \zzoomw) {\includegraphics[width=\wwidth cm]{./figures/denoising/tweedie/\ssigma/\mmethod/006_d.png}};
					\spy [red] on (onn) in node [left] at (\imethod * \wwidth + \imethod * \ppad + 1.25, -\isigma*\wwidth - \isigma * \ppad - \isigma * \zzoomw - 1.9);
				\end{scope}
				\begin{scope}[spy using outlines={rectangle, magnification=3, width=1.25cm, height=1.25cm, connect spies}]
					\node at (\imethod * \wwidth + \imethod * \ppad, -\isigma * \wwidth - \isigma * \ppad - \isigma * \zzoomw) {\includegraphics[width=\wwidth cm]{./figures/denoising/tweedie/\ssigma/\mmethod/006.png}};
					\spy [red] on (onn) in node [left] at (\imethod * \wwidth + \imethod * \ppad, -\isigma*\wwidth - \isigma * \ppad - \isigma * \zzoomw - 1.9);
				\end{scope}
			}
		}
	\end{tikzpicture}
	\caption{%
		Qualitative denoising results for one-step empirical Bayes denoising.
		In the rows, the noise standard deviation ranges in \( \sigma \in \{ 0.025, 0.05, 0.1, 0.2 \} \).
		The inlays show a zoomed region (magnifying factor \num{3}), and the absolute difference of the reconstruction to the ground truth image (\num{0}~\protect\drawcolorbar~\(\frac{1}{3}\)).
		The accompanying quantitative results are shown in~\cref{tab:denosing}.
	}%
	\label{fig:denoising images}
\end{figure*}
\begin{figure*}
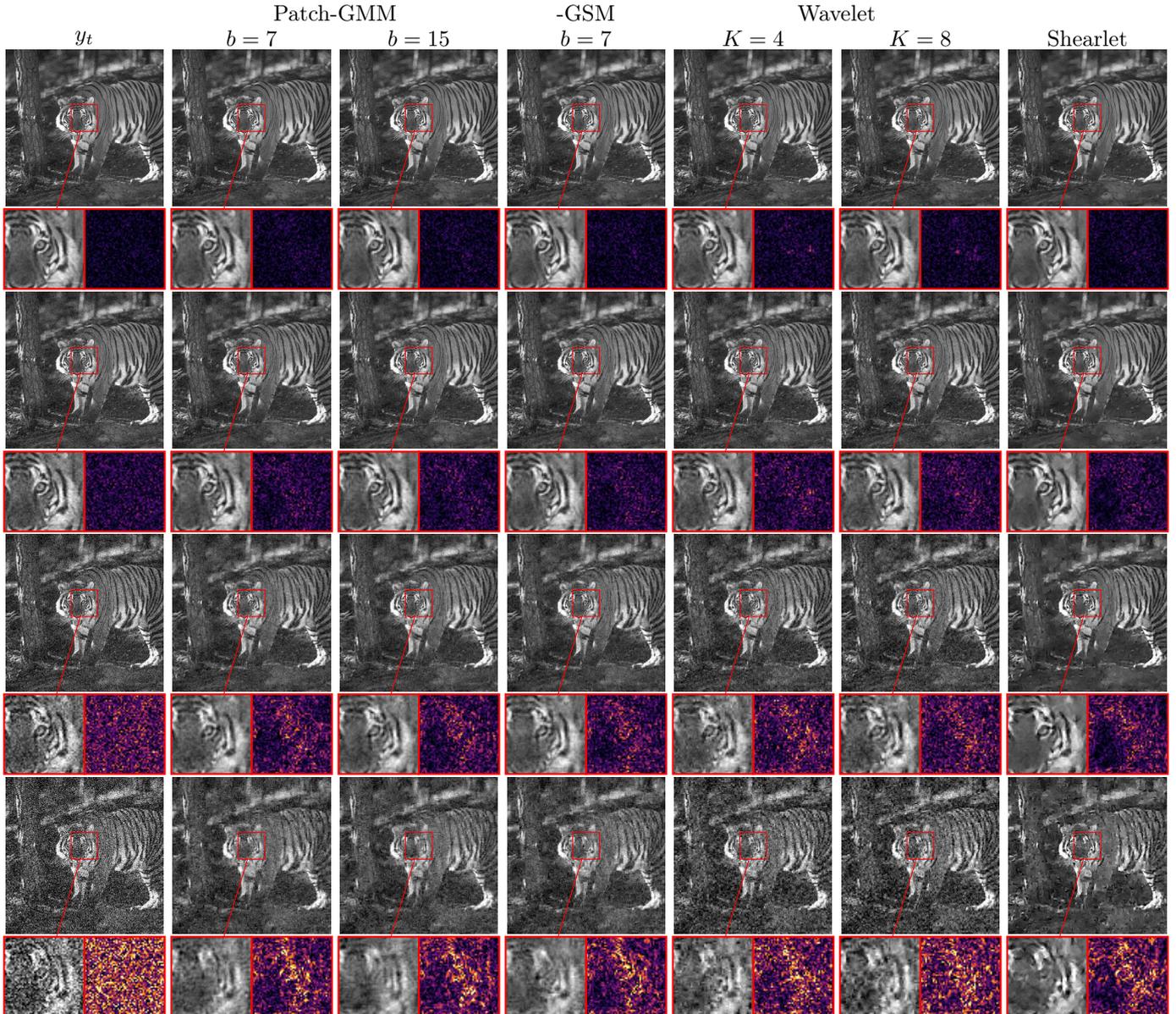

	\def\wwidth{2.45}
	\def\ppad{.15}
	\def\zzoomw{1.2}
	\def\spyxoff{.0}
	\def\spyyoff{-.15}
	\centering
	\begin{tikzpicture}
		\node at (2.5 * \wwidth + 2.5 * \ppad, -2) {Patch-\gls{gmm}};
		\node at (4 * \wwidth + 4 * \ppad, -2) {-\acrshort{gsm}};
		\node at (5.5 * \wwidth + 5.5 * \ppad, -2) {Wavelet};
		\foreach [count=\isigma] \ssigma in {0.025, 0.050, 0.100, 0.200} {
			\foreach [count=\imethod] \mmethod/\manno in {noisy/\(y_t\), gmm7/{\(b=7\)}, gmm15/{\( b = 15 \)}, gsm7/{\( b = 7 \)}, wavelet-db2/\( K = 4\), wavelet-db4/{\( K = 8 \)}, shearlet/{Shearlet}} {
				\ifthenelse{\isigma=1}{\node at (\imethod * \wwidth + \imethod * \ppad, -2.4) {\manno};}{}
				\coordinate (onn) at (\imethod * \wwidth + \imethod * \ppad + \spyxoff, -\isigma*\wwidth - \spyyoff - \isigma * \ppad - \isigma * \zzoomw);
				\begin{scope}[spy using outlines={rectangle, magnification=3, width=1.25cm, height=1.25cm}]
					\node at (\imethod * \wwidth + \imethod * \ppad, -\isigma * \wwidth - \isigma * \ppad - \isigma * \zzoomw) {\includegraphics[width=\wwidth cm]{./figures/denoising/stoch/\ssigma/\mmethod/006_d.png}};
					\spy [red] on (onn) in node [left] at (\imethod * \wwidth + \imethod * \ppad + 1.25, -\isigma*\wwidth - \isigma * \ppad - \isigma * \zzoomw - 1.9);
				\end{scope}
				\begin{scope}[spy using outlines={rectangle, magnification=3, width=1.25cm, height=1.25cm, connect spies}]
					\node at (\imethod * \wwidth + \imethod * \ppad, -\isigma * \wwidth - \isigma * \ppad - \isigma * \zzoomw) {\includegraphics[width=\wwidth cm]{./figures/denoising/stoch/\ssigma/\mmethod/006.png}};
					\spy [red] on (onn) in node [left] at (\imethod * \wwidth + \imethod * \ppad, -\isigma*\wwidth - \isigma * \ppad - \isigma * \zzoomw - 1.9);
				\end{scope}
			}
		}
	\end{tikzpicture}
	\caption{%
		Qualitative denoising results using the stochastic denoising algorithm from~\cite{kawar_stochastic_2021}.
		In the rows, the noise standard deviation ranges in \( \sigma \in \{ 0.025, 0.05, 0.1, 0.2 \} \).
		The inlays show a zoomed region (magnifying factor \num{3}), and the absolute difference of the reconstruction to the reference image (\num{0}~\protect\drawcolorbar~\(\frac{1}{3}\)).
		The accompanying quantitative results are shown in~\cref{tab:denosing}.
	}%
	\label{fig:denoising images stochastic}
\end{figure*}

\added{%
	The analysis of posterior variance is out of the scope of this paper.
	However, techniques for analyzing the posterior induced by diffusion models are also readily applicable to our models.
	In particular, we refer to~\cite{kawar_stochastic_2021} or related papers such as~\cite{chung2023diffusion} for an in-depth discussion of these techniques.
}
\subsection{Noise Estimation and Blind Image Denoising}
Within this and the following subsection, we describe two applications that arise as a byproduct of our principled approach: Noise estimation \added{(and, consequently, blind denoising)} and analytic sampling.
For both, we utilize the \replaced{model based on filter-responses}{patch-based model} as a stand-in but emphasize that similar results hold also for the \replaced{models based on wavelet- and shearlet-responses}{wavelet- and shearlet-based models}.

The construction of our model allows us to interpret \( f^{\mathrm{filt}}_\theta(\argm, t) \) as a time-conditional likelihood density.
Thus, it can naturally be used for noise \added{level} estimation:
\added{%
	We assume a noisy patch \( y \) constructed by \( y = x + \sigma \eta \), where \( x \sim f_X \), \( \eta \sim \mathcal{N}(0, \Id_{\R^n}) \) and \( \sigma \) is unknown.
	We can estimate the noise level \( \sigma \) by maximizing the likelihood of \( y \) w.r.t.\ to the diffusion time \( t \) --- \(\hat{t} = \argmax_t f^{\text{filt}}_\theta(y, t) \) --- and recover the noise level via \( \sigma = \sqrt{2\hat{t}} \).
}

\added{%
	To demonstrate the feasibility of this approach,
}
\cref{fig:noise estimation} shows the expected negative-log density%
\footnote{For visualization purposes, we normalized the negative-log density to have a minimum of zero over \( t \): \( l_\theta(x, t) = -\log f^{\mathrm{filt}}_\theta(x, t) - (\max_t \log f^{\mathrm{filt}}_\theta(x, t)) \).} %
\( \mathbb{E}_{p \sim f_X, \eta \sim \mathcal{N}(0, \mathrm{Id})} \added{\bigl[} l_\theta(p+\sigma\eta, t) \added{\bigr]} \) over a range of \( \sigma \) and \( t \).
The noise \added{level} estimate \( \sigma \mapsto \argmin_t \mathbb{E}_{p \sim f_X, \eta \sim \mathcal{N}(0, \mathrm{Id})} \added{\bigl[} l_\theta(p + \sigma\eta, t) \added{\bigr]} \) perfectly matches the identity map \( \sigma \mapsto \sqrt{2t} \).

\replaced{%
	In addition, we can leverage this noise level estimation procedure to perform blind heteroscedastic denoising with the same model as follows:
}{%
	We can utilize \emph{one} model for \emph{noise estimation} and \emph{heteroscedastic blind denoising} as follows:
}
First, for all \( \tilde{n} \) overlapping patches \( P_j y \) in the corrupted image, we estimate the noise level through \( \hat{t}_j = \argmax_t f^{\mathrm{filt}}_\theta(P_j y, t) \).
Given the noise \replaced{levels}{level map} \( \hat{t}_{\added{j}} \), we can estimate the clean image with an empirical Bayes step of the form
\begin{equation}
	\hat{x}_{\mathrm{blind}}(y) = y + 2 \sum_{j=1}^{\tilde{n}} \hat{t}_j  p_j^{-1} P_j^\top \grad{1} \log f^{\mathrm{filt}}_\theta(P_j y, \hat{t}_j),
	\label{eq:blind eb}
\end{equation}
where for each patch \( P_j y \) we utilize the estimated noise level \( \hat{t}_j \).

In~\cref{fig:blind denosing}, the original image is corrupted by heteroscedastic Gaussian noise with standard deviation \num{0.1} and \num{0.2} in a checkerboard pattern, which is clearly visible in the noise \added{level} map.
In the restored image and the absolute difference to the reference, the checkerboard pattern is hardly visible, indicating that the noise \added{level} estimation is robust also when confronted with little data.
\begin{figure*}[t]
	\centering
	\resizebox{1\textwidth}{!}{
	\begin{tikzpicture}
		\node at (0, 0) {\includegraphics[trim=.9cm 2cm 1cm 2cm, clip, width=.3\textwidth]{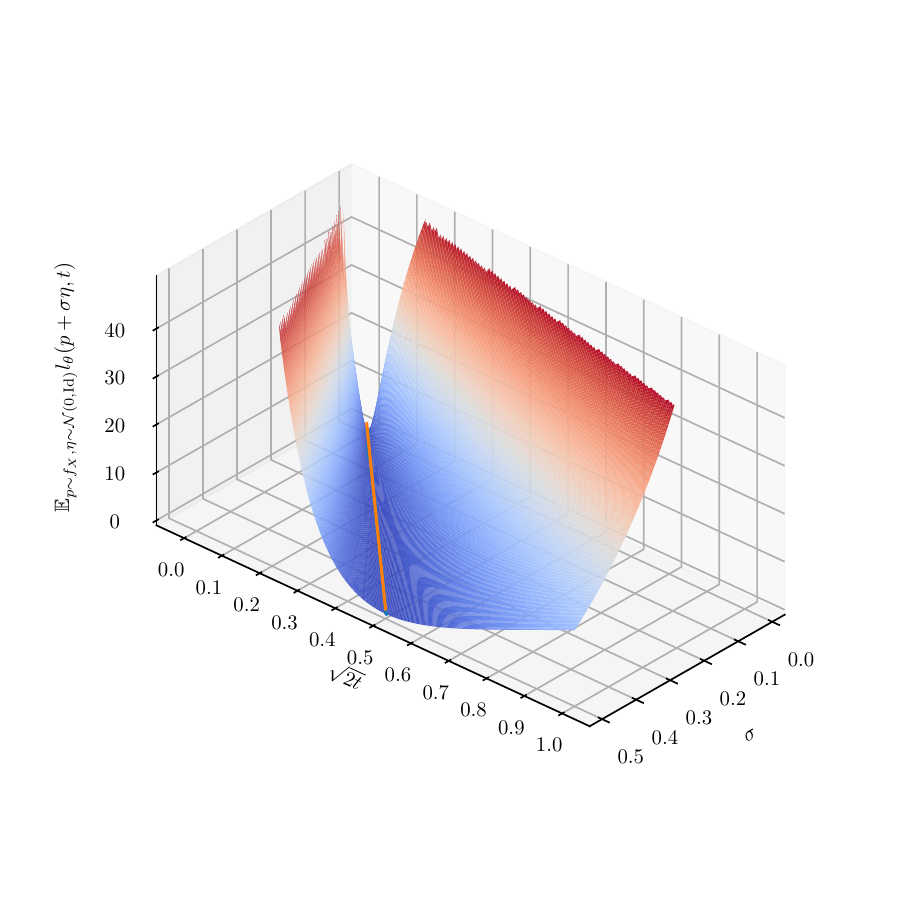}};
		\node at (6, -.3) {\includegraphics[trim=.5cm .1cm 1cm 1.2cm, clip, width=.3\textwidth]{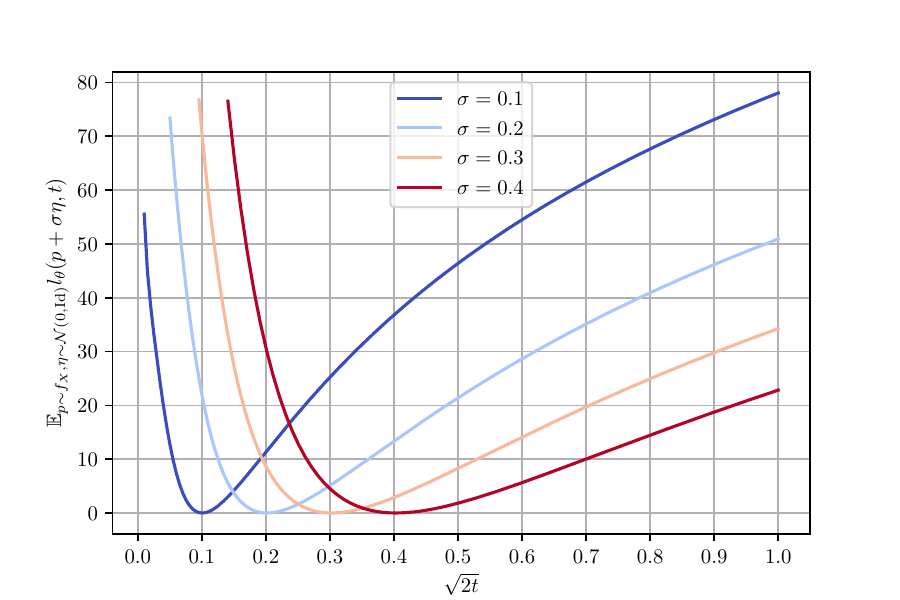}};
	\end{tikzpicture}}
	\caption{%
		Expected normalized negative-log density along with the noise estimate
		\protect\tikz[baseline=\hheight]\protect\draw[mplorange, thick] (0, 0.1) -- ++(0.4, 0); \( \sigma \mapsto \argmin_t \mathbb{E}_{p \sim f_X, \eta \sim \mathcal{N}(0, \mathrm{Id})}\added{\bigl[}l_\theta(p+\sigma\eta, t) \added{\bigr]} \), \protect\tikz[baseline=\hheight]\protect\draw [mplgreen, thick] (0, 0.1) -- ++(0.4, 0) node [right,black] {\( \sigma \mapsto \sqrt{2t} \)}; (left) and the slices at \( \sigma \in \{ 0.1, 0.2, 0.3, 0.4 \} \) (right).
	}%
	\label{fig:noise estimation}
\end{figure*}
\begin{figure*}[t]
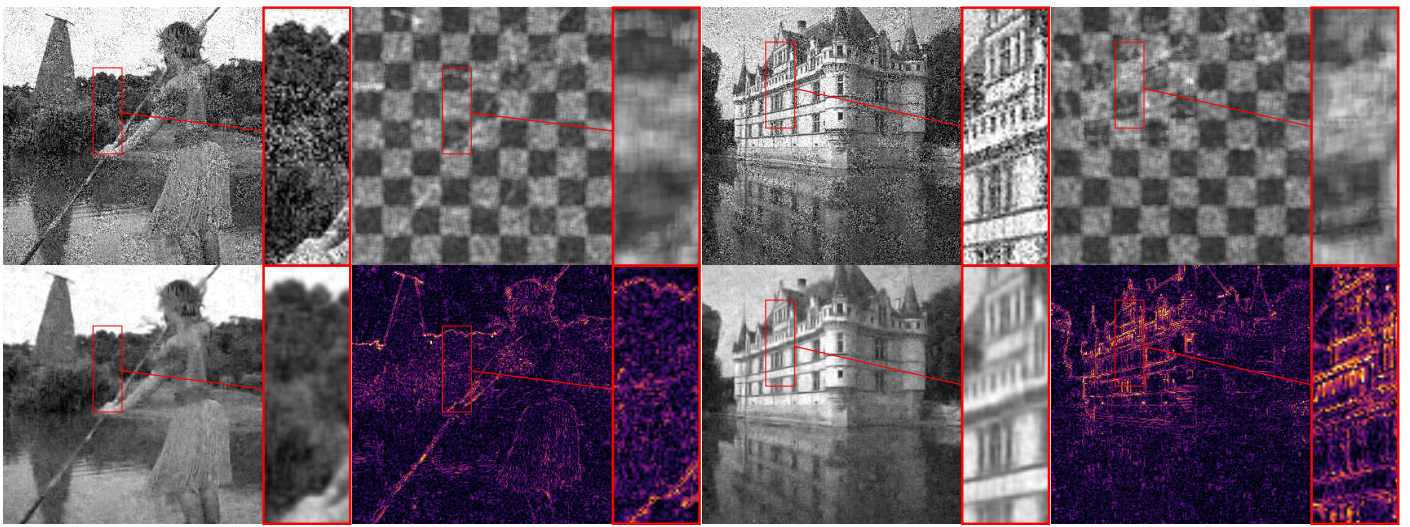

	\centering
	\resizebox{\textwidth}{!}{%
		\begin{tikzpicture}
			\def\wwidth{3cm}
			\foreach [count=\imcount] \cherry in {1, 2} {
				\foreach [count=\ii] \wwhich in {noisy, estimate} {
					\begin{scope}[spy using outlines={rectangle, magnification=3, width=1.cm, height=3cm, connect spies}]
						\node[rotate=-90] at (\imcount * 2 * \wwidth + \imcount * 2 * 1.03cm + \ii*\wwidth+\ii*1.03cm, 0) {\includegraphics[width=\wwidth]{./figures/blind/\cherry/\wwhich.png}};
						\spy [red] on (-\imcount * .3cm + \imcount * 2 * \wwidth + \imcount * 2 * 1.03cm + \ii*\wwidth+\ii*1.03cm, \imcount*.3cm) in node [left] at (2.5cm+\imcount * 2 * \wwidth + \imcount * 2 * 1.03cm + \ii*\wwidth+\ii*1.03cm, 0);
					\end{scope}
				}
			}
			\foreach [count=\imcount] \cherry in {1, 2} {
				\foreach [count=\ii] \wwhich in {denoised, diff} {
					\begin{scope}[spy using outlines={rectangle, magnification=3, width=1.cm, height=3cm, connect spies}]
						\node[rotate=-90] at (\imcount * 2 * \wwidth + \imcount * 2 * 1.03cm + \ii*\wwidth+\ii*1.03cm, -3cm) {\includegraphics[width=\wwidth]{./figures/blind/\cherry/\wwhich.png}};
						\spy [red] on (-\imcount * .3cm + \imcount * 2 * \wwidth + \imcount * 2 * 1.03cm + \ii*\wwidth+\ii*1.03cm, -3cm+\imcount*.3cm) in node [left] at (2.5cm+\imcount * 2 * \wwidth + \imcount * 2 * 1.03cm + \ii*\wwidth+\ii*1.03cm, -3cm);
					\end{scope}
				}
			}
		\end{tikzpicture}%
	}
	\caption{%
		Noise estimation and blind denoising.
		Top left: Image \( y \) corrupted with heteroscedastic Gaussian noise in a checkerboard pattern with standard deviation \num{0.1} and \num{0.2}.
		Top right: \replaced{Noise level map}{Patch-wise estimate \( \hat{t} \) of the noise level} (\num{0}~\protect\drawcolorbarbw~\num{0.5}).
		Bottom left: One-step empirical Bayes denoising result using~\eqref{eq:blind eb}.
		Bottom right: Absolute difference to the reference image (\num{0}~\protect\drawcolorbar~\(\frac{1}{3}\)).
	}%
	\label{fig:blind denosing}
\end{figure*}
\subsection{Sampling}%
\label{ssec:sampling}
A direct consequence of~\cref{cor:marginal} is that our models admit a simple sampling procedure:
The statistical independence of the components allows drawing random patches by 
\begin{equation}
	Y_t = \sum_{j=1}^J \frac{k_j}{\norm{k_j}^2} U_{j, t},
	\label{eq:analytic sampling}
\end{equation}
where \( U_{j, t} \) is a random variable on \( \R \) sampled from the one-dimensional \gls{gmm} \( \psi_j(\argm, w_j, t) \).
The samples in~\cref{fig:patch generation results} indicate a good match over a wide range of \( t \).
However, for small \( t \) the generated patches appear slightly noisy, which is due to an over-smooth approximation of the sharply peaked marginals around \( 0 \).
This indicates that the (easily adapted) discretization of \( \mu_l \) equidistant over the real line is not optimal.
We discuss alternative parametrizations in~\cref{sec:discussion}.
\begin{figure*}
	\centering
	\def\wwidth{3.6cm}
	\begin{tikzpicture}
		\foreach [count=\isigma] \ssigma/\llabel/\ccolor in {0.000/0/coolwarm1, 0.025/0.025/coolwarm2, 0.050/0.05/coolwarm3, 0.100/0.1/coolwarm4, 0.200/0.2/coolwarm5}{
			\foreach [count=\iwhich] \which in {true, analytical}{
				\node at (\isigma*\wwidth+\isigma*1, -\iwhich*\wwidth/2-\iwhich*5) {\includegraphics[width=\wwidth]{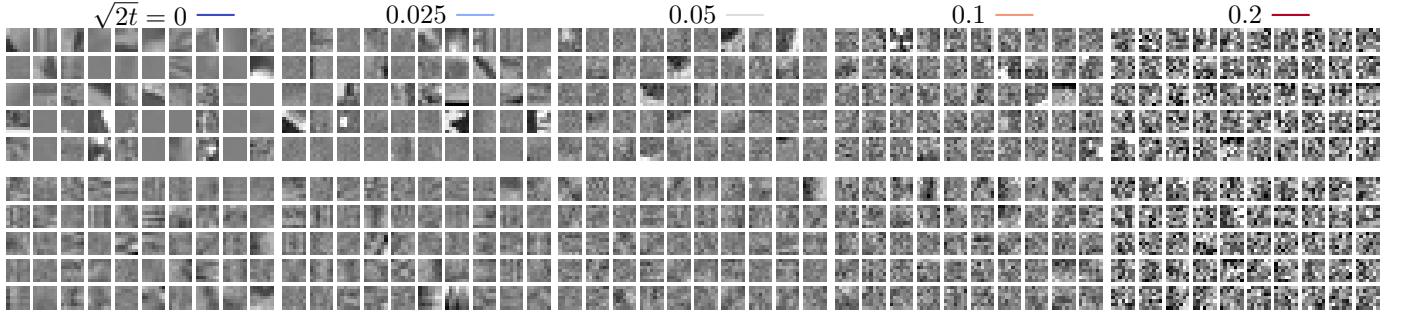}};
			}
			\node (anno\isigma) at (\isigma*\wwidth+\isigma, -.92){\ifthenelse{\isigma=1}{\( \sqrt{2t} = \llabel \)}{\( \llabel \)}};
			\draw [\ccolor, thick] (anno\isigma.east) -- ++(0.5, 0);
		}
	\end{tikzpicture}
	\caption{%
		Ground-truth samples from the random variable \( Y_t \) (top) and samples generated by the analytic sampling procedure~\eqref{eq:analytic sampling} (bottom).
	}%
	\label{fig:patch generation results}
\end{figure*}
\section{Discussion}%
\label{sec:discussion}
\subsection{Alternative Parametrizations}%
\label{ssec:alternative parametrizations}
The potential functions of the \replaced{models based on filter- and wavelet-responses}{patch- and wavelet-based models,} shown in~\cref{fig:patch results} and~\cref{fig:wavelet pot}\deleted{,} exhibit leptokurtic behavior, which has been noticed quite early in the literature~\cite{hua_statistics_1999,hinton_discovering_2001,teh_energy_2003,RoBl09,weiss_model_2007,Hyvrinen2009}.
To model these leptokurtic \replaced{potential functions}{potentials}, our parametrization relies on one-dimensional \glspl{gmm} with a-priori chosen equidistant means on the real line.
The \gls{gmm} is a very natural choice in our framework, as the Gaussian family is the only function family closed under diffusion (i.e.\ convolution with a Gaussian, cf.\ the central limit theorem).
However, as a consequence, the discretization of the means over the real line has to be fine enough to allow proper modeling of the leptokurtic marginals.
Thus, the majority of the learnable parameters are actually the weights of the one dimensional Gaussian mixtures.
This motivates the consideration of other \replaced{expert}{potential} functions \( \psi \).

An extremely popular choice for modeling \deleted{the potentials of} the distribution of filter\replaced{-}{s}responses on natural images is the Student-t \replaced{expert}{function}~\cite{hinton_discovering_2001,RoBl09}
\begin{equation}
	x \mapsto \biggl( 1 + \frac{x^2}{2} \biggr)^{-\alpha}.
\end{equation}
As outlined above, the convolution of this function with a Gaussian can not be expressed in closed form.
However, there exist approximations, such as the ones shown in~\cite{forchini_distribution_2008} or~\cite[Theorem 1]{berg2009density}, which we recall here for completeness:
Let \( X \) be a random variable on \( \R \) with density
\begin{equation}
	f_X(x) = \frac{\Gamma(\frac{\nu + 1}{2})}{\sqrt{\nu\pi}\Gamma(\frac{\nu}{2})} \bigl( 1 +  \frac{x^2}{\nu} \bigr)^{-\frac{\nu + 1}{2}},
\end{equation}
where \( \Gamma(z) = \int_0^\infty t^{z-1}\exp(-t)\,\mathrm{d}t \) is the Gamma function, and let \( Y_t \) be a random variable defined as previously.
Then, \( f_{Y_t} = \lim_{N\to \infty} f_{Y_t}^{(N)} \) where
\begin{equation}
	\begin{aligned}
		f_{Y_t}^{(N)}(y) &= \frac{\exp \bigl( -\frac{y^2}{4t} \bigr)\Gamma\bigl( \frac{\nu + 1}{2} \bigr)}{\sqrt{4t\pi}\Gamma\bigl(\frac{\nu}{2}\bigr) \bigl( \frac{4t}{\nu} \bigr)^{\frac{\nu}{2}}} \times \\
		\sum_{n=0}^{N}\biggl( &\frac{1}{n!} \Bigl( \frac{y^2}{4t} \Bigr)^n \Psi\Bigl( \frac{\nu + 1}{2}, \frac{\nu}{2} + 1 - n, \frac{\nu}{4t} \Bigr) \biggr)
	\end{aligned}
	\label{eq:forchini}
\end{equation}
with the confluent hypergeometric function of the second kind (also known as Tricomi's function, or the hypergeometric \( U \) function)~\cite{Tricomi1947} \( \Psi \).

We show \( - \log f_{Y_t}^{(N)} \) for different \( N \) and \( t > 0 \) in~\cref{fig:student t approx}, along with \( - \log f_{Y_t} \) which we computed numerically.
Notice that~\eqref{eq:forchini} is composed of two terms:
A Gaussian with variance \( 2t \) and an infinite polynomial in the even powers, filling up the tails of the distribution.
Thus, it is not surprising that the approximation fails to model the tails of the distribution when \( t \) is small, and becomes better as \( t \) increases and the distribution approaches a Gaussian.
\begin{figure*}
	\centering
	\begin{tikzpicture}
		\foreach [count=\ittt] \ttt in {0.1, 1.0, 3.0}
		{
			\node at (6 * \ittt, 0) {\includegraphics[width=.3\textwidth]{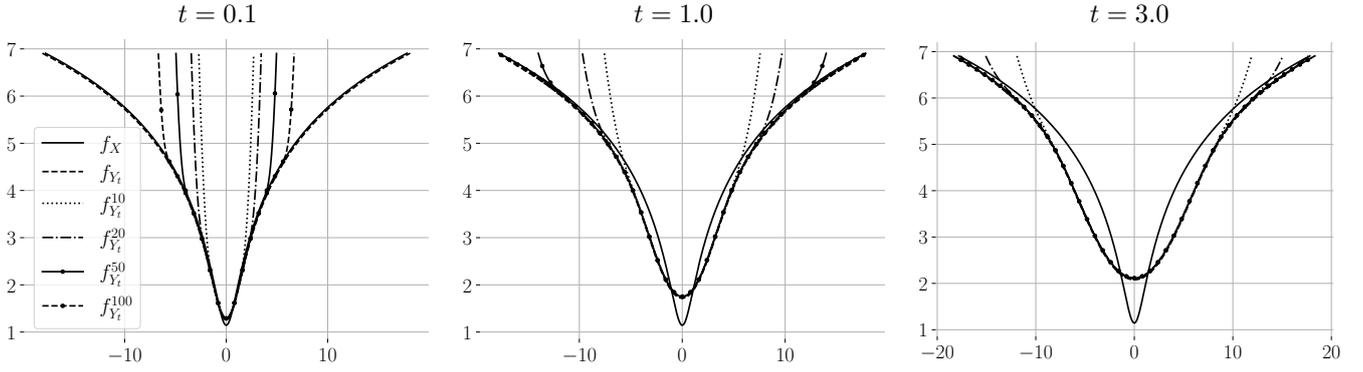}};
			\node at (6 * \ittt, 2.5) {\( t = \ttt \)};
		}
	\end{tikzpicture}
	\caption{%
		Forchini's~\cite{forchini_distribution_2008} approximation \( -\log f_{Y_t}^{(N)} \) (see~\eqref{eq:forchini}) of the density of the sum of a t- and a normally distributed random variable with standard deviation \( \sqrt{2t} \).
	}%
	\label{fig:student t approx}
\end{figure*}

Another popular expert function is the \gls{gsm}
\begin{equation}
	x \mapsto \int_{-\infty}^{\infty} (2\pi z^2\sigma^2)^{-\frac{1}{2}} \exp\biggl( -\frac{x^2}{2z^2\sigma^2} \biggr) f_Z(z)\, \mathrm{d}z
\end{equation}
which has been used in the context of modeling both the distributions of filter\added{-}\deleted{responses}~\cite{schmidt_generative_2010,qi_gao_generative_2012} as well as wavelet\replaced{-responses}{coefficients}~\cite{wainwright_scale_1999,portilla_image_2003}.
Here, \( f_Z \) is the \emph{mixing density} of the \emph{multiplier} \( Z \).
Thus, \glspl{gsm} can represent densities of random variables that follow
\begin{equation}
	X = ZU
\end{equation}
where \( Z \) is a scalar random variable and \( U \) is a zero mean Gaussian (see~\cite{andrews_scale_1974} for conditions under which a random variable can be represented with a \gls{gsm}).
In practice, for our purposes we model the mixing density as a Dirac mixture \( f_Z = \sum_{i=1}^I w_i \delta_{z_i} \) with \( (w_1,\dotsc,w_I)^\top \in \triangle^I \) and \( z_i \) a-priori fixed.
Then, the \gls{gsm} expert reads
\begin{equation}
	\psi_j^{\text{\gls{gsm}}}(x, w_j, t) = \sum_{i=1}^I w_{ji} (2\pi z_i^2(t))^{-\frac{1}{2}} \exp\biggl( -\frac{x^2}{2z_i^2(t)} \biggr),
\end{equation}
where without loss of generality we set \( \sigma = 1 \).

To show the practical merit of this parametrization in our context, we train a patch-model using \( b = 7 \) with the following choice of \( z_i \):
As the \gls{gmm} experiments indicated that the discretization of the means was a bit too coarse, we chose \( z_i = 0.01 \times 1.4^{i - 1} \), such that \( z_1 = 0.01 < 0.016 = \sigma_0 \).
The idea outlined in~\cref{ssec:patch model} naturally extends to such models:
Diffusion (for the \( j \)-th feature channel) amounts to rescaling \( z_i^2 \mapsto z_i^2 + 2t\norm{k_j}^2 \).

We show learned filters and \added{their} corresponding \added{potential functions} and \replaced{activation functions}{activations} \replaced{when utilizing}{for} a \gls{gsm} in~\cref{fig:gsm model}, where we used \( I = \num{20} \) scales.
The number of learnable parameters is \( (a - 1)(a + I) \), which is \num{3312} when modeling \( a = 7 \times 7 \) patches with our choice of \( I \).
This is considerably less than the \num{5376} parameters for the \gls{gmm}, which, as discussed in~\cref{ssec:sampling}, seems to still be discretized too coarsely.
This might indicate that a \gls{gsm} parametrization is more fit for this purpose.
Indeed, the quantitative analysis presented in~\cref{tab:denosing} shows superiority of the patch-based \gls{gsm} model over the patch-based \gls{gmm}.
However, note that the \gls{gmm} parametrization is strictly more versatile as it does not assume a maximum at \num{0}.
For instance, \glspl{gsm} can not model the potential functions of the \replaced{model based on shearlet-responses}{shearlet-based model} (\cref{fig:shearlet potentials}).
\begin{figure*}
	\centering
	\includegraphics[trim=4.5cm .7cm 4cm .6cm, clip, width=\textwidth]{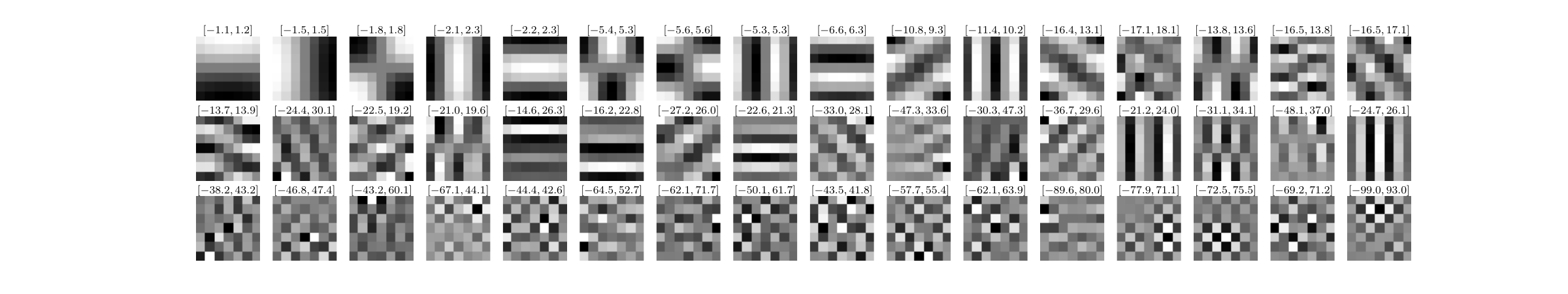}
	\includegraphics[trim=4.5cm .3cm 4cm .7cm, clip, width=\textwidth]{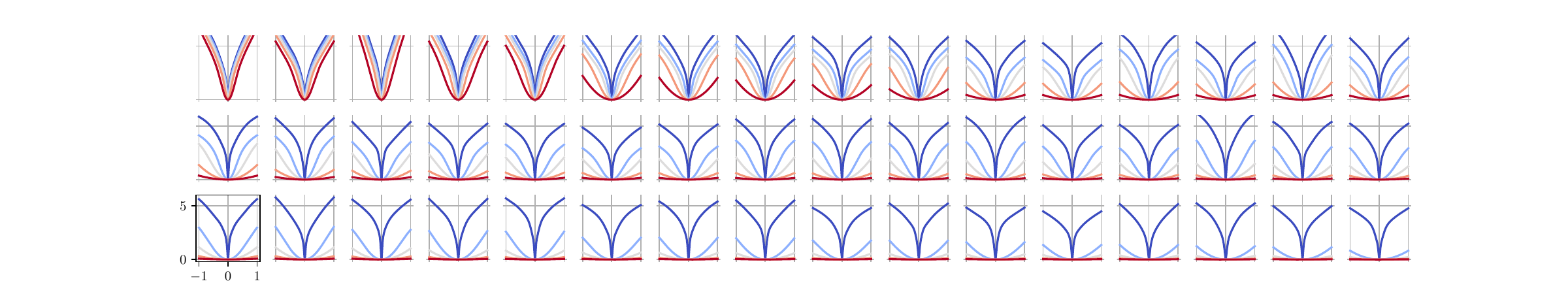}
	\vspace{-3mm}
	\includegraphics[trim=4.5cm .3cm 4cm .7cm, clip, width=\textwidth]{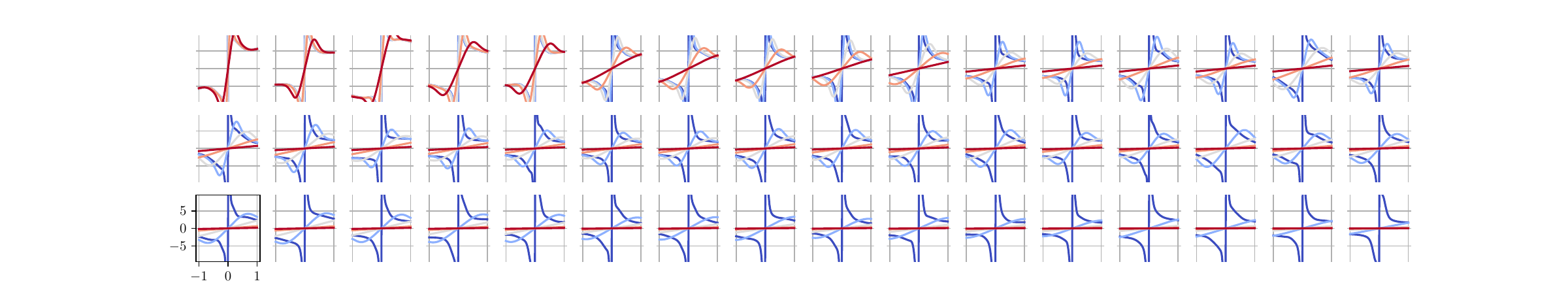}
	\hspace*{8.2cm}{\resizebox{6cm}{!}{\(\sqrt{2t}=\)
		\foreach \ccolor\ssigma in {coolwarm1/0, coolwarm2/0.025, coolwarm3/0.05, coolwarm4/0.1, coolwarm5/0.2}
		{
			\tikz[baseline=\hhheight*0.8]{\draw[\ccolor, ultra thick](0,0.1) -- (.5,.1);}\num{\ssigma}
		}
	}}
	\caption{%
		Learned filters \( k_j \) (top, the intervals show the values of black and white respectively, amplified by a factor of \num{10}), potential functions \( -\log \psi_j^{\mathrm{GSM}}(\argm, w_j, t) \) and activation functions \( -\nabla \log \psi_j^{\mathrm{GSM}}(\argm, w_j, t) \).
	}%
	\label{fig:gsm model}
\end{figure*}
\subsection{Designing more expressive models}
All architectures discussed until now are shallow in the sense that they model the distribution of filter\added{-}responses (either directly, or through wavelets or shearlets).
A possible extension of our work would be to consider deep networks, i.e., networks with more than one layer.
Indeed, many popular image restoration frameworks, such as trainable non-linear reaction diffusion~\cite{chen_trainable_2017} or the cascade of shrinkage fields~\cite{schmidt_shrinkage_2014} employ trainable Gaussian mixture \replaced{potential functions}{activations} (often referred to more \replaced{generally}{genreally} as radial basis splines).
However, they are typically trained as point estimators in a classic discriminative (task-specific) framework, and have not been studied in the context of diffusion priors.
We quickly note that the diffusion in the trainable non-linear reaction diffusion is a diffusion in \emph{image space}, whereas our framework considers diffusion in \emph{probability space}.
Extending the idea of diffusion in probability space to deep networks is non-trivial.
We believe that such models can only be tackled by approximating the diffusion \replaced{\gls{pde}}{process}.
\subsubsection{Wavelets: Modeling neighborhoods}
In essence, \replaced{the model based on wavelet-responses}{our wavelet model} described in~\cref{ssec:wavelet model} models the \replaced{histogram}{distribution} of wavelet coefficients in different sub-bands.
However, it does not take the spatial neighborhood (neither in its own sub-band nor of siblings or parents) into account.
There have been many attempts at making these types of models more powerful:
Guerrero-Colon et al.~\cite{guerrero-colon_image_2008} introduce mixtures of \glspl{gsm} to model the spatial distribution of wavelet coefficients in and across sub-bands.
The authors of~\cite{gupta_generalized_2018} extend this idea to mixtures of generalized Gaussian scale model mixtures.
We believe that these extensions can be used also in our work.
In particular, modeling disjoint neighborhoods leads to a block diagonal structure in the \replaced{product}{global} \gls{gmm}, which can be efficiently inverted.
However, modeling disjoint neighborhoods is known to introduce artifacts~\cite{portilla_image_2003}.
Still, such models can be globalized, e.g.\ by utilizing ideas similar to the expected patch log-likelihood~\cite{zoran_learning_2011}, which amounts to applying a local model to overlapping local neighborhoods individually and averaging the results.

Another interesting research direction with applications to generative modeling would be to condition the distribution of the wavelet coefficients on their parent sub-bands.
Notice that when utilizing conditioning and modeling local neighborhoods, we essentially recover the wavelet score-based generative model of Guth et al.~\cite{guth2022wavelet}.
Their model uses the score network architecture proposed in~\cite{nichol2021improved}, but we believe that modelling local neighborhoods could yield results that are close to theirs.
\subsection{Patch versus Convolutional Model}%
\label{ssec:conv v patch}
\added{%
	One of the major open questions in this work is the relationship between the models based on filter-responses and shearlet-responses.
	We again want to emphasize that they are distinctly different:
	The former \enquote{only} models the distribution of filter\added{-}responses, essentially forming a histogram.
	In particular, the distribution of filter\added{-}responses of natural images on arbitrary filters will \emph{always} exhibit leptokurtic behavior~\cite{hua_statistics_1999,weiss_model_2007}, with sharp peaks at \( 0 \) (see our learned potential functions of the model based on filter-responses in~\cref{fig:patch results}).
	The experts in the model based on shearlet-responses do \emph{not} model the marginal distribution of filter\added{-}responses, but takes into account the non-trivial correlation of overlapping patches
	This leads to significantly more complex expert functions with multiple minima, sometimes different from zero (see our learned potential functions of the model based on shearlet-responses in~\cref{fig:shearlet potentials}).
	Although quite well known in the literature~\cite{zhu_filters_1998,zoran_learning_2011,chen_trainable_2017,romano_boosting_2017}, this distinction is sometimes overlooked (e.g.\ when~\cite{RoBl09} chose the restrictive Student-t potential functions in their convolutional fields-of-experts model).
	To the best of our knowledge, this paper is the first in proposing strategies to learn patch-based and convolutional priors in a unified framework.%
}

\added{%
	The assumption of non-overlapping spectra of the filters in the convolutional model~\eqref{eq:disjoint} is in analogy to the assumption of pair-wise orthogonality of the filters in the patch model~\eqref{eq:ortho}:
	From~\eqref{eq:disjoint} immediately follows that \( \langle \mathcal{F}k_j, \mathcal{F}k_i \rangle_{\mathbb{C}^n} = 0 \) when \( i \neq j \).
	Thus, in some sense, the convolutional model becomes a patch-based model in Fourier space.
	However, the relationship remains unclear and deserves being investigated further.
}

\added{%
	The second assumption --- that the spectra are constant over their support~\eqref{eq:constant} --- restricts the space of admissible filters quite heavily.
	Unfortunately, we did not find a way to relax this constraint and we believe that it can not be relaxed without losing exact diffusion.
	However, we think that the constraint can be relaxed such that the diffusion \gls{pde} is fulfilled within some error bounds.
}

\section{Conclusion}%
\label{sec:conclusion}
In this paper, we introduced \glspl{pogmdm} as products of Gaussian mixture experts that allow for an explicit solution of the diffusion \gls{pde} of the associated density.
For \replaced{models acting on filter-, wavelet-, and shearlet-responses}{three different model classes --- patches-based models, wavelet- and convolutional models ---} we derive conditions for the associated filters and potential functions such that the diffusion \gls{pde} is exactly fulfilled.
Our explicit formulation enables learning of image priors simultaneously for all diffusion times using denoising score matching.
Numerical results demonstrated that \glspl{pogmdm} capture the statistics of the underlying distribution well for any diffusion time.
As a byproduct, our models \replaced{can naturally be used for noise estimation and}{are suitable for} blind heteroscedastic \deleted{image} denoising.

Future work will include the design of \deleted{design of} multi-layer architectures for which the diffusion can be expressed analytically, or approximated within some error bounds.
In addition, the learned models could be evaluated on more involved inverse problems such a deblurring or even medical imaging.
Further, the extensive evaluation of the \replaced{model based on filter-responses}{patch-based model} in terms of sampling the \replaced{distribution}{distribtuion} and performing heteroscedastic blind denoising can also be applied to the \replaced{models based on wavelet- and shearlet-responses}{wavelet- and shearlet-based models}.
\added{%
	Finally, the connection between the models based on filter- and shearlet-responses
	deserves being investigated further.
}
\bibliography{bibliography}
\end{document}